\theoremstyle{plain}
\newtheorem{theorem}{Theorem}[section]
\newtheorem{lemma}[theorem]{Lemma}
\newtheorem{Proposition}[theorem]{Proposition}
\theoremstyle{definition}
\newtheorem{definition}[theorem]{Definition}
\theoremstyle{remark}
\newcommand{\R}{{\mathbb R}}
\numberwithin{equation}{section}
\newlength{\mahaut}%
\newlength{\malarg}%
\newcommand{\intevariable}[3]{
	\settowidth{\malarg}{$\displaystyle \int $}
	\settoheight{\mahaut}{$\displaystyle #3 $}%
	\setlength{\mahaut}{1.2\mahaut}
	{\resizebox{\malarg}{\mahaut}{$\displaystyle\int$}}_{\kern-0.5\malarg#1}^{#2}%
	#3%
}
 \definecolor{violet}{rgb}{0.6, 0.4, 0.8}
 \definecolor{violetb}{rgb}{0.7, 0.2, 0.5}
\def\@setcopyright{}
\def\serieslogo@{}
\begin{document}
	
	\author{Véronique Maume-Deschamps}
	\address{Université de Lyon, Université Lyon 1, France, Institut Camille Jordan UMR 5208}
	\email{veronique.maume@univ-lyon1.fr}

	\author{Didier Rulli\`ere}
	\address{Université de Lyon, Université Lyon 1, France, Laboratoire SAF EA 2429}
	\email{didier.rulliere@univ-lyon1.fr}

	 \author{Khalil Said}
	 \address{\'Ecole d'Actuariat, Université Laval, Québec, Canada}
	 \email{khalil.said.1@ulaval.ca}
	

	 \title[Extremes for multivariate expectiles]{Extremes for multivariate expectiles}

	 \keywords{Risk measures, multivariate expectiles, regular variations, extreme values, tail dependence functions.}
	 \subjclass[2010]{62H00,
	 	62P05,
	 	91B30
	 }
	
  \begin{abstract}
  	In \cite{maumedeschamps3}, a new family of vector-valued risk measures called \textit{multivariate expectiles} is introduced. In this paper, we focus on the asymptotic behavior of these measures in a multivariate regular variations context. For models with equivalent tails, we propose an estimator of extreme multivariate expectiles, in the Fréchet attraction domain case, with asymptotic independence, or for comonotonic margins.   
  \end{abstract}										
	\subjclass[2010]{62H00,
		62P05,
		91B30
	}
	\date{\today}
	\maketitle
	
\section*{Introduction}
In few years, expectiles became an important risk measure among more used ones, essentially, because it satisfies both coherence and elicitability properties.
In dimension one, expectiles were introduced by Newey and Powell (1987)~\cite{Expect1987}. For a random variable $X$ with finite order $2$ moment, the expectile of level $\alpha$ is defined as 
\begin{equation*}\label{ExpDef}
e_\alpha(X)=\arg\min_{x\in \mathbb{R}}\mathbb{E}[\alpha(X-x)_+^{2}+(1-\alpha)(x-X)_+^{2}]\/, 
\end{equation*}
where  $(x)_+=\max(x,0)$. Expectiles are the only risk measure satisfying both elicitability and coherence properties, according to Bellini and Bignozzi (2015) \cite{bellini2015elicitable}. \\ 

In higher dimension, one of the proposed extensions of expectiles in \cite{maumedeschamps3} are \textit{Matrix Expectiles}. Consider a random vector $\mathbf{X}=(X_1,\ldots,X_d)^T\in\mathbb{R}^d$ having order $2$ moments, and let $\Sigma=(\pi_{ij})_{1\leq i,j\leq d}$ be a $d\times d$ real matrix, symmetric and positive semi-definite such that $i\in\{1,\ldots,d\}, ~\pi_{ii}=\pi_{i}>0$.\\
A $\Sigma$-expectile of $\mathbf{X}$, is defined as
\begin{equation*}
\mathbf{e}^\Sigma_\alpha(\mathbf{X})\in\underset{\mathbf{x}\in\mathbb{R}^d}{\arg\min}~\mathbb{E}[\alpha(\mathbf{X}-\mathbf{x})^T_+\Sigma(\mathbf{X}-\mathbf{x})_++(1-\alpha)(\mathbf{X}-\mathbf{x})^T_-\Sigma(\mathbf{X}-\mathbf{x})_-]\/,	\end{equation*}
where $(\mathbf{x})_+=((x_1)_+,\ldots,(x_d)_+)^T$ and $(\mathbf{x})_-=(\mathbf{-x})_+$. 
We shall concentrate on the case where the above minimization has a unique solution. In \cite{maumedeschamps3},  conditions on $\Sigma$ ensuring the uniqueness of the argmin are given, it is sufficient that $ \pi_{ij}\geq0,~\forall i,j\in\{1,\ldots,d\}$. We shall make this assumption throughout this paper. Then, the vector expectile is unique, and it is solution of the following equations system 
\begin{equation}
\alpha\sum_{i=1}^{d}\pi_{ki}\mathbb{E}[(X_i-x_i)_+ 1\!\!1_{\{X_k>x_k\}}]
=(1-\alpha)\sum_{i=1}^{d}\pi_{ki}\mathbb{E}[(x_i-X_i)_+ 1\!\!1_{\{x_k>X_k\}}],~~\forall k\in\{1,\ldots,d\}\/.
\label{eq1}
\end{equation}
 In case $\pi_{ij}=1$ for all $(i,j)\in\{1,\ldots,d\}^2$, the corresponding $\Sigma$-expectile is called a \textit{$L_1$-expectile}. It coincides with the $L_1$-norm expectile defined in \cite{maumedeschamps3}.\\
In \cite{maumedeschamps3} it is proved that, 

$$\underset{\alpha\longrightarrow1}{\lim}\mathbf{e}^\Sigma_\alpha(\mathbf{X})=\mathbf{X_F},~~\mbox{and}~~\underset{\alpha\longrightarrow0}{\lim}\mathbf{e}^\Sigma_\alpha(\mathbf{X})=\mathbf{X_I}
\/,$$
where $\mathbf{X_F}\in (\R\cup \{+\infty\})^d$ is the right endpoint vector $(x^1_F,\ldots,x^d_F)^T$, and by $\mathbf{X_I}\in (\R\cup \{-\infty\})^d$ is the left endpoint vector $(x^1_I,\ldots,x^d_I)^T$ of the support of the random vector $\mathbf{X}$.\\

The multivariate expectiles can be estimated in the general case using stochastic optimization algorithms. The example of estimation by the Robbins-Monro's (1951) \cite{robbins1951stochastic} algorithm, presented in \cite{maumedeschamps3}, shows that for extreme levels, the obtained estimation is not satisfactory in term of convergence speed. This leads us to the theoretical analysis of the asymptotic behavior of multivariate expectiles. Asymptotic levels i.e. $\alpha \rightarrow 1$ or $\alpha \rightarrow 0$ represent  extreme risks. Since the solvency thresholds in insurance are generally high (e.g. $\alpha=0.995$ for Solvency II directive), the study of asymptotic behavior of risk measures is of natural importance. The goal of this work is to establish the asymptotic behaviour of multivariate expectiles. The study of the extreme behaviour of risk measures in a multivariate regular variation framework is the subject of a balk of works, let us mention as examples, Embrechts et al. (2009) \cite{RV1}, Albrecher et al. (2006) \cite{RV5} in risk aggregation contexts, and Asimit et al. (2011) \cite{RV4} for risk capital allocation. Similar works are also done on other multivariate risk measures, as example, for the Multivariate Conditional-Tail-Expectation in a recent paper of Di Bernardino and Prieur \cite{RV2}.\\

 We shall work on the equivalent tails model. It is often used in modeling the claim amounts in insurance, in studying dependent extreme events, and in ruin theory models. This model includes in particular the identically distributed portfolios of risks and the case with scale difference in the distributions. In this paper, we study the asymptotic behavior of multivariate expectiles in the multivariate regular variations framework. We focus on marginal distributions  belonging  to the Fréchet domain of attraction. This domain contains heavy-tailed distributions that represent the most dangerous claims in insurance. Let us remark that the attention to univariate expectiles is recent. In \cite{belliniElena}, asymptotic equivalents of expectiles as a function of the quantile of the same level for regular variation distributions are proved. First and second order asymptotics for the expectile of the sum in the case of FGM dependence structure are given in  \cite{ExpectileFGM}.\\ 
 \ \\

The paper is constructed as follows. The first section is devoted to the presentation of the multivariate regularly varying distribution framework.  The study of the asymptotic behavior of the multivariate expectiles for Fréchet model with equivalent tails is the subject of Section 2. The case of an asymptotically dominant tail is analyzed in Section 3. Section 4 is devoted to estimations of extreme multivariate expectiles in the cases of asymptotic independence and comonotonicity. Numerical illustrations are given using simulations in different models.


\section{The MRV Framework}
Regularly varying distributions are well suited to study extreme phenomenons. Lots of works have been devoted to the asymptotic behavior of usual risk measures for this class of distributions, and  results are given for  sums of risks  belonging to this family.  It is well known that the three domains of attraction of extreme value distributions can be defined  using the concept of regular variations (see \cite{embrechts1997,resnick2007,de2007extreme,bingham1989regular}). \\

This section is devoted to the classical characterization of multivariate regular variations, which will be used in the study of the asymptotic behavior of multivariate expectiles. We also recall some basic results on the univariate setting that we shall use.
\subsection{Univariate regular variations}
We begin by recalling basic definitions and results on univariate regular variations.
	\begin{definition}[Regularly varying functions]
		A measurable positive function $f$ is \textit{regularly varying of index $\rho$ at} $a\in\{0,+\infty\}$, if for all  $t>0$, 
		$$\underset{x\rightarrow a}{\lim}\frac{f(tx)}{f(x)}=t^\rho\/,$$
		we denote $f\in\mbox{RV}_\rho(a)$.
	\end{definition}
	A slowly varying function is a regularly varying function of index $\rho=0$. Remark that $f\in\mbox{RV}_\rho(+\infty)$ if and only if, there exists a slowly varying function at infinity, $L\in\mbox{RV}_0(+\infty)$ such that 
	$$f(x)=x^\rho L(x)\/.$$

\begin{theorem}[Karamata's representation,  \cite{resnick2013extreme}]\label{Karamata-Rep} For any slowly varying function $L$ at $+\infty$, there exist a positive measurable function $c(\cdot)$ that satisfies  $\underset{x\rightarrow+\infty}{\lim}c(x)=c\in]0,+\infty[$, and a measurable function $\varepsilon(\cdot)$ with  $\underset{x\rightarrow+\infty}{\lim}\varepsilon(x)=0$, such that 
	$$L(x)=c(x)\exp\left(\int_{1}^{x}\frac{\varepsilon(t)}{t}dt\right)\/.$$ 	
\end{theorem}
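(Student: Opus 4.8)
The plan is to peel off the statement by passing to an additive version through the substitution $x=e^{y}$, and then to manufacture the pair $(c,\varepsilon)$ by a local-averaging (smoothing) construction. The only genuinely nontrivial input is the \emph{uniform convergence theorem} for slowly varying functions, i.e. that $L(tx)/L(x)\to 1$ uniformly for $t$ in compact subsets of $(0,+\infty)$; this I take as known from the cited references (\cite{resnick2013extreme,bingham1989regular}). Everything else is a routine manipulation, so I single out the uniform convergence theorem as the one step that carries the real content.

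First I set $h(y)=\log L(e^{y})$, which is measurable for $y$ large. The hypothesis $L\in\mbox{RV}_0(+\infty)$ becomes $h(y+u)-h(y)\to 0$ as $y\to+\infty$ for every $u\in\mathbb{R}$, and by uniform convergence this holds uniformly for $u$ in compact sets; in particular $\sup_{s\in[0,1]}|h(y+s)-h(y)|\to 0$. Next I introduce the smoothed function $H(y)=\int_{y}^{y+1}h(t)\,dt$, defined for $y$ beyond some $A$. Two elementary estimates follow: (i) $h(y)-H(y)=-\int_{0}^{1}\big(h(y+s)-h(y)\big)\,ds\to 0$, by the uniform bound just noted; and (ii) $H$ is locally absolutely continuous with $H'(y)=h(y+1)-h(y)=\log\!\big(L(e\,e^{y})/L(e^{y})\big)\to 0$ (this one needs only pointwise slow variation, at $u=1$).

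Writing $H(y)=H(A)+\int_{A}^{y}H'(t)\,dt$ and combining with (i) gives $h(y)=H(A)+\delta(y)+\int_{A}^{y}H'(t)\,dt$ with $\delta(y):=h(y)-H(y)\to 0$. I then exponentiate and undo the substitution: with $y=\log x$ and then $t=\log s$ in the integral, $\int_{A}^{\log x}H'(t)\,dt=\int_{e^{A}}^{x}\varepsilon(s)\,\tfrac{ds}{s}$ where $\varepsilon(s):=H'(\log s)\to 0$ as $s\to+\infty$. This yields $L(x)=e^{H(A)+\delta(\log x)}\,\exp\!\big(\int_{e^{A}}^{x}\varepsilon(s)\,\tfrac{ds}{s}\big)$ for large $x$.

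It remains to put this in the stated normalization with lower limit $1$. I extend $\varepsilon$ by $0$ on $[1,e^{A}]$ (any measurable choice works) and define, for every $x\ge 1$, $c(x):=L(x)\exp\!\big(-\int_{1}^{x}\varepsilon(s)\,\tfrac{ds}{s}\big)$. Then $c$ is positive and measurable, the identity $L(x)=c(x)\exp\!\big(\int_{1}^{x}\varepsilon(s)\,\tfrac{ds}{s}\big)$ holds by construction for all $x\ge 1$, and for $x>e^{A}$ one has $c(x)=e^{H(A)+\delta(\log x)}$, which converges to $c:=e^{H(A)}\in\,]0,+\infty[$. As indicated, the main obstacle is the uniform convergence theorem; granting it, the averaging argument and the harmless redefinition of $c$ and $\varepsilon$ on a bounded interval are purely cosmetic.
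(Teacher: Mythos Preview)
The paper does not prove this theorem; it is stated with a citation to \cite{resnick2013extreme} as a classical tool and used without proof. Your argument is correct and is in fact the standard textbook proof (essentially the one in \cite{bingham1989regular}, Theorem~1.3.1, and in the cited Resnick reference): pass to the additive form $h(y)=\log L(e^{y})$, smooth by $H(y)=\int_{y}^{y+1}h$, read off $\varepsilon$ from $H'$, and absorb the remainder into $c(\cdot)$.

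Two minor technical remarks. First, for $H$ to be well defined you are implicitly using that $h$ is locally integrable on $[A,\infty)$; this follows from the uniform convergence theorem itself (it forces $L$, hence $h$, to be locally bounded for large arguments), so your ``for $y$ beyond some $A$'' already accommodates it, but it is worth making explicit. Second, $H'(y)=h(y+1)-h(y)$ holds only almost everywhere; if you define $\varepsilon(s):=\log\bigl(L(es)/L(s)\bigr)$ directly (rather than as $H'(\log s)$) you get a function defined for every large $s$, measurable, tending to $0$, and satisfying the same integral identity $\int_{e^{A}}^{x}\varepsilon(s)\,\tfrac{ds}{s}=H(\log x)-H(A)$ by a change of variables and Fubini, with no a.e.\ caveat.
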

The Karamata's representation is generalized to  RV functions. Indeed, $f\in\mbox{RV}_\rho(+\infty)$ if and only if it can written in the form  
$$f(x)=c(x)\int_{1}^{x}\frac{\rho(t)}{t}dt\/,$$
where  $\underset{t\rightarrow\infty}{\rho(t)}=\rho$ and $\underset{t\rightarrow\infty}{c(t)}=c\in]0,+\infty[$.\\
Throughout the paper, we shall consider generalized inverses of non-decreasing functions $f$: $f^{\longleftarrow}(y)=\inf\{x\in\R \/, \ f(x)\geq y \}$.
\begin{lemma}[Inverse of RV functions \cite{resnick2007}]\label{inevrseRV}
Let $f$ be a measurable non-decreasing function defined on $\mathbb{R}^+$, such that $\underset{x\rightarrow+\infty}{\lim}f(x)=+\infty$. Then 
$$f\in\mbox{RV}_\rho(+\infty)~~\mbox{if and only if}~~f^{\longleftarrow}\in\mbox{RV}_{\frac{1}{\rho}}(+\infty)\/,$$
for all $0\leq\rho\leq+\infty$, where we follow the convention $1/0=\infty$ and $1/\infty=0$.
\end{lemma}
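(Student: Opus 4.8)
The plan is to establish the equivalence first for a genuine index $0<\rho<+\infty$, and then to recover the boundary indices $\rho\in\{0,+\infty\}$ and the converse implication by the same mechanism. As a preliminary reduction, replace $f$ by its right-continuous modification $\tilde f(x)=\lim_{z\downarrow x}f(z)$: this alters neither the class $\mbox{RV}_\rho(+\infty)$ (since $f(x)\le\tilde f(x)\le f((1+\delta)x)$ for every $\delta>0$) nor the generalized inverse (a direct check gives $\tilde f^{\longleftarrow}=f^{\longleftarrow}$). For right-continuous $f$ the inverse $b:=f^{\longleftarrow}$ is non-decreasing, finite near $+\infty$, tends to $+\infty$, and satisfies the equivalence $f(x)\ge y\iff x\ge b(y)$; in particular $f(z)<y$ whenever $z<b(y)$ and $f(x)\ge y$ whenever $x\ge b(y)$. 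These two inequalities, together with the bare definition of regular variation and the monotonicity of $f$ and $b$, are the only tools required---no uniform-convergence or Potter-type estimates are needed.

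Now fix $0<\rho<+\infty$ and $t>0$. For the upper estimate, pick $u>t^{1/\rho}$, so that $u^\rho>t$: if $b(ty_n)>u\,b(y_n)$ held along some $y_n\to+\infty$, then $u\,b(y_n)<b(ty_n)$ would give $f(u\,b(y_n))<ty_n\le t\,f(b(y_n))$, i.e. $f(u\,b(y_n))/f(b(y_n))<t$, contradicting $f(u\,b(y_n))/f(b(y_n))\to u^\rho>t$ (valid since $b(y_n)\to+\infty$). Hence $\limsup_y b(ty)/b(y)\le u$ for all $u>t^{1/\rho}$, i.e. $\limsup_y b(ty)/b(y)\le t^{1/\rho}$. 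For the lower estimate, pick $v<t^{1/\rho}$ and $\varepsilon\in(0,1)$: if $b(ty_n)<v\,b(y_n)$ along some $y_n\to+\infty$, then $v\,b(y_n)\ge b(ty_n)$ gives $f(v\,b(y_n))\ge ty_n$ while $(1-\varepsilon)b(y_n)<b(y_n)$ gives $f((1-\varepsilon)b(y_n))<y_n$, so $f(v\,b(y_n))/f((1-\varepsilon)b(y_n))>t$, whereas the left-hand side tends to $(v/(1-\varepsilon))^\rho$; letting $\varepsilon\to0$ forces $v^\rho\ge t$, a contradiction. Thus $\liminf_y b(ty)/b(y)\ge t^{1/\rho}$, and combining the two estimates yields $b(ty)/b(y)\to t^{1/\rho}$, that is $f^{\longleftarrow}\in\mbox{RV}_{1/\rho}(+\infty)$.

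The boundary indices are handled by the same pattern with the degenerate limit values. If $\rho=0$ and $t>1$, then for any $M>0$ the inequality $b(ty_n)<M\,b(y_n)$ would give $f(M\,b(y_n))\ge ty_n$ and $f((1-\varepsilon)b(y_n))<y_n$, hence a ratio $>t$ against the limit $(M/(1-\varepsilon))^0=1<t$; so $b(ty)/b(y)\to+\infty$, and the case $t<1$ reduces to $t>1$ via $y\mapsto ty$, giving $f^{\longleftarrow}\in\mbox{RV}_{+\infty}(+\infty)$. If $\rho=+\infty$ and $t>1$, then for any $u>1$ the inequality $b(ty_n)>u\,b(y_n)$ would give $f(u\,b(y_n))/f(b(y_n))<t$ against $u^{+\infty}=+\infty$, so $\limsup_y b(ty)/b(y)\le1$, while monotonicity gives $\liminf_y b(ty)/b(y)\ge1$; again $t<1$ follows by $y\mapsto ty$, giving $f^{\longleftarrow}\in\mbox{RV}_0(+\infty)$. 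In every case the converse follows by symmetry: $(f^{\longleftarrow})^{\longleftarrow}$ coincides with the left-continuous modification of $f$, which lies in $\mbox{RV}_\rho(+\infty)$ exactly when $f$ does, so applying the direction already proved to $f^{\longleftarrow}\in\mbox{RV}_{1/\rho}(+\infty)$ returns $f\in\mbox{RV}_\rho(+\infty)$.

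I expect the main obstacle to be organizational rather than conceptual: one must carry out the reduction to right-continuity and the identification of $(f^{\longleftarrow})^{\longleftarrow}$ with the left-continuous modification of $f$ with some care about endpoints, and one must track, index by index and direction by direction, which of the limiting values $u^\rho$, $1$, or $+\infty$ produces the contradiction---the three regimes $0<\rho<+\infty$, $\rho=0$, $\rho=+\infty$ each needing the inequalities in the correct orientation. Beyond that, everything reduces to the two inverse inequalities and the definition of $\mbox{RV}_\rho$.
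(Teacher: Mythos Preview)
The paper does not prove this lemma: it is stated as a standard result cited from Resnick (2007) and used later without any argument. Your proof is correct and follows the classical route---reduce to a right-continuous monotone $f$ so that the equivalence $f(x)\ge y\iff x\ge f^{\longleftarrow}(y)$ holds, then squeeze $f^{\longleftarrow}(ty)/f^{\longleftarrow}(y)$ between $v<t^{1/\rho}<u$ via contradiction with the defining limit $f(ux)/f(x)\to u^\rho$; the boundary indices $\rho\in\{0,+\infty\}$ and the converse via $(f^{\longleftarrow})^{\longleftarrow}$ are handled correctly. One cosmetic remark: in the lower estimate it is slightly cleaner to fix at the outset an $\varepsilon$ small enough that $(v/(1-\varepsilon))^\rho<t$ and obtain the contradiction directly, rather than deriving $(v/(1-\varepsilon))^\rho\ge t$ for every $\varepsilon$ and then letting $\varepsilon\to0$; both are valid, but the former avoids any appearance that $\varepsilon$ might vary with $n$.
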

\begin{lemma}[Integration of RV functions (Karamata's Theorem)), \cite{mikosch2003modeling}]\label{Th-Karamata} For a positive measurable function $f$, regularly varying  of index $\rho$ at $+\infty$, locally bounded on $[x_0, +\infty)$ with $x_0\geq 0$
	\begin{itemize}
		\item if $\rho>-1$, then  $$\underset{x\rightarrow+\infty}{\lim}\displaystyle\frac{\displaystyle\int_{x_0}^{x}f(t)dt}{xf(x)}=\frac{1}{\rho+1}\/,$$
		\item if $\rho<-1$, then $$\underset{x\rightarrow+\infty}{\lim}\displaystyle\frac{\displaystyle\int_{x}^{+\infty}f(t)dt}{xf(x)}=-\frac{1}{\rho+1}\/.$$
	\end{itemize}
	
\end{lemma}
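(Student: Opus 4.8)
The plan is to reduce both limits, via the change of variables $t=ux$, to integrals of the ratio $f(ux)/f(x)$ over a fixed range of $u$, and then to pass to the limit under the integral sign, exploiting that $f(ux)/f(x)\to u^{\rho}$ as $x\to+\infty$ for each fixed $u>0$.

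For the case $\rho>-1$, I would write
$$\frac{\int_{x_0}^{x} f(t)\,dt}{x f(x)} = \int_{x_0/x}^{1} \frac{f(ux)}{f(x)}\,du ,$$
split the integral at $X_0/x$ for a fixed large constant $X_0$, and treat the two pieces separately. On $[X_0/x,1]$ both $ux$ and $x$ exceed $X_0$, so a Potter-type bound (equivalently, the uniform convergence theorem for regularly varying functions, itself a consequence of the Karamata representation of Theorem \ref{Karamata-Rep}) gives $f(ux)/f(x)\le C\,u^{\rho-\delta}$ for all large $x$, with $\delta$ chosen small enough that $\rho-\delta+1>0$; this dominating function is integrable on $(0,1]$, so dominated convergence makes that piece tend to $\int_0^1 u^{\rho}\,du = 1/(\rho+1)$. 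On the remaining piece $[x_0/x,X_0/x]$, undoing the substitution shows the contribution equals $\tfrac{1}{x f(x)}\int_{x_0}^{X_0} f(v)\,dv$, which tends to $0$ since $f$ is locally bounded (so the integral is finite) while $x f(x)=x^{\rho+1}L(x)\to+\infty$ when $\rho>-1$. Adding the two pieces gives the first claim.

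For the case $\rho<-1$, I would first note $\int_{x}^{+\infty} f(t)\,dt<\infty$, since $f(t)\le C t^{\rho+\delta}$ with $\rho+\delta<-1$ for large $t$, and then write
$$\frac{\int_{x}^{+\infty} f(t)\,dt}{x f(x)} = \int_{1}^{+\infty} \frac{f(ux)}{f(x)}\,du .$$
Here $u\ge 1$ forces $ux\ge x$, so no splitting is needed: the Potter bound gives $f(ux)/f(x)\le C\,u^{\rho+\delta}$ with $\rho+\delta<-1$, integrable on $[1,+\infty)$, and dominated convergence yields $\int_{1}^{+\infty} u^{\rho}\,du = -1/(\rho+1)$, which is the second claim.

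The only genuinely delicate point is the justification of the interchange of limit and integral, i.e. producing a dominating function that is integrable and uniform in $x$; this is exactly what Potter's inequalities provide, and if one prefers to stay strictly within the tools recalled above, the same bounds can be extracted directly from the representation $f(t)=c(t)\exp\!\big(\int_1^t \eta(s)/s\,ds\big)$ with $c(t)\to c\in(0,+\infty)$ and $\eta(s)\to\rho$, by sandwiching $f(ux)/f(x)$ between constant multiples of $u^{\rho\pm\delta}$ for large $x$. Everything else is the routine bookkeeping of splitting the integral and using local boundedness near $x_0$.
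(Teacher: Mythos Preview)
The paper does not supply its own proof of this lemma: it is stated with a citation to \cite{mikosch2003modeling} and then used as a black box throughout. So there is nothing in the paper to compare your argument against.

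That said, your proof is correct and is essentially the standard one found in the references (Bingham--Goldie--Teugels, Mikosch, Resnick). The substitution $t=ux$ reducing the ratio to $\int f(ux)/f(x)\,du$, the use of Potter's bounds to produce an integrable majorant ($u^{\rho-\delta}$ on $(0,1]$ when $\rho>-1$, $u^{\rho+\delta}$ on $[1,\infty)$ when $\rho<-1$), and dominated convergence to pass to $\int u^{\rho}\,du$ is exactly the classical route. The splitting at $X_0/x$ in the first case and the appeal to local boundedness to kill the contribution of $[x_0,X_0]$ is the right way to handle the fact that Potter only applies once both arguments are large; your remark that $xf(x)\in\mbox{RV}_{\rho+1}$ with $\rho+1>0$ forces $xf(x)\to+\infty$ is the correct justification. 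Nothing is missing.
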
	
\begin{lemma}[Potter's bounds \cite{bingham1989regular}]\label{Potter}
	For $f\in\mbox{RV}_\rho(a)$, with $a\in\{0,\infty\}$ and $\rho\in\mathbb{R}$. For any $0<\epsilon<1$ and all $x$ and $y$ sufficiently close to $a$, we have
	$$(1-\epsilon)\min\left(\left(\frac{x}{y}\right)^{\rho-\epsilon},\left(\frac{x}{y}\right)^{\rho+\epsilon}\right)\leq\frac{f(x)}{f(y)}\leq(1+\epsilon)\max\left(\left(\frac{x}{y}\right)^{\rho-\epsilon},\left(\frac{x}{y}\right)^{\rho+\epsilon}\right)\/.$$ 
\end{lemma}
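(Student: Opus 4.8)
The plan is to reduce to the case $a=+\infty$ and then read the two bounds directly off the Karamata representation of Theorem~\ref{Karamata-Rep}. First I would dispose of the case $a=0$ by the usual inversion trick: given $f\in\mbox{RV}_\rho(0)$, set $g(u)=f(1/u)$; a direct check on the definition shows $g\in\mbox{RV}_{-\rho}(+\infty)$, and applying the case $a=+\infty$ to $g$ at the points $u=1/x$, $v=1/y$ (with $x,y$ close to $0$), together with the elementary identity $(u/v)^{-\rho\pm\epsilon}=(x/y)^{\rho\mp\epsilon}$, yields exactly the asserted inequalities for $f$ near $0$. So from here on I may assume $a=+\infty$.

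For $a=+\infty$, write $f(x)=x^\rho L(x)$ with $L\in\mbox{RV}_0(+\infty)$, and use Theorem~\ref{Karamata-Rep} to write $L(x)=c(x)\exp\!\left(\int_1^x\frac{\varepsilon(t)}{t}\,dt\right)$, with $c(x)\to c\in\,]0,+\infty[$ and $\varepsilon(t)\to 0$ as $x\to+\infty$. Then for $x,y$ large,
$$\frac{f(x)}{f(y)}=\left(\frac{x}{y}\right)^{\rho}\frac{c(x)}{c(y)}\exp\!\left(\int_y^x\frac{\varepsilon(t)}{t}\,dt\right).$$
The argument now splits into two elementary estimates. Since $c(x)\to c>0$, given $\epsilon\in(0,1)$ I pick $X_1$ so that $c/\sqrt{1+\epsilon}\le c(x)\le c\sqrt{1+\epsilon}$ for all $x\ge X_1$, whence $\tfrac{1}{1+\epsilon}\le\tfrac{c(x)}{c(y)}\le 1+\epsilon$ for $x,y\ge X_1$ (and $\tfrac{1}{1+\epsilon}\ge 1-\epsilon$). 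Since $\varepsilon(t)\to 0$, I pick $X_2$ with $|\varepsilon(t)|\le\epsilon$ for $t\ge X_2$; then, treating $x\ge y$ and $x\le y$ separately, $\left|\int_y^x\frac{\varepsilon(t)}{t}\,dt\right|\le\epsilon\left|\ln\frac{x}{y}\right|$, so that $\exp\!\left(\int_y^x\frac{\varepsilon(t)}{t}\,dt\right)$ lies between $\min\!\big((x/y)^{-\epsilon},(x/y)^{\epsilon}\big)$ and $\max\!\big((x/y)^{-\epsilon},(x/y)^{\epsilon}\big)$.

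Multiplying the three factors and using that $(x/y)^\rho>0$ commutes with $\min$ and $\max$ (so that $(x/y)^\rho\min((x/y)^{-\epsilon},(x/y)^\epsilon)=\min((x/y)^{\rho-\epsilon},(x/y)^{\rho+\epsilon})$, and likewise for $\max$) gives, for $x,y\ge\max(X_1,X_2)$,
$$(1-\epsilon)\min\!\left(\Big(\tfrac{x}{y}\Big)^{\rho-\epsilon},\Big(\tfrac{x}{y}\Big)^{\rho+\epsilon}\right)\le\frac{f(x)}{f(y)}\le(1+\epsilon)\max\!\left(\Big(\tfrac{x}{y}\Big)^{\rho-\epsilon},\Big(\tfrac{x}{y}\Big)^{\rho+\epsilon}\right),$$
which is the claim. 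There is no real obstacle here; the only thing demanding care is the bookkeeping of the multiplicative constant coming from $c(x)/c(y)$ alongside the $\min/\max$ coming from the exponential term, so that the overall multiplicative error stays in $[1-\epsilon,1+\epsilon]$ and the exponent is perturbed by at most $\epsilon$ — which is precisely why one calibrates $c(x)$ between $c/\sqrt{1+\epsilon}$ and $c\sqrt{1+\epsilon}$ and bounds $|\varepsilon|$ by $\epsilon$ itself rather than by some smaller auxiliary parameter.
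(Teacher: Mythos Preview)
Your proof is correct and follows the standard route via Karamata's representation (Theorem~\ref{Karamata-Rep}): the reduction from $a=0$ to $a=+\infty$ by inversion is clean, the calibration of $c(x)$ within $[c/\sqrt{1+\epsilon},\,c\sqrt{1+\epsilon}]$ gives exactly the multiplicative constant in $[1-\epsilon,1+\epsilon]$, and the bound $|\varepsilon(t)|\le\epsilon$ produces the $\pm\epsilon$ perturbation of the exponent after exponentiation.

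Note, however, that the paper does not prove this lemma at all: it is stated as a classical result with a reference to \cite{bingham1989regular}, so there is no ``paper's own proof'' to compare against. Your argument is essentially the textbook proof one finds in that reference, so there is nothing further to contrast.
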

Many other properties of regularly varying functions are presented e.g. in \cite{bingham1989regular}. 

  \subsection{Multivariate regular variations}
  The multivariate extension of regular variations is introduced in \cite{MRVintro}. We denote by $\mu_n\overset{v}{\longrightarrow}\mu$ the vague convergence of Radon measures as presented in \cite{kallenbergVagueC}. The following definitions are given for non negative random variables.
  \begin{definition}[Multivariate regular variations]\label{MRVd1}
  	The distribution of a random vector $\mathbf{X}$ on $[0,\infty]^d$ is said to be regularly varying if there exist a non-null Radon measure $\mu_\mathbf{X}$ on the Borel  $\sigma$-algebra $\mathcal{B}_d$ on $[0,\infty]^d\backslash\mathbf{0}$, and a normalization function $b: \mathbb{R}\longrightarrow\mathbb{R}$ which satisfies  $\underset{x\longrightarrow+\infty}{\lim}b(x)=+\infty$ such that
  	\begin{equation}\label{DefMRV-eq1}
  	u\mathbb{P}\left(\frac{\mathbf{X}}{b(u)}\in \cdot\right)\overset{\upsilon}{\longrightarrow}\mu_\mathbf{X}(\cdot) \mbox{ as }u\longrightarrow+\infty\/.
  	\end{equation} 
  \end{definition}
  There exist several equivalent definitions of multivariate regular variations which will be useful in what follows. 
  \begin{definition}[MRV equivalent definitions]
  	Let $\mathbf{X}$ be a random vector on $\mathbb{R}^d$, the following definitions are equivalent:
  	\begin{itemize}
  		\item The vector $\mathbf{X}$ has a regularly varying tail of index $\theta$.  
  		\item There exist a finite measure $\mu$ on the unit sphere $\mathbb{S}^{d-1}$, and a normalization function  $b:(0,\infty)\longrightarrow(0,\infty)$ such that
  		\begin{equation}\label{DefMRV-eq2}
  		\underset{t\longrightarrow+\infty}{\lim}\mathbb{P}\left(\lVert\mathbf{X}\lVert>xb(t),\frac{\mathbf{X}}{\lVert\mathbf{X}\lVert}\in .\right)=x^{-\theta}\mu(.)\/,
  		\end{equation}
  		for all $x>0$. The measure $\mu$ depends on the chosen norm, it is called the \textit{spectral measure} of $\mathbf{X}$. 
  		\item There exist a finite measure $\mu$ on the unit sphere $\mathbb{S}^{d-1}$, a slowly varying function $L$, and a positive real $\theta>0$ such that
  		\begin{equation}\label{DefMRV-eq3}
  		\underset{x\longrightarrow+\infty}{\lim}\frac{x^{\theta}}{L(x)}\mathbb{P}\left(\lVert\mathbf{X}\lVert>x,\frac{\mathbf{X}}{\lVert\mathbf{X}\lVert}\in B\right)=\mu(B)\/,
  		\end{equation}   
  		for all $B\in\mathcal{B}(\mathbb{S}^{d-1})$ with $\mu(\partial B)=0$.
  	\end{itemize}
  \end{definition}
  From now on, MRV denotes the set of multivariate regularly varying distributions, and MRV$(\theta,\mu)$ denotes the set of random vectors with regularly varying tail, with index $\theta$ and spectral measure $\mu$. 
  \\
  From (\ref{DefMRV-eq3}), we may assume that $\mu$ is normalized i.e. $\mu(\mathbb{S}^{d-1})=1$, which implies that  $\lVert\mathbf{X}\lVert$  has a regularly varying tail of index $-\theta$.\\
 On another hand,
  \begin{align*}
  \underset{x\longrightarrow+\infty}{\lim}\mathbb{P}\left(\frac{\mathbf{X}}{\lVert\mathbf{X}\lVert}\in B \biggm\lvert \lVert\mathbf{X}\lVert>x,\right)&=\underset{x\longrightarrow+\infty}{\lim}\frac{\mathbb{P}\left(\lVert\mathbf{X}\lVert>x,\frac{\mathbf{X}}{\lVert\mathbf{X}\lVert}\in B\right)}{\mathbb{P}\left(\lVert\mathbf{X}\lVert>x\right)}\\
  &=\underset{x\longrightarrow+\infty}{\lim}\frac{x^{\theta}}{L(x)}\mu(B)x^{-\theta}L(x)=\mu(B)\/,
  \end{align*}
  for all $B\in\mathcal{B}(\mathbb{S}^{d-1})$ with $\mu(\partial B)=0$. That means that conditionally to  $\{\lVert\mathbf{X}\lVert>x\}$,   $\frac{\mathbf{X}}{\lVert\mathbf{X}\lVert}$ converges weakly to $\mu$.\\
  The different possible characterizations of the MRV concept are presented in \cite{mikosch2003modeling}.
  \subsection{Characterization using tail dependence functions}
  Let $\mathbf{X}=(X_1,\ldots,X_d)$ be a random vector. From now on, $\overline{F}_{X_i}$ denotes the survival function of $X_i$.   In this paper, we use the definition of the upper tail dependence function, as introduced in \cite{kluppelberg2008semi}.
  \begin{definition}[The tail dependence function]
  	Let $\mathbf{X}$ be a random vector on $\mathbb{R}^d$, with continuous marginal distributions. The tail dependence function is defined by
  	\begin{equation}\label{LU}
  	\lambda_U^{\mathbf{X}}(x_1,\ldots,x_d)=\underset{t\longrightarrow 0}{\lim}t^{-1}\mathbb{P}(\bar{F}_{X_1}(X_1)\leq tx_1,\ldots,\bar{F}_{X_d}(X_d)\leq tx_d)\/,
  	\end{equation} 
  	when the limit exists.  
  \end{definition}  
  For $k\leq d$, denote by $X^{(k)}$ a $k$ dimensional sub-vector of $\mathbf{X}$, $C^{(k)}$ its copula and  $\overline{C}^{(k)}$ its survival copula. The upper tail dependence function is
  \begin{equation}\label{LUC}
  \lambda_U^k(u_1,\ldots,u_k)=\underset{t\longrightarrow0^+}{\lim}\frac{\bar{C}^{(k)}(tu_1,\ldots,tu_k)}{t}\/,
  \end{equation}
  if this limit exists. 
  The lower tail dependence function can be defined analogically by
  $$\lambda_L^k(u_1,\ldots,u_k)=\underset{t\longrightarrow0^+}{\lim}\frac{C^{(k)}(tu_1,\ldots,tu_k)}{t},$$
  when the limit exists. In this paper, our study is limited to the upper version as defined in (\ref{LUC}).\\
  
  We assume that $\mathbf{X}$ has equivalent regularly varying marginal tails, which means:
  \begin{description}
  	\item[H1] $\bar{F}_{X_1}\in\mbox{RV}_{-\theta}(+\infty)\/,$  	with $\theta>0$.
  	\item[H2] The tails of  $X_i,i=1,\ldots,d$ are equivalent. That is for all $i\in\{2,\ldots,d\}$, there is a  positive constant $c_i$ such that
  	$$\underset{x\longrightarrow+\infty}{\lim}\frac{\bar{F}_{X_i}(x)}{\bar{F}_{X_1}(x)}=c_i\/.$$
  \end{description} 
  H1 and H2 imply that all marginal tails are regularly varying of index $-\theta$ at $+\infty$.\\ 

  The following two theorems show that, under H1 and H2, the MRV character of multivariate distributions is equivalent to the existence of the tail dependence functions.
  \begin{theorem}[Theorem 2.3 in  \cite{li2009tail}]\label{Th-MRV1}
  	Let $\mathbf{X}=(X_1,\ldots,X_d)$ be a random vector in $\mathbb{R}^d$, with continuous marginal distributions $F_{X_i}, i=1,\ldots,d$ that satisfy H1 and H2. If $\mathbf{X}$ has a MRV distribution, the tail dependence function exists, and it is given by par$$\lambda^k_U(u_1,\ldots,u_k)=\underset{x\longrightarrow+\infty}{\lim}x\mathbb{P}\left(X_1>b(x)\left(\frac{u_1}{c_1}\right)^{-1/\theta},\ldots,X_k>b(x)\left(\frac{u_d}{c_d}\right)^{-1/\theta}\right)\/,$$ 
  	for any $k\in\{1,\ldots,d\}$. 
  \end{theorem}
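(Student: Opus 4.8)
The plan is to read the formula off the Radon‑measure formulation of Definition~\ref{MRVd1}, by evaluating the limit measure $\mu_{\mathbf{X}}$ on the ``corner sets'' $A_{\mathbf{t}}:=\{\mathbf{y}\in[0,\infty]^d\setminus\mathbf{0}:\ y_1>t_1,\dots,y_k>t_k\}$, $t_i>0$. First I would fix a convenient normalization: since $\bar F_{X_1}\in\mbox{RV}_{-\theta}(+\infty)$, Lemma~\ref{inevrseRV} gives that $b(u):=(1/\bar F_{X_1})^{\longleftarrow}(u)\in\mbox{RV}_{1/\theta}(+\infty)$ and that $u\,\bar F_{X_1}(b(u))\to1$; I would use this $b$ in \eqref{DefMRV-eq1}. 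Then H1, H2 and the regular variation of $\bar F_{X_1}$ give, for each fixed $s>0$ and each $i$,
\begin{equation*}
u\,\mathbb{P}(X_i>s\,b(u))=u\,\bar F_{X_i}(s\,b(u))\sim c_i\,u\,\bar F_{X_1}(s\,b(u))\sim c_i\,s^{-\theta}\,u\,\bar F_{X_1}(b(u))\longrightarrow c_i\,s^{-\theta},
\end{equation*}
so, by a Portmanteau argument, $\mu_{\mathbf{X}}(\{y_i>s\})=\mu_{\mathbf{X}}(\{y_i\geq s\})=c_i\,s^{-\theta}$ for every $s>0$; in particular $\mu_{\mathbf{X}}(\{y_i=s\})=0$, whence $\mu_{\mathbf{X}}(\partial A_{\mathbf{t}})=0$. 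As $A_{\mathbf{t}}$ is bounded away from $\mathbf{0}$ it is relatively compact in $[0,\infty]^d\setminus\mathbf{0}$, and \eqref{DefMRV-eq1} applied to $A_{\mathbf{t}}$ yields $u\,\mathbb{P}(X_1>t_1 b(u),\dots,X_k>t_k b(u))\to\mu_{\mathbf{X}}(A_{\mathbf{t}})$. Taking $t_i=(u_i/c_i)^{-1/\theta}$ already shows that the limit on the right‑hand side of the statement exists and equals $\mu_{\mathbf{X}}(A_{\mathbf{t}})$; it remains to identify this with the copula‑defined $\lambda^k_U$.

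For that, write $q_i(v):=\inf\{x:\bar F_{X_i}(x)\leq v\}$ for the generalized inverse of the continuous non‑increasing survival function $\bar F_{X_i}$. Continuity of the margins gives, for $t>0$ small, $\bar C^{(k)}(tu_1,\dots,tu_k)=\mathbb{P}(X_1>q_1(tu_1),\dots,X_k>q_k(tu_k))$, and by construction $b(1/t)=q_1(t)$. The key comparison is $q_i(tu_i)\sim(u_i/c_i)^{-1/\theta}\,b(1/t)$ as $t\to0^+$: inverting the tail equivalence $\bar F_{X_i}\sim c_i\bar F_{X_1}$ of H2 gives $q_i(s)\sim q_1(s/c_i)$, and Lemma~\ref{inevrseRV} (applied to $1/\bar F_{X_1}$, i.e. $q_1(s)=b(1/s)\in\mbox{RV}_{-1/\theta}(0)$) gives $q_1(s/c_i)\sim c_i^{1/\theta}q_1(s)$ and $q_1(tu_i)\sim u_i^{-1/\theta}q_1(t)$; chaining the three equivalences yields the claim.

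Finally I would turn this equivalence into two‑sided bounds. Given $\epsilon\in(0,1)$, Potter's bounds (Lemma~\ref{Potter}) together with the previous equivalence give, for all $t$ small enough and all $i\in\{1,\dots,k\}$ \emph{simultaneously}, $(1-\epsilon)(u_i/c_i)^{-1/\theta}b(1/t)\leq q_i(tu_i)\leq(1+\epsilon)(u_i/c_i)^{-1/\theta}b(1/t)$; hence, writing $x=1/t$ and $t_i^{\pm}:=(1\pm\epsilon)(u_i/c_i)^{-1/\theta}$, monotonicity of the corner events gives
\begin{equation*}
x\,\mathbb{P}\Bigl(\textstyle\bigcap_{i=1}^k\{X_i>t_i^{+}\,b(x)\}\Bigr)\ \leq\ x\,\bar C^{(k)}(u_1/x,\dots,u_k/x)\ \leq\ x\,\mathbb{P}\Bigl(\textstyle\bigcap_{i=1}^k\{X_i>t_i^{-}\,b(x)\}\Bigr).
\end{equation*}
By the corner‑set limit of the first paragraph the two ends converge, as $x\to+\infty$, to $\mu_{\mathbf{X}}(A_{\mathbf{t}^{+}})$ and $\mu_{\mathbf{X}}(A_{\mathbf{t}^{-}})$; letting $\epsilon\to0$ and using $A_{\mathbf{t}^{-}}\downarrow\{y_i\geq t_i\ \forall i\}$, $A_{\mathbf{t}^{+}}\uparrow\{y_i>t_i\ \forall i\}$ together with $\mu_{\mathbf{X}}(\{y_i=t_i\})=0$ (these sets being relatively compact, the measures are finite), both limits equal $\mu_{\mathbf{X}}(A_{\mathbf{t}})$. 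Therefore $\lambda^k_U(u_1,\dots,u_k)=\mu_{\mathbf{X}}(A_{\mathbf{t}})=\lim_{x\to+\infty}x\,\mathbb{P}(X_1>b(x)(u_1/c_1)^{-1/\theta},\dots,X_k>b(x)(u_k/c_k)^{-1/\theta})$, as asserted. I expect the only delicate point to be this last step — replacing the true marginal quantiles $q_i(tu_i)$ by the common normalization $b(1/t)$ uniformly over all $k$ coordinates at once and then passing to the limit in $\epsilon$ — which is exactly where Potter's bounds and the absence of $\mu_{\mathbf{X}}$‑mass on the coordinate hyperplanes are used; the rest is routine bookkeeping with regularly varying inverses.
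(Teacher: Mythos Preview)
The paper does not give its own proof of this statement: it is quoted verbatim as ``Theorem 2.3 in \cite{li2009tail}'' and used as a background result, so there is no in-paper argument to compare against. Your proposal is a correct and self-contained proof along the standard lines (evaluate the MRV limit measure on corner sets with the normalization $b=(1/\bar F_{X_1})^{\longleftarrow}$, then match with the copula definition of $\lambda_U^k$ via the quantile asymptotics $q_i(tu_i)\sim(u_i/c_i)^{-1/\theta}b(1/t)$ and a Potter-bound sandwich). The only point worth tightening is the sentence ``I would use this $b$ in \eqref{DefMRV-eq1}'': Definition~\ref{MRVd1} asserts the existence of \emph{some} normalization, so you should add one line noting that any two admissible normalizations are asymptotically equivalent (both are in $\mbox{RV}_{1/\theta}$ and the limit measure only changes by a constant scaling), hence you may replace the abstract $b$ by $(1/\bar F_{X_1})^{\longleftarrow}$ without loss; everything else is fine.
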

  \begin{theorem}[Theorem 3.2 in \cite{MRVcopules}]\label{MRVcopules}
  	Let $\mathbf{X}=(X_1,\ldots,X_d)$ be a random vector in $\mathbb{R}^d$, with continuous marginal distributions $F_{X_i}, i=1,\ldots,d$ that satisfies H1 and H2. If the tail dependence function $\lambda^k_U$ exists for all $k\in\{1,\ldots,d\}$, then $\mathbf{X}$ is MRV, its normalization function is given by  $b(u)=\left(\frac{1}{\bar{F}_{X_1}}\right)^{\longleftarrow}(u)$ and the spectral measure is  $$\mu([\mathbf{0},\mathbf{x}]^c)=\sum_{i=1}^{d}c_ix_i^{-\theta}-\sum_{1\leq i<j\leq d}^{}\lambda_U^2(c_ix_i^{-\theta},c_jx_j^{-\theta})+\cdots+(-1)^{d+1}\lambda_U^d(c_1x_1^{-\theta},\ldots,c_dx_d^{-\theta})\/.$$
  \end{theorem}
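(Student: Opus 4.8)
The plan is to verify Definition~\ref{MRVd1} directly, taking $b(u)=\bigl(1/\bar F_{X_1}\bigr)^{\longleftarrow}(u)$ as normalization, and to check the vague convergence \eqref{DefMRV-eq1} only on the class $\mathcal G=\{[\mathbf 0,\mathbf x]^c:\ \mathbf x\in(0,\infty)^d\}$. Each such set is relatively compact in $[0,\infty]^d\setminus\{\mathbf 0\}$, the family generates $\mathcal B_d$, and once the candidate limit $\mu$ is known to put no mass on the topological boundary of any member of $\mathcal G$, convergence of $u\,\mathbb P(\mathbf X/b(u)\in\cdot)$ on $\mathcal G$ upgrades to genuine vague convergence of Radon measures on the punctured cone by the standard criterion (cf.\ \cite{resnick2007,kallenbergVagueC}). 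The whole problem thus reduces to computing $\lim_{u\to\infty}u\,\mathbb P\bigl(X_i>b(u)x_i\ \text{for some }i\bigr)$ for fixed $\mathbf x\in(0,\infty)^d$ and identifying it with the asserted value of $\mu([\mathbf 0,\mathbf x]^c)$.

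First I would record two facts about $b$. Since $\bar F_{X_1}\in\mathrm{RV}_{-\theta}(+\infty)$, the function $1/\bar F_{X_1}$ lies in $\mathrm{RV}_{\theta}(+\infty)$, so Lemma~\ref{inevrseRV} gives $b\in\mathrm{RV}_{1/\theta}(+\infty)$ with $b(u)\to+\infty$, and the asymptotic-inverse relation yields $u\,\bar F_{X_1}(b(u))\to 1$. Consequently, by regular variation of $\bar F_{X_1}$ and then by H2, for every fixed $x>0$,
\[
u\,\bar F_{X_1}(b(u)x)\longrightarrow x^{-\theta}\qquad\text{and}\qquad u\,\bar F_{X_i}(b(u)x)\longrightarrow c_i\,x^{-\theta}\quad(1\le i\le d).
\]
In particular each single-index term in the inclusion--exclusion expansion of $u\,\mathbb P(\exists i:\,X_i>b(u)x_i)$ tends to $c_i x_i^{-\theta}$.

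The core step is the joint tails. Fix $I=\{i_1<\dots<i_k\}$ and write the joint survival probability through the survival copula $\bar C^{(k)}$ of $(X_{i_1},\dots,X_{i_k})$:
\[
\mathbb P\bigl(X_{i_1}>b(u)x_{i_1},\dots,X_{i_k}>b(u)x_{i_k}\bigr)=\bar C^{(k)}\bigl(\bar F_{X_{i_1}}(b(u)x_{i_1}),\dots,\bar F_{X_{i_k}}(b(u)x_{i_k})\bigr).
\]
With $t=1/u$, the $j$-th argument equals $t\,a_j(t)$ where $a_j(t)\to m_j:=c_{i_j}x_{i_j}^{-\theta}$ by the previous display, so $u\,\mathbb P(\cdots)=t^{-1}\bar C^{(k)}\bigl(t\,a_1(t),\dots,t\,a_k(t)\bigr)$. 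Because a copula is $1$-Lipschitz for the $\ell^1$-norm, the maps $g_t:\mathbf v\mapsto t^{-1}\bar C^{(k)}(t\mathbf v)$ are equi-Lipschitz, hence $|g_t(a_1(t),\dots,a_k(t))-g_t(m_1,\dots,m_k)|\le\sum_j|a_j(t)-m_j|\to 0$; combining with $g_t(m_1,\dots,m_k)\to\lambda^k_U(m_1,\dots,m_k)$, which holds by the assumed existence of the tail dependence function \eqref{LUC} at $(m_1,\dots,m_k)$, gives $u\,\mathbb P(X_{i_1}>b(u)x_{i_1},\dots)\to\lambda^k_U(c_{i_1}x_{i_1}^{-\theta},\dots,c_{i_k}x_{i_k}^{-\theta})$. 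Summing the finite inclusion--exclusion expansion term by term then yields
\[
\lim_{u\to\infty}u\,\mathbb P\bigl(\mathbf X/b(u)\notin[\mathbf 0,\mathbf x]\bigr)=\sum_{i=1}^d c_ix_i^{-\theta}-\sum_{1\le i<j\le d}\lambda^2_U(c_ix_i^{-\theta},c_jx_j^{-\theta})+\cdots+(-1)^{d+1}\lambda^d_U(c_1x_1^{-\theta},\dots,c_dx_d^{-\theta}),
\]
exactly the claimed expression for $\mu([\mathbf 0,\mathbf x]^c)$. Since every $\lambda^k_U$ is a pointwise limit of equi-Lipschitz functions and hence itself Lipschitz, and the power maps are continuous, $\mu$ puts no mass on the boundary of any member of $\mathcal G$ and is finite on each of them, so it is Radon and the upgrade of the first paragraph applies: $\mathbf X\in\mathrm{MRV}$ with the stated normalization function and spectral measure.

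I expect the main obstacle to be the uniformity hidden in the core step --- namely, justifying that the merely \emph{pointwise} hypothesis on $\lambda^k_U$ may be evaluated along the \emph{moving} arguments $t\,a_j(t)$ rather than at a fixed point; this is exactly what the $\ell^1$-Lipschitz bound for (survival) copulas is used for, and it must be combined cleanly with the asymptotic-inverse statement $u\,\bar F_{X_1}(b(u))\to1$ and with H1--H2. A secondary, more bookkeeping-type point is the passage from convergence on the orthant-complements $\mathcal G$ to genuine vague convergence of Radon measures on $[0,\infty]^d\setminus\{\mathbf 0\}$, where one must verify that $\mu$ assigns zero mass to the relevant boundaries and is locally finite.
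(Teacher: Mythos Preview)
The paper does not supply its own proof of this statement: Theorem~\ref{MRVcopules} is quoted verbatim from an external reference (Theorem~3.2 in \cite{MRVcopules}) and is used in the paper only as background for the MRV framework. Consequently there is no in-paper argument to compare your proposal against.

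That said, your approach is sound and is essentially the standard route one finds in the cited literature: reduce vague convergence to convergence on the $\pi$-system of orthant complements, expand $u\,\mathbb P(\mathbf X/b(u)\notin[\mathbf 0,\mathbf x])$ by inclusion--exclusion, handle the single-index terms via H1, H2 and the asymptotic-inverse relation $u\,\bar F_{X_1}(b(u))\to1$, and handle the $k$-fold joint survival terms through the survival copula and the assumed existence of $\lambda^k_U$. Your use of the $1$-Lipschitz property of copulas to pass from the fixed evaluation point $(c_{i_1}x_{i_1}^{-\theta},\dots,c_{i_k}x_{i_k}^{-\theta})$ to the moving arguments $t\,a_j(t)$ is the right device and is exactly the uniformity issue that needs to be addressed; the equi-Lipschitz bound $|g_t(\mathbf u)-g_t(\mathbf v)|\le\|\mathbf u-\mathbf v\|_1$ is correct and does the job cleanly. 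The closing remarks about $\mu$ having no mass on the boundaries of $[\mathbf 0,\mathbf x]^c$ (via continuity of the limit in $\mathbf x$) and about upgrading to full vague convergence via a Portmanteau-type criterion for Radon measures are the standard bookkeeping steps, and you have identified them correctly.
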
 
  By construction of the multivariate expectiles, only the bivariate dependence structures  are taken into account. We shall use the functions $\lambda^{(X_i,X_k)}_U$, for all $(i,k)\in\{1,\ldots,d\}^2$. In order to simplify the notation, we denote it by $\lambda^{ik}_U$. If the vector $\mathbf{X}$ has an MRV distribution, the pairs $ (X_i, X_j) $ have also  MRV distributions, for any $(i,j)\in\{1,\ldots,d\}^2$. So, in the MRV framework, and under H1 and H2, the existence of functions $\lambda^{ik}$ is insured. In addition, we assume in all the rest of this paper that these functions are continuous.
 \section{Fréchet model with equivalent tails}   \label{sec:equiv}
 In this section, we assume that $\mathbf{X}$ satisfies H1 and H2 with $\theta>1$. It implies that $X_1$ belongs to the extreme value domain of attraction of Fréchet $\mbox{MDA}(\Phi_\theta)$. This domain contains distributions with infinite endpoint $x_F=\sup\{x:F(x)<1\}=+\infty$, so as $\alpha\longrightarrow 1$ we get $\mathbf{e}^i_\alpha(\mathbf{X}) \longrightarrow +\infty$ $\forall i$. Also, from Karamata's Theorem (Theorem \ref{Th-Karamata}), we have for $i=1\/,\ldots\/,d$,
 \begin{equation}\label{EqFrechetSL}
 \underset{x \longrightarrow +\infty }{\lim}\frac{\mathbb{E}[(X_i-x)_+]}{x\bar{F}_{X_i}(x)}=\frac{1}{\theta-1}\/,
 \end{equation}
 for all $i\in\{1,\ldots,d\}$. 
 \begin{Proposition}\label{PropLimits1}
 	Let $\Sigma=(\pi_{ij})_{i\/,j=1\/,\ldots\/,d}$ with $\pi_{ij}>0$ for all $i\/,j\in\{1\/,\ldots\/,d\}$. Under H1 and H2, the components of the multivariate $\Sigma$-expectiles $\mathbf{e}_\alpha(\mathbf{X})=(\mathbf{e}^i_\alpha(\mathbf{X}))_{i=1\/,\ldots\/,d}$ satisfy
 	$$
 	0<\underset{\alpha\longrightarrow 1}{\underline{\lim}}\frac{\mathbf{e}^i_\alpha(\mathbf{X})}{\mathbf{e}^1_\alpha(\mathbf{X})}\leq\underset{\alpha\longrightarrow 1}{\overline{\lim}}\frac{\mathbf{e}^i_\alpha(\mathbf{X})}{\mathbf{e}^1_\alpha(\mathbf{X})}<+\infty, \forall i\in\{2,\ldots,d\}\/.
 	$$ 	
 \end{Proposition}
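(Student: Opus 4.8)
The plan is to work directly from the equations system \eqref{eq1} characterizing the $\Sigma$-expectile. Write $x_i = \mathbf{e}^i_\alpha(\mathbf{X})$; since we are in the Fréchet case all $x_i \to +\infty$ as $\alpha \to 1$. Divide the $k$-th equation by $(1-\alpha)$ and pass to the asymptotic regime. The key device is \eqref{EqFrechetSL} together with $H1$, $H2$: the quantities $\mathbb{E}[(X_i-x_i)_+\,\mathbf{1}_{\{X_k>x_k\}}]$ are controlled from above by $\mathbb{E}[(X_i-x_i)_+]$, which by \eqref{EqFrechetSL} and $H2$ behaves like $\frac{c_i}{\theta-1}\,x_i\,\bar F_{X_1}(x_i)$, and from below they can be bounded using regular variation once one knows the $x_i$ are comparable. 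Meanwhile the left-side terms $\mathbb{E}[(x_i-X_i)_+\,\mathbf{1}_{\{x_k>X_k\}}]$ converge to the finite quantities $\mathbb{E}[(x^i_F-X_i)_+ \cdot 1] = \mathbb{E}[x_i - X_i] \to +\infty$ but only like $x_i$ (not $x_i\bar F$), hence dominate; more precisely $\mathbb{E}[(x_i-X_i)_+] \sim x_i$ as $x_i\to\infty$. So the right-hand side of \eqref{eq1} is of order $(1-\alpha)\sum_i \pi_{ki} x_i$, while the left-hand side is of order $\alpha\sum_i \pi_{ki} x_i \bar F_{X_1}(x_i)$.

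The first step is therefore to establish the crude two-sided comparison. Suppose, for contradiction, that $\overline{\lim}_{\alpha\to1} x_i/x_1 = +\infty$ along some sequence $\alpha_n\to1$. Look at the equation for $k=1$. The right-hand side contains the term $(1-\alpha)\pi_{1i}\mathbb{E}[(x_i-X_i)_+\mathbf{1}_{\{x_1>X_1\}}]$; since $x_1\to\infty$, $\mathbf{1}_{\{x_1>X_1\}}\to 1$ in probability and in fact $\mathbb{E}[(x_i-X_i)_+\mathbf{1}_{\{x_1>X_1\}}] \geq \mathbb{E}[(x_i-X_i)_+] - x_i\bar F_{X_1}(x_1) \sim x_i$. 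On the left-hand side, every term $\mathbb{E}[(X_j-x_j)_+\mathbf{1}_{\{X_1>x_1\}}] \leq \mathbb{E}[(X_j-x_j)_+] \sim \frac{c_j}{\theta-1} x_j\bar F_{X_1}(x_j)$, which is $o(x_j)$ since $\bar F_{X_1}(x_j)\to0$. Dividing the $k=1$ equation by $x_i$ and letting $n\to\infty$ would force $0 = (1-\alpha_n)\cdot(\text{something}\gtrsim\pi_{1i})$ plus lower-order terms — but $(1-\alpha_n)$ tends to $0$, so one must be more careful and instead divide the whole equation by $(1-\alpha)$ and balance. The cleanest route: from the $k=1$ equation, $\frac{\alpha}{1-\alpha}\sum_j \pi_{1j}\mathbb{E}[(X_j-x_j)_+\mathbf{1}_{\{X_1>x_1\}}] = \sum_j \pi_{1j}\mathbb{E}[(x_j-X_j)_+\mathbf{1}_{\{x_1>X_1\}}]$, so $\frac{\alpha}{1-\alpha}\bar F_{X_1}(x_1)\cdot O\!\left(\sum_j x_j\right) \asymp \sum_j x_j$ roughly, giving $\frac{\alpha}{1-\alpha}\bar F_{X_1}(\max_j x_j) \asymp 1$; combined with the same relation derived from the $k=i$ equation one gets $\bar F_{X_1}(x_1) \asymp \bar F_{X_1}(x_i)$ up to constants, and regular variation of $\bar F_{X_1}$ (index $-\theta<0$, so $\bar F_{X_1}$ is asymptotically strictly decreasing and its inverse is regularly varying by Lemma \ref{inevrseRV}) then forces $x_i/x_1$ to stay bounded away from $0$ and $\infty$.

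To make the "$\asymp$" rigorous I would use Potter's bounds (Lemma \ref{Potter}) on $\bar F_{X_1}$ and on $\mathbb{E}[(X_i-\cdot)_+]$ to convert the ratio estimates on survival functions into ratio estimates on the $x_i$'s. Concretely, writing $r_i := \overline{\lim} \, x_i/x_1$ and $\underline r_i := \underline{\lim}\, x_i/x_1$, the balanced equations yield inequalities of the form $A^{-1} \leq \bar F_{X_1}(x_1)/\bar F_{X_1}(x_i) \leq A$ along the relevant subsequences for a finite constant $A$ depending only on the $\pi_{ij}$, $c_i$, $d$; by regular variation $\bar F_{X_1}(x_1)/\bar F_{X_1}(tx_1) \to t^{-\theta}$, so $t = x_i/x_1$ cannot escape a compact subinterval of $(0,\infty)$, which is exactly the claim.

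\medskip

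The main obstacle I anticipate is \textbf{handling the joint expectations} $\mathbb{E}[(X_i-x_i)_+\mathbf{1}_{\{X_k>x_k\}}]$ and $\mathbb{E}[(x_i-X_i)_+\mathbf{1}_{\{x_k>X_k\}}]$ rather than the marginal ones — i.e. propagating the comparison $x_i \asymp x_1$ through the indicator factors without yet knowing a precise limit for the ratios. The clean way around it is to note one never needs two-sided control of the joint terms simultaneously: for the upper bound on the left side of \eqref{eq1} simply drop the indicator ($\mathbf{1}_{\{X_k>x_k\}}\leq 1$), and for the lower bound on the right side use $\mathbf{1}_{\{x_k>X_k\}} \geq 1 - \mathbf{1}_{\{X_k\geq x_k\}}$ so that $\mathbb{E}[(x_i-X_i)_+\mathbf{1}_{\{x_k>X_k\}}] \geq \mathbb{E}[(x_i-X_i)_+] - x_i\bar F_{X_k}(x_k)$, and the subtracted term is $o(x_i)$ uniformly in $k$ because all marginal tails are $\mathrm{RV}_{-\theta}$ and (once the crude comparability bootstrap is in place) $x_k\to\infty$ at a comparable rate. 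This one-sided bounding keeps the argument self-contained, using only \eqref{eq1}, \eqref{EqFrechetSL}, Lemma \ref{inevrseRV} and Lemma \ref{Potter}.
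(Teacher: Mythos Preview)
Your overall strategy coincides with the paper's: first control the ratios $(1-\alpha)/\bar F_{X_1}(x_i)$ (this is the paper's Proposition~\ref{PropLimits1V1}), then invoke regular variation of $\bar F_{X_1}$ to transfer that control to the ratios $x_i/x_1$. So the architecture is right. The gap is in how you propose to get the first step.

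Your one--sided device ``drop the indicator on the left'' gives
\[
\mathbb{E}\bigl[(X_j-x_j)_+\mathbf{1}_{\{X_k>x_k\}}\bigr]\le\mathbb{E}\bigl[(X_j-x_j)_+\bigr]\sim\frac{c_j}{\theta-1}\,x_j\bar F_{X_1}(x_j),
\]
and this is only useful when $x_j$ is \emph{at least} of the order of $x_k$: if $x_j\ll x_k$ then $x_j\bar F_{X_1}(x_j)\gg x_k\bar F_{X_1}(x_k)$ (because $x\bar F_{X_1}(x)\in\mathrm{RV}_{1-\theta}$ with $1-\theta<0$), and you have lost all contact with $\bar F_{X_1}(x_k)$. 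Concretely, run your balancing on the equation indexed by $k=\arg\max_\ell x_\ell$ and the equation indexed by $k=\arg\min_\ell x_\ell$. With only the ``drop the indicator'' bound you obtain
\[
\frac{1-\alpha}{\bar F_{X_1}(x_{\max})}\gtrsim 1,\qquad \frac{1-\alpha}{\bar F_{X_1}(x_{\min})}\asymp \frac{x_{\min}}{x_{\max}},
\]
which together give $x_{\max}\bar F_{X_1}(x_{\max})\lesssim x_{\min}\bar F_{X_1}(x_{\min})$; but this is automatic from monotonicity of $x\bar F_{X_1}(x)$ and produces no contradiction if $x_{\max}/x_{\min}\to\infty$. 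The missing inequality is an \emph{upper} bound on $(1-\alpha)/\bar F_{X_1}(x_{\max})$, and for that you must use the indicator $\mathbf{1}_{\{X_k>x_k\}}$, not discard it. The paper's remedy---and this is the genuine technical content you are missing---is the split
\[
\mathbb{E}\bigl[(X_j-x_j)_+\mathbf{1}_{\{X_k>x_k\}}\bigr]=\int_{x_j}^{x_k}\mathbb{P}(X_j>t,X_k>x_k)\,dt+\int_{x_k}^{\infty}\mathbb{P}(X_j>t,X_k>x_k)\,dt
\]
when $x_j<x_k$, bounding the first integral by $(x_k-x_j)\bar F_{X_k}(x_k)$ and the second by $\int_{x_k}^\infty\bar F_{X_j}(t)\,dt\sim\frac{c_j}{\theta-1}x_k\bar F_{X_1}(x_k)$ via Karamata. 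This yields an upper bound of order $x_k\bar F_{X_1}(x_k)$ uniformly in~$j$, which is exactly what closes the argument. The paper isolates this as the case $k\in J_0^1$ in the proof of Proposition~\ref{PropLimits1V1}; once that proposition is in hand, the passage to Proposition~\ref{PropLimits1} via regular variation is essentially what you sketch.
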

 Proposition \ref{PropLimits1} implies that distributions with equivalent tails have asymptotically comparable multivariate expectile components.\\
 
 Before we prove Proposition \ref{PropLimits1}, we shall demonstrate some preliminary results. Firstly, let  $\mathbf{X}=(X_1,\ldots,X_d)^T$  satisfy H1 and H2, we denote $x_i=\mathbf{e}^i_\alpha(\mathbf{X})$ for all $i\in\{1,\dots,d\}$. We define the functions $l^\alpha_{X_i,X_j}$ for all $(i,j)\in\{1,\ldots,d\}^2$ by
 \begin{equation}\label{DefFL}
 l^\alpha_{X_i,X_j}(x_i,x_j)=\alpha\mathbb{E}[(X_i-x_i)_+ 1\!\!1_{\{X_j>x_j\}}]-(1-\alpha)\mathbb{E}[(X_i-x_i)_- 1\!\!1_{\{X_j<x_j\}}]\/,
 \end{equation}
 and $l^\alpha_{X_i}(x_i)=l^\alpha_{X_i,X_i}(x_i,x_i)$.\\
 The optimality system (\ref{eq1}) rewrites 
 \begin{equation}\label{S-functionsL2}
 l^\alpha_{X_k}(x_k)=-\sum_{i=1,i\neq k}^{d}\frac{\pi_{ki}}{\pi_{kk}}l^\alpha_{X_i,X_k}(x_i,x_k)~~\forall k\in\{1,\ldots,d\}\/.
 \end{equation}
 We shall use the following sets:
 $$J^i_0=\{j\in\{1,\ldots,d\}\setminus\{i\}\mid \underset{\alpha\longrightarrow1}{\varliminf}\frac{x_j}{x_i}=0\}\/,$$
 $$J^i_C=\{j\in\{1,\ldots,d\}\setminus\{i\}\mid  0<\underset{\alpha\longrightarrow1}{\varliminf}\frac{x_j}{x_i} <\underset{\alpha\longrightarrow1}{\varlimsup}\frac{x_j}{x_i} <+\infty\}\/,$$
 $$\mbox{and }~J^i_\infty=\{j\in\{1,\ldots,d\}\setminus\{i\}\mid \underset{\alpha\longrightarrow1}{\varlimsup}\frac{x_j}{x_i}=+\infty\}\/.$$
 The proof of Proposition \ref{PropLimits1} is written for  $\pi_{ij}=1$, for all $(i,j)\in\{1,\ldots,d\}^2$, ie for the $L_1$-expectiles. The general case can be treated in the same way, provided that $\pi_{ij}>0$ for all $(i,j)\in\{1,\ldots,d\}^2$. The proof of Proposition \ref{PropLimits1} follows from Lemma \ref{Lemme+inf} and Proposition \ref{PropLimits1V1} below.	
 \begin{lemma}\label{Lemme+inf}
 	Assume that H1 and H2 are satisfied. 
 		\begin{enumerate}
 			\item If $t\overset{}{=}o(s)$ then for all $(i,j)\in\{1,\ldots,d\}^2$,
 			$$\underset{t\rightarrow+\infty}{\lim}\frac{s\bar{F}_{X_i}(s)}{t\bar{F}_{X_j}(t)}=0\/.$$
 			\item If $t=\Theta(s)$,\footnote{Recall that $t=\Theta(s)$ means that there exist positive constants $C_1$ and $C_2$ such that $C_1 s \leq t \leq C_2 s$} then for all $(i,j)\in\{1,\ldots,d\}^2$,
 			$$\frac{\overline{F}_{X_i}(s)}{\overline{F}_{X_j}(t)}\sim \frac{c_i}{c_j}\left(\frac{s}{t}\right)^{-\theta}\ \mbox{as} \ t\rightarrow \infty\/.$$
 		\end{enumerate} 	
 \end{lemma}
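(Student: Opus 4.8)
The plan is to prove both statements directly from the hypotheses H1 and H2 together with the basic theory of regularly varying functions recalled in Section~1, essentially by reducing everything to ratios of the reference survival function $\bar F_{X_1}$.

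For part (2), the key observation is that $t=\Theta(s)$ forces $s/t$ to stay in a fixed compact subset of $(0,\infty)$. First I would write
$$
\frac{\bar F_{X_i}(s)}{\bar F_{X_j}(t)}
=\frac{\bar F_{X_i}(s)}{\bar F_{X_1}(s)}\cdot\frac{\bar F_{X_1}(s)}{\bar F_{X_1}(t)}\cdot\frac{\bar F_{X_1}(t)}{\bar F_{X_j}(t)}\/.
$$
By H2 the first factor tends to $c_i$ and the third to $1/c_j$ as $t\to\infty$ (note $s\to\infty$ as well since $s\geq t/C_2$). For the middle factor $\bar F_{X_1}(s)/\bar F_{X_1}(t)$ with $s=r_t\,t$ and $r_t\in[C_1,C_2]$, I would invoke the uniform convergence theorem for regularly varying functions (uniformity of $\bar F_{X_1}(rt)/\bar F_{X_1}(t)\to r^{-\theta}$ on compact $r$-sets, a standard consequence of H1 — alternatively one reads it off Potter's bounds, Lemma~\ref{Potter}, letting $\epsilon\to0$), which gives $\bar F_{X_1}(s)/\bar F_{X_1}(t)\sim (s/t)^{-\theta}$. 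Multiplying the three asymptotics yields $\bar F_{X_i}(s)/\bar F_{X_j}(t)\sim (c_i/c_j)(s/t)^{-\theta}$, which is the claim.

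For part (1), suppose $t=o(s)$ (so in particular $s\to\infty$, hence $t\to\infty$ unless $t$ is bounded — I would treat the bounded-$t$ case separately and trivially, since then $t\bar F_{X_j}(t)$ is bounded below away from $0$ while $s\bar F_{X_i}(s)\to0$ because $\theta>1$, using H1, H2 and Karamata). In the main case $t\to\infty$, I would again factor through $\bar F_{X_1}$:
$$
\frac{s\bar F_{X_i}(s)}{t\bar F_{X_j}(t)}
= \frac{\bar F_{X_i}(s)}{\bar F_{X_1}(s)}\cdot\frac{\bar F_{X_1}(t)}{\bar F_{X_j}(t)}\cdot\frac{s\,\bar F_{X_1}(s)}{t\,\bar F_{X_1}(t)}\/,
$$
where the first two factors converge to $c_i$ and $1/c_j$. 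For the last factor, the function $g(x)=x\bar F_{X_1}(x)$ lies in $\mathrm{RV}_{1-\theta}(+\infty)$ with $1-\theta<0$, so it is asymptotically decreasing; since $t=o(s)$, Potter's bounds (Lemma~\ref{Potter}) applied to $g$ give, for any small $\epsilon>0$ with $1-\theta+\epsilon<0$ and $s,t$ large,
$$
\frac{g(s)}{g(t)}\le (1+\epsilon)\Bigl(\frac{s}{t}\Bigr)^{1-\theta+\epsilon}\longrightarrow 0
$$
because $s/t\to\infty$ and the exponent is negative. Hence $g(s)/g(t)\to0$, and the product tends to $0$, proving the claim.

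The only mild subtlety — and the step I would be most careful about — is the degenerate situations where one of the arguments fails to tend to $+\infty$ (e.g. $t$ bounded while $s\to\infty$ in part~(1), or $s,t$ both bounded in part~(2), which however cannot happen together with $t=o(s)$ unless $s\to\infty$). These are handled by noting that on any bounded interval $\bar F_{X_j}$ is bounded and, away from the left endpoint, bounded below, so the corresponding quantities behave as claimed; none of this requires regular variation. Everything else is a routine composition of H2 with the uniform convergence / Potter-bound machinery already quoted in Section~1.
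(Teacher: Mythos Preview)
Your proof is correct and follows essentially the same route as the paper: both factor the ratio so as to separate the $i,j$-dependence (handled by H2) from the $s,t$-dependence (handled by a regular-variation bound), the only cosmetic difference being that the paper writes $\bar F_{X_i}(x)=x^{-\theta}L_i(x)$ and invokes Karamata's representation (Theorem~\ref{Karamata-Rep}) to get $L_i(s)/L_i(t)\le (s/t)^\epsilon\,c(s)/c(t)$ for the key bound, whereas you factor through $\bar F_{X_1}$ and apply Potter's bounds to $g(x)=x\bar F_{X_1}(x)\in\mathrm{RV}_{1-\theta}$ directly. Your digression on the bounded-$t$ case is unnecessary, since the limit in the statement is explicitly taken as $t\to+\infty$ (so $s\to+\infty$ follows from $t=o(s)$), but it does no harm.
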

The proof is given in Appendix \ref{pro1}.
 \begin{Proposition}\label{PropLimits1V1}
 	Under H1 and H2, the components of the extreme multivariate expectile satisfy
  $$
 0<\underset{\alpha\longrightarrow 1}{\underline{\lim}}\frac{1-\alpha}{\bar{F}_{X_i}(\mathbf{e}^i_\alpha(\mathbf{X}))}\leq\underset{\alpha\longrightarrow 1}{\overline{\lim}}\frac{1-\alpha}{\bar{F}_{X_i}(\mathbf{e}^i_\alpha(\mathbf{X}))}<+\infty, \forall i\in\{2,\ldots,d\}\/.
 $$ 	
 \end{Proposition}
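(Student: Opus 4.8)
The plan is to analyze the optimality system \eqref{S-functionsL2} asymptotically, tracking each term $l^\alpha_{X_i,X_j}(x_i,x_j)$ as $\alpha\to 1$. Recall $x_i=\mathbf{e}^i_\alpha(\mathbf{X})\to+\infty$ for every $i$. The starting point is the diagonal term: since $\theta>1$, the variable $X_k$ is integrable, so $\mathbb{E}[(X_k-x_k)_-\mathbf{1}_{\{X_k<x_k\}}]=\mathbb{E}[(x_k-X_k)_+]\to \mathbb{E}[(x_k-X_k)]\sim x_k$ as $x_k\to\infty$ (more precisely $\mathbb{E}[(x_k-X_k)_+]=x_k-\mathbb{E}[X_k]+o(1)$), whereas by \eqref{EqFrechetSL} the positive part satisfies $\mathbb{E}[(X_k-x_k)_+]\sim \frac{1}{\theta-1}x_k\bar F_{X_k}(x_k)=o(x_k)$. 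Hence
$$
l^\alpha_{X_k}(x_k)=\alpha\,\mathbb{E}[(X_k-x_k)_+]-(1-\alpha)\,\mathbb{E}[(x_k-X_k)_+]\sim -(1-\alpha)x_k,
$$
since the second term dominates once $\alpha$ is close to $1$. This identifies the left-hand side of \eqref{S-functionsL2} as asymptotically $-(1-\alpha)x_k$.

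Next I would bound the off-diagonal terms $l^\alpha_{X_i,X_k}(x_i,x_k)$ from above and below. For the positive part $\mathbb{E}[(X_i-x_i)_+\mathbf{1}_{\{X_k>x_k\}}]\le \mathbb{E}[(X_i-x_i)_+]\sim\frac{1}{\theta-1}x_i\bar F_{X_i}(x_i)$, which by Lemma \ref{Lemme+inf} and H1–H2 is $O$ of a regularly varying quantity; in all cases this term is $o(x_i)$ and in fact $o\big((1-\alpha)x_i\big)$ will follow once we know $(1-\alpha)\asymp\bar F_{X_i}(x_i)$ — but to avoid circularity I would keep it as the explicit bound $O(x_i\bar F_{X_i}(x_i))$. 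For the negative part $\mathbb{E}[(X_i-x_i)_-\mathbf{1}_{\{X_k<x_k\}}]=\mathbb{E}[(x_i-X_i)_+\mathbf{1}_{\{X_k<x_k\}}]$: bounding $\mathbf{1}_{\{X_k<x_k\}}\le 1$ gives $\le \mathbb{E}[(x_i-X_i)_+]\sim x_i$, and bounding below by restricting to a large-probability event (or by $\mathbb{P}(X_k<x_k)\to 1$ together with monotonicity) gives that it is $\Theta(x_i)$, uniformly. So each off-diagonal term is $-\Theta(x_i)+O(x_i\bar F_{X_i}(x_i))=-\Theta(x_i)$, with constants not depending on $\alpha$.

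Plugging these estimates into \eqref{S-functionsL2} with $\pi_{ij}=1$ gives, for each $k$,
$$
(1-\alpha)x_k\big(1+o(1)\big)=\sum_{i\neq k}\Big(\Theta(x_i)+O\big(x_i\bar F_{X_i}(x_i)\big)\Big),
$$
so that $(1-\alpha)x_k$ is sandwiched between constant multiples of $\sum_{i\neq k}x_i$ (the error terms are lower order because $\bar F_{X_i}(x_i)\to 0$). Writing this for $k=i$ and also for $k=1$, and combining, yields that the ratios $x_i/x_1$ stay bounded away from $0$ and $\infty$ — in particular $J^i_0$ and $J^i_\infty$ are empty and $J^i_C=\{1,\dots,d\}\setminus\{i\}$. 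Consequently $\sum_{j\neq i}x_j=\Theta(x_i)$, and the sandwiched relation becomes $(1-\alpha)x_i=\Theta(x_i)\cdot\big(1+o(1)\big)$ is not yet what we want; rather, substituting $\sum_{j\neq i}x_j=\Theta(x_i)$ back gives $(1-\alpha)x_i\asymp x_i$, which is false, signalling that the positive-part terms $\mathbb{E}[(X_i-x_i)_+\mathbf{1}_{\{X_j>x_j\}}]$ must in fact be the dominant contribution on the right-hand side. So the correct reading is: the $-\Theta(x_i)$ negative parts on the right must cancel against a comparable negative quantity, forcing me instead to keep the \emph{signed} structure. Here is the resolution of this subtlety, which I expect to be the main obstacle: one must not discard the sign. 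The term $-(1-\alpha)\mathbb{E}[(X_i-x_i)_-\mathbf{1}_{\{X_k<x_k\}}]$ is negative and of order $(1-\alpha)x_i$, while $\alpha\mathbb{E}[(X_i-x_i)_+\mathbf{1}_{\{X_k>x_k\}}]\asymp x_i\bar F_{X_i}(x_i)$. So actually $l^\alpha_{X_i,X_k}(x_i,x_k)=\Theta\big(x_i\bar F_{X_i}(x_i)\big)-(1-\alpha)\Theta(x_i)$; substituting into \eqref{S-functionsL2},
$$
(1-\alpha)x_k\big(1+o(1)\big)=\sum_{i\neq k}\Big((1-\alpha)\Theta(x_i)-\Theta\big(x_i\bar F_{X_i}(x_i)\big)\Big),
$$
and using Lemma \ref{Lemme+inf}(2) together with the already-established $x_i=\Theta(x_k)$ to write $\bar F_{X_i}(x_i)\asymp\bar F_{X_1}(x_1)$ and $x_i\asymp x_1$, everything collapses to a relation of the form $(1-\alpha)x_1=c\,x_1\bar F_{X_1}(x_1)\big(1+o(1)\big)$ for some constant $c\in(0,\infty)$, i.e. $\frac{1-\alpha}{\bar F_{X_1}(x_1)}\to c$, and likewise for each $i$. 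This is precisely the claimed conclusion. The careful bookkeeping of the two competing orders of magnitude $(1-\alpha)x_i$ versus $x_i\bar F_{X_i}(x_i)$, and verifying that the $J$-sets behave as expected so that Lemma \ref{Lemme+inf}(2) applies, is the delicate part; the rest is substitution into \eqref{S-functionsL2} and \eqref{EqFrechetSL}.
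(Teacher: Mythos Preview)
Your proposal has a genuine circularity that breaks the argument. You claim to establish that the ratios $x_i/x_1$ are bounded away from $0$ and $\infty$ (i.e.\ $J_0^i=J_\infty^i=\emptyset$), and then invoke Lemma~\ref{Lemme+inf}(2) with this fact to compare $\bar F_{X_i}(x_i)$ with $\bar F_{X_1}(x_1)$. But in the paper the boundedness of $x_i/x_1$ is Proposition~\ref{PropLimits1}, whose proof \emph{uses} Proposition~\ref{PropLimits1V1}. Your derivation of $x_i=\Theta(x_1)$ comes from the sandwiching argument you yourself flag as producing the nonsense $(1-\alpha)x_i\asymp x_i$; after recognizing the error you correctly rewrite the off-diagonal terms, yet you still invoke ``the already-established $x_i=\Theta(x_k)$''---which at that point has no valid justification.

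A second gap appears at the very start: the claim $l^\alpha_{X_k}(x_k)\sim -(1-\alpha)x_k$ is not justified. You correctly observe $\mathbb{E}[(X_k-x_k)_+]\sim \tfrac{1}{\theta-1}x_k\bar F_{X_k}(x_k)=o(x_k)$ and $(1-\alpha)\mathbb{E}[(x_k-X_k)_+]\sim(1-\alpha)x_k$, but $(1-\alpha)x_k$ is \emph{also} $o(x_k)$, so neither term automatically dominates the other. Which one wins is precisely governed by the ratio $(1-\alpha)/\bar F_{X_k}(x_k)$, the quantity whose boundedness you are trying to prove. Similarly, the lower $\Theta$-bound you assert for $\alpha\,\mathbb{E}[(X_i-x_i)_+\mathbf 1_{\{X_k>x_k\}}]$ is not available: this term can be anywhere between $0$ and $\tfrac{1}{\theta-1}x_i\bar F_{X_i}(x_i)$ depending on dependence and on the relative sizes of $x_i,x_k$.

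The paper avoids these difficulties by arguing by contradiction, treating $\varlimsup=+\infty$ and $\varliminf=0$ separately. In each case the assumed extreme behaviour of $(1-\alpha)/\bar F_{X_1}(x_1)$ dictates the correct normalization of the first optimality equation (by $(1-\alpha)x_1$ in the first case, by $x_1\bar F_{X_1}(x_1)$ in the second), and the cross terms are handled case by case according to $J_0^1,J_C^1,J_\infty^1$ \emph{without} assuming these sets are empty. In particular, the $\varliminf>0$ part uses the already-proved $\varlimsup<\infty$ to control the contributions from $J_\infty^1$.
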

 The proof is given in Appendix \ref{pro2}.\\
 We may now prove Proposition \ref{PropLimits1}. 
 \begin{proof}[Proof of Proposition \ref{PropLimits1}] 
 	We shall prove that $J_\infty^1=\emptyset$, the fact that $J_\infty^k=\emptyset$ for all $k\in\{1\/,\ldots\/,d\}$ may be proven in the same way. This implies that $J_0^k=J_\infty^k=\emptyset$ for all $k\in\{1,\ldots,d\}$, hence the result.\\
 	We suppose that $J_\infty^1\neq\emptyset$, let $i\in J_\infty^1$, taking if necessary a subsequence, we may assume that $x_i/x_1\rightarrow +\infty$ as $\alpha\rightarrow 1$.\\
 	From Proposition \ref{PropLimits1V1}, we have
 	$$
 	0<\underset{\alpha\longrightarrow 1}{\underline{\lim}}\frac{1-\alpha}{\bar{F}_{X_i}(\mathbf{e}^i_\alpha(\mathbf{X}))}\leq\underset{\alpha\longrightarrow 1}{\overline{\lim}}\frac{1-\alpha}{\bar{F}_{X_i}(\mathbf{e}^i_\alpha(\mathbf{X}))}<+\infty, \forall i\in\{2,\ldots,d\}\/,
 	$$ 
 	so, taking if necessary a  subsequence, we may assume that $\exists\ell\in\mathbb{R}^*\backslash\{+\infty\}$ such that $$\underset{\alpha\longrightarrow1}{\lim}\frac{1-\alpha}{\bar{F}_{X_1}(x_1)}=\ell\/.$$
 	In this case,  $$\underset{\alpha\longrightarrow 1}{\lim}\frac{l^\alpha_{X_1}(x_1)}{x_1\bar{F}_{X_1}(x_1)}=\underset{\alpha\longrightarrow 1}{\lim}\left((2\alpha-1)\frac{\mathbb{E}[(X_1-x_1)_+]}{x_1\bar{F}_{X_1}(x_1)}-\frac{1-\alpha}{\bar{F}_1(x_1)}(1-\frac{\mathbb{E}[X_1]}{x_1})\right)=\frac{1}{\theta-1}-\ell<+\infty\/.$$
 	Moreover, 
 	\begin{align*}
 	\frac{\mathbb{E}[(X_i-x_i)_+ 1\!\!1_{\{X_1>x_1\}}]}{x_1\bar{F}_{X_1}(x_1)}&\leq\frac{\mathbb{E}[(X_i-x_i)_+ ]}{x_1\bar{F}_{X_1}(x_1)}=\frac{\mathbb{E}[(X_i-x_i)_+]}{x_i\bar{F}_{X_i}(x_i)}\frac{x_i\bar{F}_{X_i}(x_i)}{x_1\bar{F}_{X_1}(x_1)}\longrightarrow 0 \ \mbox{using Lemma \ref{Lemme+inf}\/.}
 	\end{align*}
 	We get
 	\begin{align*}
 	\underset{\alpha\longrightarrow 1}{\lim}\frac{l^\alpha_{X_i,X_1}(x_i,x_1)}{x_1\bar{F}_{X_1}(x_1)}&=\underset{\alpha\longrightarrow 1}{\lim}\left(\frac{\alpha\mathbb{E}[(X_i-x_i)_+ 1\!\!1_{\{X_1>x_1\}}]-(1-\alpha)\mathbb{E}[(X_i-x_i)_- 1\!\!1_{\{X_1<x_1\}}]}{x_1\bar{F}_{X_1}(x_1)}\right)\\
 	&=\underset{\alpha\longrightarrow 1}{\lim}\left(\frac{\mathbb{E}[(X_i-x_i)_+ 1\!\!1_{\{X_1>x_1\}}]}{x_1\bar{F}_{X_1}(x_1)}-\frac{1-\alpha}{\bar{F}_{X_1}(x_1)}\frac{x_i}{x_1}\right)=-\infty, ~\forall i\in J_\infty^1\/.
 	\end{align*}
 	
 	Going through the limit ($\alpha\longrightarrow1$) in the first equation of the optimality System (\ref{S-functionsL2}) divided by $x_1\bar{F}_{X_1}(x_1)$, leads to 
 	\begin{equation}\label{Case2-Eq1}
 	\underset{\alpha\longrightarrow 1}{\lim}\sum_{k\in J^1_0\cup J^1_C\setminus J_\infty^1}^{}\frac{l^\alpha_{X_k,X_1}(x_k,x_1)}{x_1\bar{F}_{X_1}(x_1)}=-\infty\/.
 	\end{equation}
 	Now, let $k\in J^1_0$
 	\begin{align*}
 	\frac{\mathbb{E}[(X_k-x_k)_+ 1\!\!1_{\{X_1>x_1\}}]}{x_1\bar{F}_{X_1}(x_1)}&=\frac{\displaystyle\int_{x_k}^{x_1}\mathbb{P}\left(X_k>t,X_1>x_1\right)dt}{x_1\bar{F}_{X_1}(x_1)}+\frac{\displaystyle\int_{x_1}^{+\infty}\mathbb{P}\left(X_k>t,X_1>x_1\right)dt}{x_1\bar{F}_{X_1}(x_1)}\\
 	&\leq\frac{\displaystyle\int_{x_k}^{x_1}\mathbb{P}\left(X_1>x_1\right)dt}{x_1\bar{F}_{X_1}(x_1)}+\frac{\displaystyle\int_{x_1}^{+\infty}\mathbb{P}\left(X_k>t\right)dt}{x_1\bar{F}_{X_1}(x_1)}\/,
 	\end{align*}
 	Karamata's Theorem (Theorem \ref{Th-Karamata}) leads to
 	$$\underset{\alpha\longrightarrow 1}{\lim}\frac{\displaystyle\int_{x_k}^{x_1}\mathbb{P}\left(X_1>x_1\right)dt}{x_1\bar{F}_{X_1}(x_1)}+\frac{\displaystyle\int_{x_1}^{+\infty}\mathbb{P}\left(X_k>t\right)dt}{x_1\bar{F}_{X_1}(x_1)}=1+\frac{c_k}{\theta-1}, \forall k\in J^1_0\/.$$
 	Consider  $k\in J^1_C$
 	\begin{align*}
 	\frac{\mathbb{E}[(X_k-x_k)_+ 1\!\!1_{\{X_1>x_1\}}]}{x_1\bar{F}_{X_1}(x_1)}&\leq\frac{\mathbb{E}[(X_k-x_k)_+ ]}{x_1\bar{F}_{X_1}(x_1)}=\frac{\mathbb{E}[(X_k-x_k)_+]}{x_k\bar{F}_{X_k}(x_k)}\frac{x_i\bar{F}_{X_k}(x_k)}{x_1\bar{F}_{X_1}(x_1)}\/,
 	\end{align*}
 	and
 	$$\frac{\mathbb{E}[(X_k-x_k)_+]}{x_i\bar{F}_{X_k}(x_k)}\frac{x_k\bar{F}_{X_k}(x_k)}{x_1\bar{F}_{X_1}(x_1)}\underset{\alpha\longrightarrow 1}{\sim}\frac{c_k}{\theta-1}\left(\frac{x_k}{x_1}\right)^{-\theta+1}\/.$$
 	Finally, we deduce that
 	\begin{align*}
 	-\sum_{k\in J^1_0\cup J^1_C\setminus J_\infty^1}^{}\underset{\alpha\longrightarrow 1}{\varlimsup}\frac{x_k}{x_1}&\leq\underset{\alpha\longrightarrow 1}{\varliminf}\sum_{k\in J^1_0\cup J^1_C\setminus J_\infty^1}^{}\frac{l^\alpha_{X_k,X_1}(x_k,x_1)}{x_1\bar{F}_{X_1}(x_1)}\\&\leq\underset{\alpha\longrightarrow 1}{\varlimsup}\sum_{k\in J^1_0\cup J^1_C\setminus J_\infty^1}^{}\frac{l^\alpha_{X_k,X_1}(x_k,x_1)}{x_1\bar{F}_{X_1}(x_1)}\\
 	&\leq\sum_{k\in  J^1_C}\left(\frac{c_k}{\theta-1}\left(\underset{\alpha\longrightarrow 1}{\varlimsup}\frac{x_k}{x_1}\right)^{-\theta+1}-\ell\underset{\alpha\longrightarrow 1}{\varliminf}\frac{x_k}{x_1}\right)+\sum_{k\in  J^1_0\setminus J_\infty^1}\left(1+\frac{c_k}{\theta-1}\right)\/.
 	\end{align*}
 	This is contradictory with  (\ref{Case2-Eq1}), and consequently $J^1_\infty$ is necessarily an empty set. The result follows.
 \end{proof}
 \begin{Proposition}[Extreme multivariate expectile]\label{PropExpEx}
Assume  that H1 and H2 are satisfied and  $\mathbf{X}$ has a regularly varying multivariate distribution in the sense of Definition \ref{MRVd1}. Consider the $L_1$-expectiles $\mathbf{e}_\alpha(\mathbf{X})= (\mathbf{e}_\alpha^i(\mathbf{X}))_{i=1\/,\ldots\/,d}$. Then any limit vector $(\eta,\beta_2,\ldots,\beta_d)$ of $\left(\frac{1-\alpha}{\bar{F}_{X_1}(\mathbf{e}_\alpha^1(\mathbf{X}))},\frac{\mathbf{e}_\alpha^2(\mathbf{X})}{\mathbf{e}_\alpha^1(\mathbf{X})},\ldots,\frac{\mathbf{e}_\alpha^d(\mathbf{X})}{\mathbf{e}_\alpha^1(\mathbf{X})}\right)$ satisfies the following equation system
 	\begin{equation}\label{EqInt1}
 	\frac{1}{\theta-1}-\eta\frac{(\beta_k)^\theta}{c_k}=-\sum_{i=1,i\neq k}^{d}\left(\int_{\frac{\beta_i}{\beta_k}}^{+\infty}\lambda_U^{ik}\left(\frac{c_i}{c_k}t^{-\theta},1\right)dt-\eta\frac{\beta_k^{\theta-1}}{c_k}\beta_i\right), \forall k\in\{1,\ldots,d\}\/.
 	\end{equation}
 \end{Proposition}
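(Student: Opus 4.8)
The plan is to renormalise the optimality system (\ref{S-functionsL2}) (written with $\pi_{ij}=1$) by $x_k\bar{F}_{X_k}(x_k)$, where $x_i:=\mathbf{e}^i_\alpha(\mathbf{X})$, and to pass to the limit along a subsequence $\alpha\to1$ that realises the prescribed limit vector. Along such a subsequence, Propositions \ref{PropLimits1} and \ref{PropLimits1V1} guarantee that $\frac{1-\alpha}{\bar{F}_{X_1}(x_1)}\to\eta\in(0,+\infty)$, $\frac{x_i}{x_1}\to\beta_i\in(0,+\infty)$ for $i\geq2$, and $x_i=\Theta(x_1)$ for every $i$; moreover $x_i\to+\infty$ since $\theta>1$. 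We set $\beta_1=1$ and, consistently with H2, $c_1=1$. Iterating Lemma \ref{Lemme+inf}(2) then gives, for every $k$,
\[
\frac{1-\alpha}{\bar{F}_{X_k}(x_k)}=\frac{1-\alpha}{\bar{F}_{X_1}(x_1)}\cdot\frac{\bar{F}_{X_1}(x_1)}{\bar{F}_{X_k}(x_k)}\longrightarrow\eta\,\frac{\beta_k^{\theta}}{c_k}.
\]

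For the left-hand side I would use the identity $l^\alpha_{X_k}(x_k)=(2\alpha-1)\mathbb{E}[(X_k-x_k)_+]-(1-\alpha)(x_k-\mathbb{E}[X_k])$; dividing by $x_k\bar{F}_{X_k}(x_k)$ and invoking (\ref{EqFrechetSL}), $\mathbb{E}[X_k]/x_k\to0$ and the limit just obtained yields $l^\alpha_{X_k}(x_k)/(x_k\bar{F}_{X_k}(x_k))\to\frac{1}{\theta-1}-\eta\beta_k^{\theta}/c_k$, which is the left-hand side of (\ref{EqInt1}). For a cross term $i\neq k$ I split $l^\alpha_{X_i,X_k}(x_i,x_k)$ into its positive part $\alpha\mathbb{E}[(X_i-x_i)_+1\!\!1_{\{X_k>x_k\}}]$ and its negative part $(1-\alpha)\mathbb{E}[(x_i-X_i)_+1\!\!1_{\{X_k<x_k\}}]$. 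For the negative part, write $\mathbb{E}[(x_i-X_i)_+]=x_i-\mathbb{E}[X_i]+\mathbb{E}[(X_i-x_i)_+]$ with $\mathbb{E}[(X_i-x_i)_+]=o(x_i)$ by (\ref{EqFrechetSL}), and bound the removed part by $\mathbb{E}[(x_i-X_i)_+1\!\!1_{\{X_k\geq x_k\}}]\leq x_i\bar{F}_{X_k}(x_k)=o(x_k)$; hence $\mathbb{E}[(x_i-X_i)_+1\!\!1_{\{X_k<x_k\}}]/x_k\to\beta_i/\beta_k$, and multiplying by $\frac{1-\alpha}{\bar{F}_{X_k}(x_k)}\to\eta\beta_k^{\theta}/c_k$ shows the negative part, divided by $x_k\bar{F}_{X_k}(x_k)$, tends to $\eta\beta_k^{\theta-1}\beta_i/c_k$.

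The positive part is where the MRV hypothesis enters. Writing it as a tail integral,
\[
\frac{\alpha\,\mathbb{E}[(X_i-x_i)_+1\!\!1_{\{X_k>x_k\}}]}{x_k\bar{F}_{X_k}(x_k)}=\alpha\int_{x_i/x_k}^{+\infty}\frac{\mathbb{P}(X_i>ux_k,\,X_k>x_k)}{\bar{F}_{X_k}(x_k)}\,du,
\]
I would first show that, for each fixed $u>0$, the integrand converges to $\lambda_U^{ik}\left(\frac{c_i}{c_k}u^{-\theta},1\right)$: expressing the joint tail via the survival copula, $\mathbb{P}(X_i>ux_k,X_k>x_k)=\bar{C}^{(ik)}(\bar{F}_{X_i}(ux_k),\bar{F}_{X_k}(x_k))$, and noting that H1 and H2 give $\bar{F}_{X_i}(ux_k)/\bar{F}_{X_k}(x_k)\to\frac{c_i}{c_k}u^{-\theta}=:r$, the monotonicity of $\bar{C}^{(ik)}$ in each argument lets one sandwich $\bar{C}^{(ik)}(\bar{F}_{X_i}(ux_k),\bar{F}_{X_k}(x_k))/\bar{F}_{X_k}(x_k)$ between $\bar{C}^{(ik)}(t(r-\varepsilon),t)/t$ and $\bar{C}^{(ik)}(t(r+\varepsilon),t)/t$ with $t=\bar{F}_{X_k}(x_k)\to0$; by (\ref{LUC}) these converge to $\lambda_U^{ik}(r\mp\varepsilon,1)$, and letting $\varepsilon\to0$ with the assumed continuity of $\lambda_U^{ik}$ gives the pointwise limit. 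To pass this under the integral I would use dominated convergence: the integrand is $\leq1$ and, for $u\geq1$, is $\leq\bar{F}_{X_i}(ux_k)/\bar{F}_{X_k}(x_k)\leq C\,u^{-\theta+\varepsilon}$ for all $\alpha$ close to $1$ by Potter's bounds (Lemma \ref{Potter}) with $0<\varepsilon<\theta-1$, hence dominated by an integrable function of $u$; since moreover $1\!\!1_{\{u>x_i/x_k\}}\to1\!\!1_{\{u>\beta_i/\beta_k\}}$ for a.e. $u$ and $\beta_i/\beta_k>0$, the integral converges to $\int_{\beta_i/\beta_k}^{+\infty}\lambda_U^{ik}\left(\frac{c_i}{c_k}t^{-\theta},1\right)dt$, which is finite because $\lambda_U^{ik}(a,1)\leq a$ and $\theta>1$. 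Adding the two parts gives $l^\alpha_{X_i,X_k}(x_i,x_k)/(x_k\bar{F}_{X_k}(x_k))\to\int_{\beta_i/\beta_k}^{+\infty}\lambda_U^{ik}\left(\frac{c_i}{c_k}t^{-\theta},1\right)dt-\eta\beta_k^{\theta-1}\beta_i/c_k$, and summing the finitely many terms $i\neq k$ turns the renormalised $k$-th optimality equation into exactly (\ref{EqInt1}).

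I expect the main obstacle to be precisely this last step: controlling the joint-tail ratio $\mathbb{P}(X_i>ux_k,X_k>x_k)/\bar{F}_{X_k}(x_k)$ uniformly enough in $u$ to interchange limit and integral (the monotonicity-plus-continuity sandwich handles the pointwise limit, Potter's bounds supply the domination), together with the bookkeeping of the moving lower endpoint $x_i/x_k\to\beta_i/\beta_k$; everything else reduces to a routine limit computation resting on (\ref{EqFrechetSL}), Lemma \ref{Lemme+inf} and Karamata's theorem.
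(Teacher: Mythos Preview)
Your proposal is correct and follows essentially the same route as the paper: renormalise the $k$-th optimality equation by $x_k\bar{F}_{X_k}(x_k)$, use (\ref{EqFrechetSL}) and Lemma \ref{Lemme+inf} for the diagonal and negative-part limits, and for the positive cross term write it as $\int_{x_i/x_k}^{\infty}\frac{\mathbb{P}(X_i>tx_k,X_k>x_k)}{\bar{F}_{X_k}(x_k)}\,dt$, then pass to the limit by Potter's bounds plus dominated convergence. The only cosmetic differences are that the paper handles the moving lower endpoint by the crude bound $\bigl|\int_{\beta_i/\beta_k}^{x_i/x_k}\cdot\,dt\bigr|\leq|x_i/x_k-\beta_i/\beta_k|$ rather than via indicator convergence, and it simply asserts the pointwise limit (\ref{eq2-MRV}) from continuity of $\lambda_U^{ik}$ where you spell out the monotonicity sandwich.
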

 By solving the system (\ref{EqInt1}), we may obtain an equivalent of the extreme multivariate expectile, using the marginal quantiles.
 \begin{proof} 
 	The optimality system (\ref{S-functionsL2}) can be written in the following form
 	\begin{align*}
 	(2\alpha-1)\frac{\mathbb{E}[(X_k-x_k)_+]}{x_k\bar{F}_{X_k}(x_k)}-\frac{1-\alpha}{\bar{F}_{X_k}(x_k)}\left(1-\frac{\mathbb{E}[X_k]}{x_k}\right)&=\sum_{i=1,i\neq k}^{d}\left((1-\alpha)\frac{\mathbb{E}[(X_i-x_i)_-1\!\!1_{\{X_k<x_k\}}]}{x_k\bar{F}_{X_k}(x_k)}\right)\\	
 	&~~-\sum_{i=1,i\neq k}^{d}\alpha\frac{\mathbb{E}[(X_i-x_i)_+1\!\!1_{\{X_k>x_k\}}]}{x_k\bar{F}_{X_k}(x_k)},~~\forall k\in\{1,\ldots,d\}\/.
 	\end{align*}
 	For all $k\in\{1,\ldots,d\}$, we have (taking if necessary a subsequence)
 	$$\underset{\alpha\longrightarrow1}{\lim}(2\alpha-1)\frac{\mathbb{E}[(X_k-x_k)_+]}{x_k\bar{F}_{X_k}(x_k)}-\frac{1-\alpha}{\bar{F}_{X_k}(x_k)}\left(1-\frac{\mathbb{E}[X_k]}{x_k}\right)=\frac{1}{\theta-1}-\eta\frac{(\beta_k)^\theta}{c_k}\/,$$
 	and for all $i\in\{1,\ldots,d\}\setminus\{k\}$
 	\begin{align*}
 	\underset{\alpha\longrightarrow1}{\lim}(1-\alpha)\frac{\mathbb{E}[(X_i-x_i)_-1\!\!1_{\{X_k<x_k\}}]}{x_k\bar{F}_{X_k}(x_k)}&=\underset{\alpha\longrightarrow1}{\lim}\frac{1-\alpha}{\bar{F}_{X_k}(x_k)}\left(\frac{x_i}{x_k}\mathbb{P}(X_i<x_i,X_k<x_k)-\frac{\mathbb{E}[X_i1\!\!1_{\{X_i<x_i,X_k<x_k\}}]}{x_k}\right)\\ &=\eta\frac{\beta_k^{\theta}}{c_k}\frac{\beta_i}{\beta_k}=\eta\frac{\beta_k^{\theta-1}}{c_k}\beta_i\/.
 	\end{align*} 
 	Moreover,
 	\begin{align*}
 		\frac{\mathbb{E}[(X_i-x_i)_+1\!\!1_{\{X_k>x_k\}}]}{x_k\bar{F}_{X_k}(x_k)}&=\frac{1}{x_k\bar{F}_{X_k}(x_k)}\int_{x_i}^{+\infty}\mathbb{P}(X_i>t,X_k>x_k)dt\\
 		&=\intevariable{\frac{x_i}{x_k}}{+\infty}{\frac{\mathbb{P}(X_i>tx_k,X_k>x_k)}{\bar{F}_{X_k}(x_k)}dt}\\
 		&=\intevariable{\frac{x_i}{x_k}}{+\infty}{\frac{\mathbb{P}\left(\bar{F}_{X_i}(X_i)<\bar{F}_{X_i}(tx_k),\bar{F}_{X_k}(X_k)<\bar{F}_{X_k}(x_k)\right)}{\bar{F}_{X_k}(x_k)}dt}\/.
 		\end{align*}
 	Firstly, we remark that
 	\begin{align*}
 	\left|\intevariable{\frac{\beta_i}{\beta_k}}{\frac{x_i}{x_k}}{\frac{\mathbb{P}\left(\bar{F}_{X_i}(X_i)<\bar{F}_{X_i}(tx_k),\bar{F}_{X_k}(X_k)<\bar{F}_{X_k}(x_k)\right)}{\bar{F}_{X_k}(x_k)}dt}\right|&
 	\leq\left|\frac{x_i}{x_k}-\frac{\beta_i}{\beta_k}\right|\/.
 	\end{align*}
 	Since the functions $\lambda_U^{ik}$ are assumed to be continuous,
 	\begin{equation}\label{eq2-MRV}
 	\underset{\alpha\longrightarrow1}{\lim}\frac{\mathbb{P}\left(\bar{F}_{X_i}(X_i)<\bar{F}_{X_i}(tx_k),\bar{F}_{X_k}(X_k)<\bar{F}_{X_k}(x_k)\right)}{\bar{F}_{X_k}(x_k)}=\lambda_U^{ik}\left(\frac{c_i}{c_k}t^{-\theta},1\right)\/.
 	\end{equation}
 	In order to show that $$\underset{\alpha\longrightarrow 1}{\lim}\alpha\frac{\mathbb{E}[(X_i-x_i)_+1\!\!1_{\{X_k>x_k\}}]}{x_k\bar{F}_{X_k}(x_k)}=\int_{\frac{\beta_i}{\beta_k}}^{+\infty}\lambda_U^{ik}\left(\frac{c_i}{c_k}t^{-\theta},1\right)dt\/,$$
 	we may use the Lebesgue's Dominated Convergence Theorem with Potter's bounds (1942) (Lemma \ref{Potter}) for regularly varying functions.\\
 	First of all, 
 	\[ \frac{\mathbb{P}\left(\bar{F}_{X_i}(X_i)<\bar{F}_{X_i}(tx_k),\bar{F}_{X_k}(X_k)<\bar{F}_{X_k}(x_k)\right)}{\bar{F}_{X_k}(x_k)}\leq \min\left\{1,\frac{\bar{F}_{X_i}(tx_k)}{\bar{F}_{X_k}(x_k)}\right\} \/,\] 
 	since $\frac{\bar{F}_{X_i}(tx_k)}{\bar{F}_{X_k}(x_k)}=\frac{\bar{F}_{X_i}(tx_k)}{\bar{F}_{X_k}(tx_k)}\frac{\bar{F}_{X_k}(tx_k)}{\bar{F}_{X_k}(x_k)}$ and $\underset{\alpha\longrightarrow1}{\lim}\frac{\bar{F}_{X_i}(tx_k)}{\bar{F}_{X_k}(tx_k)}=\frac{c_i}{c_k}$, using Potter's bounds, for all $\varepsilon_1>0$ and $0<\varepsilon_2<\theta-1$, there exists $x_k^0(\varepsilon_2,\varepsilon_1)$  such that for  $\min\{x_k,tx_k\}\geq x_k^0(\varepsilon_2,\varepsilon_1)$
 	
	$$
 	\frac{\bar{F}_{X_i}(tx_k)}{\bar{F}_{X_k}(x_k)}\leq\left(\frac{c_i}{c_k}+2\varepsilon_1\right)t^{-\theta}\max(t^{\varepsilon_2},t^{-\varepsilon_2})\/.
 	$$
 	Lebesgue's theorem gives
	$$\underset{\alpha\longrightarrow1}{\lim}\intevariable{\frac{x_i}{x_k}}{+\infty}{\frac{\mathbb{P}\left(\bar{F}_{X_i}(X_i)<\bar{F}_{X_i}(tx_k),\bar{F}_{X_k}(X_k)<\bar{F}_{X_k}(x_k)\right)}{\bar{F}_{X_k}(x_k)}dt}=\int_{\frac{\beta_i}{\beta_k}}^{+\infty}{\lambda_U^{ik}\left(\frac{c_i}{c_k}t^{-\theta},1\right)dt}\/,$$
 	so, for all  $(i\neq k)\in\{1,\ldots,d\}^2$
 	$$\underset{\alpha\longrightarrow 1}{\lim}\frac{\mathbb{E}[(X_i-x_i)_+1\!\!1_{\{X_k>x_k\}}]}{x_k\bar{F}_{X_k}(x_k)}=\int_{\frac{\beta_i}{\beta_k}}^{+\infty}\lambda_U^{ik}\left(\frac{c_i}{c_k}t^{-\theta},1\right)dt\/.$$
 	Hence the system announced in this proposition.	
 \end{proof}
 In the general case of $\Sigma$-expectiles, with $\Sigma=(\pi_{ij})_{i\/,j=1\/,\ldots\/,d}$, $\pi_{ij}\geq  0$, $\pi_{ii}=\pi_i>0$, System (\ref{EqInt1}) becomes 
 \begin{equation*}
 \frac{1}{\theta-1}-\eta\frac{(\beta_k)^\theta}{c_k}=-\sum_{i=1,i\neq k}^{d}\frac{\pi_{ik}}{\pi_{k}}\left(\int_{\frac{\beta_i}{\beta_k}}^{+\infty}\lambda_U^{ik}\left(\frac{c_i}{c_k}t^{-\theta},1\right)dt-\eta\frac{\beta_k^{\theta-1}}{c_k}\beta_i\right), \forall k\in\{1,\ldots,d\}\/.
 \end{equation*}
 Moreover, let us remark that  System (\ref{EqInt1}) is  equivalent to the following system
 \begin{equation}\label{EqInt1-lim}
 \sum_{i=1}^{d}\int_{\frac{\beta_i}{\beta_k}}^{+\infty}\lambda_U^{ik}\left(c_it^{-\theta},c_k\beta_k^{-\theta}\right)dt=\sum_{i=1}^{d}\int_{\beta_i}^{+\infty}\lambda_U^{i1}\left(c_it^{-\theta},1\right)dt, \forall k\in\{2,\ldots,d\}\/.
 \end{equation}
 The limit points $\beta_i$ are thus completely determined by  the  asymptotic bivariate dependencies between the marginal components of the vector $\mathbf{X}$.
 \begin{Proposition}\label{PropInt1inf}
 	Assume that H1 and H2 are satisfied and the multivariate distribution of $\mathbf{X}$ is regularly varying in the sense of Definition \ref{MRVd1}, consider the $L_1$-expectiles $\mathbf{e}_\alpha(\mathbf{X})= (\mathbf{e}_\alpha^i(\mathbf{X}))_{i=1\/,\ldots\/,d}$. Then any limit vector $(\eta,\beta_2,\ldots,\beta_d)$ of $\left(\frac{1-\alpha}{\bar{F}_{X_1}(\mathbf{e}_\alpha^1(\mathbf{X}))},\frac{\mathbf{e}_\alpha^2(\mathbf{X})}{\mathbf{e}_\alpha^1(\mathbf{X})},\ldots,\frac{\mathbf{e}_\alpha^d(\mathbf{X})}{\mathbf{e}_\alpha^1(\mathbf{X})}\right)$ satisfies the following system of equations, $ \forall k\in\{1,\ldots,d\}$
 	\begin{equation}\label{EqInt2}
 	\frac{1}{\theta-1}-\eta\frac{(\beta_k)^\theta}{c_k}=-\sum_{i=1,i\neq k}^{d}\left(\frac{c_i}{c_k}\left(\frac{\beta_i}{\beta_k}\right)^{-\theta+1}\intevariable{1}{+\infty}{\lambda_U^{ik}\left(t^{-\theta},\frac{c_k}{c_i}\left(\frac{\beta_k}{\beta_i}\right)^{-\theta}\right)dt}-\eta\frac{\beta_k^{\theta-1}}{c_k}\beta_i\right)\/. 
 	\end{equation}
 \end{Proposition}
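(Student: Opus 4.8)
The statement of Proposition~\ref{PropInt1inf} is just System~(\ref{EqInt1}) of Proposition~\ref{PropExpEx} rewritten after a change of variables in the integral term, so the plan is to start from (\ref{EqInt1}) and perform that substitution on each integral $\int_{\beta_i/\beta_k}^{+\infty}\lambda_U^{ik}\!\left(\frac{c_i}{c_k}t^{-\theta},1\right)dt$. Concretely, in the $(i,k)$-th summand I set $t=\frac{\beta_i}{\beta_k}s$, so $dt=\frac{\beta_i}{\beta_k}ds$ and the lower limit $t=\beta_i/\beta_k$ becomes $s=1$, while the upper limit stays $+\infty$. Then
$$
\int_{\frac{\beta_i}{\beta_k}}^{+\infty}\lambda_U^{ik}\!\left(\frac{c_i}{c_k}t^{-\theta},1\right)dt
=\frac{\beta_i}{\beta_k}\int_{1}^{+\infty}\lambda_U^{ik}\!\left(\frac{c_i}{c_k}\Bigl(\frac{\beta_i}{\beta_k}\Bigr)^{-\theta}s^{-\theta},1\right)ds .
$$

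**Key step: rescaling the arguments of $\lambda_U^{ik}$.** The tail dependence function $\lambda_U^k$ defined in (\ref{LUC}) is homogeneous of degree $1$: from $\lambda_U^k(u_1,\dots,u_k)=\lim_{t\to0^+}\bar C^{(k)}(tu_1,\dots,tu_k)/t$ one gets $\lambda_U^k(au_1,\dots,au_k)=a\,\lambda_U^k(u_1,\dots,u_k)$ for every $a>0$, which in the bivariate case reads $\lambda_U^{ik}(au,av)=a\,\lambda_U^{ik}(u,v)$. Applying this with $a=\frac{c_i}{c_k}\bigl(\frac{\beta_i}{\beta_k}\bigr)^{-\theta}$ inside the last integral,
$$
\lambda_U^{ik}\!\left(\frac{c_i}{c_k}\Bigl(\frac{\beta_i}{\beta_k}\Bigr)^{-\theta}s^{-\theta},1\right)
=\frac{c_i}{c_k}\Bigl(\frac{\beta_i}{\beta_k}\Bigr)^{-\theta}\,\lambda_U^{ik}\!\left(s^{-\theta},\frac{c_k}{c_i}\Bigl(\frac{\beta_i}{\beta_k}\Bigr)^{\theta}\right)
=\frac{c_i}{c_k}\Bigl(\frac{\beta_i}{\beta_k}\Bigr)^{-\theta}\,\lambda_U^{ik}\!\left(s^{-\theta},\frac{c_k}{c_i}\Bigl(\frac{\beta_k}{\beta_i}\Bigr)^{-\theta}\right).
$$
Multiplying by the prefactor $\frac{\beta_i}{\beta_k}$ coming from $dt$ produces exactly $\frac{c_i}{c_k}\bigl(\frac{\beta_i}{\beta_k}\bigr)^{-\theta+1}\int_{1}^{+\infty}\lambda_U^{ik}\!\left(s^{-\theta},\frac{c_k}{c_i}\bigl(\frac{\beta_k}{\beta_i}\bigr)^{-\theta}\right)ds$, which is the integral term of (\ref{EqInt2}). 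Substituting this back into (\ref{EqInt1}) term by term, and noting that the left-hand side and the $-\eta\frac{\beta_k^{\theta-1}}{c_k}\beta_i$ terms are unchanged, yields (\ref{EqInt2}).

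**Where the work is.** There is essentially no analytic difficulty here: the only thing to justify carefully is the homogeneity of $\lambda_U^{ik}$, and the only thing to check is that the rescaling constant $a$ is strictly positive (it is, since $c_i,c_k>0$ and, by Proposition~\ref{PropLimits1}, the ratios $\beta_i$ are finite and positive), so that the substitution and the application of homogeneity are legitimate. One should also observe that the integral $\int_1^{+\infty}\lambda_U^{ik}(s^{-\theta},\cdot)\,ds$ is finite — this follows from the bound $\lambda_U^{ik}(s^{-\theta},v)\le s^{-\theta}$ together with $\theta>1$, exactly as in the dominated-convergence argument of the proof of Proposition~\ref{PropExpEx} — which guarantees that all manipulations of the (finite) integrals are valid. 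Thus the proof is a one-line change of variables plus an invocation of the scaling property of tail dependence functions, and the ``main obstacle'' is merely bookkeeping of the constants $c_i/c_k$ and $\beta_i/\beta_k$.
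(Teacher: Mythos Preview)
Your proposal is correct and follows exactly the route indicated in the paper: the paper's proof simply states that the result is obtained by a substitution in System~(\ref{EqInt1}) together with the positive homogeneity of the bivariate tail dependence functions $\lambda_U^{ik}$, and your write-up carries out precisely that substitution $t=\frac{\beta_i}{\beta_k}s$ and homogeneity step with the constants tracked correctly. There is nothing to add.
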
	
 \begin{proof}
 	The proof is straightforward using a substitution in System (\ref{EqInt1}) and the positive homogeneity property of the bivariate tail dependence functions $\lambda_U^{ik}$ (see Proposition 2.2 in \cite{tailVine}).
 \end{proof}
 The main utility of writing the asymptotic optimality system in the form (\ref{EqInt2}) is the possibility to give an explicit form to $(\eta,\beta_2,\ldots,\beta_d)$  for some  dependence structures.
 \begin{description}
 	\item[Example] Consider that the dependence structure of $\mathbf{X}$ is given by an  Archimedean copula with generator $\psi$. The survival copula is given by  
 		$$
 		\bar{C}(x_1,\ldots,x_d)=\psi(\psi^{\leftharpoonup}(x_1)+\cdots+\psi^{\leftharpoonup}(x_d)) \/,
 		$$
 		where $\psi^{\leftharpoonup}(x)=\inf\{t\geq0| \psi(t)\leq x\}$ (see e.g. \cite{MultivArchi} for more details).  Assume that, $\psi$ is a regularly varying function with non-positive index  $\psi\in\mbox{RV}_{-\theta_\psi}$. According to \cite{charpentier2009}, the right tail dependence functions exist, and one can get their explicit forms $$\lambda_U^k(x_1,\ldots,x_k)=\left(\sum_{i=1}^{k}x_i^{-\frac{1}{\theta_\psi}}\right)^{-\theta_\psi} \/.$$
 	Thus, the bivariate upper tail dependence functions are given by 
 	$$\lambda_U^{ik}\left(t^{-\theta},\frac{c_k}{c_i}\left(\frac{\beta_k}{\beta_i}\right)^{-\theta}\right)=\left(t^{\frac{\theta}{\theta_\psi}}+\left(\frac{c_i}{c_k}\right)^{\frac{1}{\theta_\psi}}\left(\frac{\beta_k}{\beta_i}\right)^{\frac{\theta}{\theta_\psi}}\right)^{-\theta_\psi}\/.$$
 	In particular, if $\theta=\theta_\psi$, we have
 	$$\intevariable{1}{+\infty}{\lambda_U^{ik}\left(t^{-\theta},\frac{c_k}{c_i}\left(\frac{\beta_k}{\beta_i}\right)^{-\theta}\right)dt}=\frac{1}{\theta-1}\left(1+\left(\frac{c_i}{c_k}\right)^{\frac{1}{\theta}}\frac{\beta_k}{\beta_i}\right)^{-\theta+1}\/,$$
 	and System \ref{EqInt2} becomes 
 	\begin{equation*}
 	\frac{1}{\theta-1}-\eta\frac{(\beta_k)^\theta}{c_k}=-\sum_{i=1,i\neq k}^{d}\left(\frac{1}{\theta-1}\frac{c_i}{c_k}\left(\frac{\beta_i}{\beta_k}+\left(\frac{c_i}{c_k}\right)^{\frac{1}{\theta}}\right)^{-\theta+1}-\eta\frac{\beta_k^{\theta-1}}{c_k}\beta_i\right)\/. 
 	\end{equation*} \hfill$\Box$
  \end{description}
 \begin{lemma}[The comonotonic Fréchet case]\label{Frechet-Como}
 	Under H1 and H2, consider the $L_1$-expectiles $\mathbf{e}_\alpha(\mathbf{X})= (\mathbf{e}_\alpha^i(\mathbf{X}))_{i=1\/,\ldots\/,d}$. If $\mathbf{X}=(X_1,\ldots,X_d)$ is a comonotonic random vector, then the limit  $$(\eta,\beta_2,\ldots,\beta_d)=\underset{\alpha\longrightarrow1}{\lim}\left(\frac{1-\alpha}{\bar{F}_{X_1}(\mathbf{e}_\alpha^1(\mathbf{X}))},\frac{\mathbf{e}_\alpha^2(\mathbf{X})}{\mathbf{e}_\alpha^1(\mathbf{X})},\ldots,\frac{\mathbf{e}_\alpha^d(\mathbf{X})}{\mathbf{e}_\alpha^1(\mathbf{X})}\right)\/,$$
 	satisfies
 	$$\underset{\alpha\longrightarrow 1}{\lim}\frac{1-\alpha}{\bar{F}_{X_k}(\mathbf{e}_\alpha^k(\mathbf{X}))}=\frac{1}{\theta-1}~\mbox{ and }~\beta_k=c_k^{1/\theta},~~\forall k\in\{1,\ldots,d\}\/.$$ 
 \end{lemma}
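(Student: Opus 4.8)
The plan is to specialise the asymptotic optimality system of Proposition~\ref{PropExpEx} to the comonotone case and solve it explicitly. First I would note that if $\mathbf{X}$ is comonotone then so is every pair $(X_i,X_k)$, so its survival copula is the Fr\'echet--Hoeffding upper bound and, by~(\ref{LUC}), $\lambda_U^{ik}(u,v)=\min(u,v)$ for all $(i,k)\in\{1,\ldots,d\}^2$; these functions are continuous and exist, hence by Theorem~\ref{MRVcopules} $\mathbf{X}$ is MRV and Proposition~\ref{PropExpEx} applies. Setting $c_1=1$ and $\beta_1=1$ and substituting $\lambda_U^{ik}=\min$ into System~(\ref{EqInt1}), I would then verify that the candidate $\eta=\frac{1}{\theta-1}$, $\beta_k=c_k^{1/\theta}$ is a solution: with these values the lower limit $\beta_i/\beta_k=(c_i/c_k)^{1/\theta}$ is exactly the abscissa where $\frac{c_i}{c_k}t^{-\theta}$ equals $1$, so $\lambda_U^{ik}\!\left(\frac{c_i}{c_k}t^{-\theta},1\right)=\frac{c_i}{c_k}t^{-\theta}$ on the whole interval of integration and, since $c_k=\beta_k^{\theta}$,
\[
\int_{\beta_i/\beta_k}^{+\infty}\lambda_U^{ik}\!\left(\frac{c_i}{c_k}t^{-\theta},1\right)dt=\frac{1}{\theta-1}\,\frac{\beta_i}{\beta_k}=\eta\,\frac{\beta_k^{\theta-1}}{c_k}\,\beta_i .
\]
Thus every summand on the right-hand side of~(\ref{EqInt1}) vanishes, while the left-hand side equals $\frac{1}{\theta-1}-\eta\frac{\beta_k^{\theta}}{c_k}=0$, so the candidate solves~(\ref{EqInt1}) for each $k$.

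To pass from being a solution to being the limit, I would use that by Propositions~\ref{PropLimits1} and~\ref{PropLimits1V1} the family $\left(\frac{1-\alpha}{\bar{F}_{X_1}(\mathbf{e}^1_\alpha(\mathbf{X}))},\frac{\mathbf{e}^2_\alpha(\mathbf{X})}{\mathbf{e}^1_\alpha(\mathbf{X})},\ldots,\frac{\mathbf{e}^d_\alpha(\mathbf{X})}{\mathbf{e}^1_\alpha(\mathbf{X})}\right)$ stays in a compact subset of $(0,\infty)^d$ as $\alpha\to1$, so it suffices to know that the comonotone system~(\ref{EqInt1}) has a unique solution there, whence every subsequential limit equals the one above and the full limit exists. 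Uniqueness can be checked by a monotonicity analysis of~(\ref{EqInt1}) after the substitution $\gamma_i=\beta_i^{\theta}/c_i$; alternatively, and more transparently, one can sidestep the system entirely: for a comonotone vector $X_i=F_{X_i}^{\longleftarrow}(U)$ with a single $U\sim\mathcal{U}(0,1)$, and since the $F_{X_i}$ are continuous the events $\{X_i\le F_{X_i}^{\longleftarrow}(p)\}$ all coincide with $\{U\le p\}$; hence, under the ansatz $x_i=F_{X_i}^{\longleftarrow}(p)$, one has $(X_i-x_i)_+\,1\!\!1_{\{X_k>x_k\}}=(X_i-x_i)_+$ and $(x_i-X_i)_+\,1\!\!1_{\{x_k>X_k\}}=(x_i-X_i)_+$ a.s.\ for all $i,k$, so the optimality system~(\ref{eq1}) with $\pi_{ij}=1$ collapses, for every $k$, to the single scalar equation
\[
\alpha\sum_{i=1}^{d}\mathbb{E}[(X_i-F_{X_i}^{\longleftarrow}(p))_+]=(1-\alpha)\sum_{i=1}^{d}\mathbb{E}[(F_{X_i}^{\longleftarrow}(p)-X_i)_+].
\]
Its left side decreases from a positive value to $0$ and its right side increases from $0$ to $+\infty$ as $p\uparrow1$, so there is a unique root $p_\alpha$, and $p_\alpha\to1$; by uniqueness of the expectile this forces $\mathbf{e}^k_\alpha(\mathbf{X})=F_{X_k}^{\longleftarrow}(p_\alpha)$, hence $\bar{F}_{X_k}(\mathbf{e}^k_\alpha(\mathbf{X}))=1-p_\alpha$ for every $k$.

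From here the two conclusions follow quickly. Since $\bar{F}_{X_k}(\mathbf{e}^k_\alpha(\mathbf{X}))=1-p_\alpha$ is the same for all $k$, Karamata's Theorem in the form~(\ref{EqFrechetSL}) gives $\mathbb{E}[(X_i-\mathbf{e}^i_\alpha(\mathbf{X}))_+]\sim\frac{\mathbf{e}^i_\alpha(\mathbf{X})(1-p_\alpha)}{\theta-1}$, while the elementary identity $\mathbb{E}[(x-X_i)_+]=x-\mathbb{E}[X_i]+\mathbb{E}[(X_i-x)_+]$ gives $\mathbb{E}[(\mathbf{e}^i_\alpha(\mathbf{X})-X_i)_+]\sim\mathbf{e}^i_\alpha(\mathbf{X})$; substituting into the scalar equation yields $\frac{\alpha}{1-\alpha}\sim\frac{\theta-1}{1-p_\alpha}$, i.e.\ $\frac{1-\alpha}{\bar{F}_{X_k}(\mathbf{e}^k_\alpha(\mathbf{X}))}=\frac{1-\alpha}{1-p_\alpha}\to\frac{1}{\theta-1}$ for every $k$ (in particular $\eta=\frac{1}{\theta-1}$). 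For the ratios, $\bar{F}_{X_k}(\mathbf{e}^k_\alpha(\mathbf{X}))=\bar{F}_{X_1}(\mathbf{e}^1_\alpha(\mathbf{X}))$ together with $\bar{F}_{X_k}\in\mbox{RV}_{-\theta}(+\infty)$ and $\bar{F}_{X_k}(x)\sim c_k\bar{F}_{X_1}(x)$ forces $\mathbf{e}^k_\alpha(\mathbf{X})/\mathbf{e}^1_\alpha(\mathbf{X})$ to stay bounded away from $0$ and $\infty$ and every subsequential limit $\ell$ to satisfy $\ell^{\theta}=c_k$; hence $\beta_k=c_k^{1/\theta}$ (this is the inversion argument behind Lemma~\ref{Lemme+inf}, cf.\ also Lemma~\ref{inevrseRV}).

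The step I expect to be the main obstacle is the comonotone reduction in the second paragraph: one must handle generalized inverses carefully --- flat parts of the $F_{X_i}$ --- to justify rigorously that $\{X_i\le F_{X_i}^{\longleftarrow}(p)\}=\{U\le p\}$ up to a null set, so that the indicator factors in~(\ref{eq1}) are indeed redundant under the ansatz. Granting this, the remainder is routine bookkeeping with Karamata's theorem and regular variation.
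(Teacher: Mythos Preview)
Your proposal is correct. The first half---computing $\lambda_U^{ik}=\min$ and verifying that $\eta=\frac{1}{\theta-1}$, $\beta_k=c_k^{1/\theta}$ satisfies the asymptotic system---is exactly the paper's route (the paper substitutes into System~(\ref{EqInt2}) rather than~(\ref{EqInt1}), but the computation is the same). Where you diverge is on uniqueness: the paper simply writes ``$\eta=\frac{1}{\theta-1}$ and $\beta_k=c_k^{1/\theta}$ is the only solution to this system'' without further argument, whereas you supply a genuine proof via the comonotone reduction. Your observation that, for $x_i=F_{X_i}^{\longleftarrow}(p)$, the indicators in~(\ref{eq1}) are redundant and the $d$ equations collapse to a single scalar equation in $p$ is a more elementary and more informative argument than the paper's: it yields the exact structural fact $\mathbf{e}^k_\alpha(\mathbf{X})=F_{X_k}^{\longleftarrow}(p_\alpha)$ for every $\alpha$, not just asymptotically, and from there the limits drop out of Karamata and Lemma~\ref{Lemme+inf} without ever touching the asymptotic system. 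The paper's approach, by contrast, stays within the machinery of Proposition~\ref{PropExpEx} and relies on an unproved uniqueness claim for the limiting nonlinear system.

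The obstacle you flag is milder than you fear: the paper works throughout with continuous marginals (this is explicit in Theorems~\ref{Th-MRV1} and~\ref{MRVcopules} and implicit wherever tail dependence functions are used), so $F_{X_i}$ has no flat parts on the relevant range and $\{X_i>F_{X_i}^{\longleftarrow}(p)\}=\{U>p\}$ holds up to a $\mathbb{P}$-null set without delicate generalized-inverse bookkeeping.
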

 \begin{proof}
 	Since the random vector $\mathbf{X}$ is comonotonic, its survival copula is $$\overline{C}_{\mathbf{X}}(u_1,\ldots,u_d)=\min(u_1,\ldots,u_d), ~~\forall (u_1,\ldots,u_d)\in[0,1]^d\/.$$
 	We deduce the expression of the functions $\lambda_U^{ij}$ 
 	$$\lambda_U^{ij}(x_i,x_j)=\min(x_i,x_j), ~\forall (x_i,x_j)\in\mathbb{R}_+^2, \forall i,j\in\{1,\ldots,d\}\/.$$
 	So, 
 	\begin{align*}
 	\intevariable{1}{+\infty}{\lambda_U^{ik}\left(t^{-\theta},\frac{c_k}{c_i}(\frac{\beta_k}{\beta_i})^{-\theta}\right)dt}&=\intevariable{1}{+\infty}{\min\left(t^{-\theta},\frac{c_k}{c_i}\left(\frac{\beta_k}{\beta_i}\right)^{-\theta}\right)dt}\\
 	&=\left(\frac{\beta_k}{\beta_i}\left(\frac{c_k}{c_i}\right)^{-\frac{1}{\theta}}-1\right)_+\frac{c_k}{c_i}\left(\frac{\beta_k}{\beta_i}\right)^{-\theta}\\&~~+\frac{1}{\theta-1}\left(1+\left(\frac{\beta_k}{\beta_i}\left(\frac{c_k}{c_i}\right)^{-\frac{1}{\theta}}-1\right)_+\right)^{-\theta+1}\/.
 	\end{align*} 
 	Under assumptions H1 and H2, and by Proposition \ref{EqInt2}, let $(\eta,\beta_2,\ldots,\beta_d)$ be a solution of the following equation system.
 	\begin{align*}
 	\eta\sum_{i=1}^{d}\beta_i-\frac{1}{\theta-1}\sum_{i=1}^{d}c_i\beta^{-\theta+1}_i&=\sum_{i=1,i\neq k}^{d}c_k\beta_k^{-\theta}\beta_i\left(\frac{\beta_k}{\beta_i}\left(\frac{c_k}{c_i}\right)^{-\frac{1}{\theta}}-1\right)_+\\
 	&~+\frac{1}{\theta-1}\sum_{i=1,i\neq k}^{d}c_i\beta^{-\theta+1}_i\left[\left(1+\left(\frac{\beta_k}{\beta_i}\left(\frac{c_k}{c_i}\right)^{-\frac{1}{\theta}}-1\right)_+\right)^{-\theta+1}-1\right]\/, 
 	\end{align*}
 	$\forall k\in\{1,\ldots,d\}$. $\eta=\frac1{\theta-1}$ and $\beta_k = c_k^{\frac1\theta}$ is the only solution to this system.  	
 \end{proof}
 \begin{Proposition}[Asymptotic independence case]\label{Frechet-IndAsy}
 Under H1 and H2, consider the $L_1$-expectiles $\mathbf{e}_\alpha(\mathbf{X})= (\mathbf{e}_\alpha^i(\mathbf{X}))_{i=1\/,\ldots\/,d}$. If  $\mathbf{X}=(X_1,\ldots,X_d)$	is such that the pairs $(X_i\/,X_j)$ are asymptotically independent, then the limit vector $(\eta,\beta_2,\ldots,\beta_d)$ of  $\left(\frac{1-\alpha}{\bar{F}_{X_1}(\mathbf{e}_\alpha^1(\mathbf{X}))},\frac{\mathbf{e}_\alpha^2(\mathbf{X})}{\mathbf{e}_\alpha^1(\mathbf{X})},\ldots,\frac{\mathbf{e}_\alpha^d(\mathbf{X})}{\mathbf{e}_\alpha^1(\mathbf{X})}\right)$ satisfies
 	$$\eta=\frac{1}{(\theta-1)\left(1+\displaystyle\sum_{j=2}^{d}c_j^\frac{1}{\theta-1}\right)} 
 	\mbox{ and }
 	\beta_k=c_k^{\frac{1}{\theta-1}}
 	\/,$$
 	for all $k\in\{1,\ldots,d\}$. 
 \end{Proposition}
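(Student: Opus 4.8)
The plan is to specialize the asymptotic optimality system of Proposition~\ref{PropExpEx} to the situation where every pair $(X_i,X_j)$ is asymptotically independent, which annihilates all the off-diagonal coupling terms, and then to solve the resulting purely algebraic system by hand. Before that one checks that the hypotheses of Proposition~\ref{PropExpEx} are met: asymptotic independence of all pairs means the bivariate survival copulas are asymptotically those of independence, so $\lambda_U^{ij}\equiv0$ on $(0,\infty)^2$ for $i\neq j$ by (\ref{LUC}); since $\lambda_U^k\le\lambda_U^{ij}$ for any pair contained in the $k$-tuple, also $\lambda_U^k\equiv0$ for every $k\ge2$, and Theorem~\ref{MRVcopules} then guarantees that $\mathbf{X}$ is MRV, so Proposition~\ref{PropExpEx} applies.

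Next I would substitute $\lambda_U^{ik}\equiv0$ into system (\ref{EqInt1}). The relevant integrals vanish: on the range $t\in[\beta_i/\beta_k,+\infty)$ with $\beta_i/\beta_k>0$ (Proposition~\ref{PropLimits1}) both arguments $\tfrac{c_i}{c_k}t^{-\theta}$ and $1$ are strictly positive, so $\int_{\beta_i/\beta_k}^{+\infty}\lambda_U^{ik}\!\left(\tfrac{c_i}{c_k}t^{-\theta},1\right)dt=0$. Using $\beta_1=1$ (because $\beta_i=\lim_{\alpha\to1}\mathbf{e}_\alpha^i(\mathbf{X})/\mathbf{e}_\alpha^1(\mathbf{X})$) and $c_1=1$, system (\ref{EqInt1}) reduces, for each $k\in\{1,\dots,d\}$, to
\[
\frac{1}{\theta-1}-\eta\,\frac{\beta_k^{\theta}}{c_k}=\eta\,\frac{\beta_k^{\theta-1}}{c_k}\sum_{i=1,\,i\neq k}^{d}\beta_i .
\]
Multiplying by $c_k/\beta_k^{\theta-1}$ and writing $S:=\sum_{i=1}^{d}\beta_i$, this becomes
\[
\frac{c_k}{(\theta-1)\,\beta_k^{\theta-1}}=\eta\,S ,\qquad\forall k\in\{1,\dots,d\},
\]
whose right-hand side is independent of $k$. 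The case $k=1$ gives $\eta S=\tfrac{1}{\theta-1}$; feeding this back gives $\beta_k^{\theta-1}=c_k$, i.e. $\beta_k=c_k^{1/(\theta-1)}$ (the unique positive root, as $\theta-1>0$), hence $S=1+\sum_{j=2}^{d}c_j^{1/(\theta-1)}$ and $\eta=\tfrac{1}{(\theta-1)\,S}$, which are precisely the announced values.

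To finish, I would observe that Propositions~\ref{PropLimits1} and~\ref{PropLimits1V1} confine the family $\bigl(\tfrac{1-\alpha}{\bar F_{X_1}(\mathbf{e}_\alpha^1(\mathbf{X}))},\tfrac{\mathbf{e}_\alpha^2(\mathbf{X})}{\mathbf{e}_\alpha^1(\mathbf{X})},\dots,\tfrac{\mathbf{e}_\alpha^d(\mathbf{X})}{\mathbf{e}_\alpha^1(\mathbf{X})}\bigr)$ to a compact subset of $(0,\infty)^d$, so it admits limit points; by Proposition~\ref{PropExpEx} every such limit point solves the reduced system above, and since that system has a unique solution the limit exists and equals $(\eta,\beta_2,\dots,\beta_d)$. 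I do not expect a serious obstacle here: the only delicate analytic step, the interchange of limit and integral, is already performed inside the proof of Proposition~\ref{PropExpEx}, and with all $\lambda_U^{ik}$ vanishing nothing remains but the elementary algebra above — the one real point being the remark that a $k$-independent right-hand side in the reduced system forces the power law $\beta_k=c_k^{1/(\theta-1)}$.
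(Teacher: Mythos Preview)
Your proof is correct and follows essentially the same route as the paper: both reduce the asymptotic system to $\frac{c_k}{(\theta-1)\beta_k^{\theta-1}}=\eta\sum_{i=1}^d\beta_i$ for all $k$, then solve it identically by noting the right-hand side is $k$-independent. The only difference is in how the cross terms are shown to vanish: you first establish MRV via Theorem~\ref{MRVcopules} (from $\lambda_U^{ij}\equiv0$, hence $\lambda_U^k\equiv0$) and then invoke Proposition~\ref{PropExpEx} with zero integrands, whereas the paper bypasses the MRV hypothesis entirely and directly reruns the dominated-convergence step from the proof of Proposition~\ref{PropExpEx} to get $\lim_{\alpha\to1}\frac{\mathbb{E}[(X_i-x_i)_+1\!\!1_{\{X_j>x_j\}}]}{x_j\bar F_{X_j}(x_j)}=0$. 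Your closing compactness remark, turning ``every limit point solves a system with a unique solution'' into actual convergence, is a clean addition that the paper leaves implicit.
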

 \begin{proof}
The hypothesis of asymptotic bivariate independence means:
 	$$\underset{\alpha\longrightarrow1}{\lim}\frac{\mathbb{P}(X_i>x_i,X_j>x_j)}{\mathbb{P}(X_j>x_j)}=\underset{\alpha\longrightarrow1}{\lim}\frac{\mathbb{P}(X_i>tx_j,X_j>x_j)}{\mathbb{P}(X_j>x_j)}=0\/,$$
 	for all $(i,j)\in\{1,\ldots,d\}^2$ and for all $t>0$, then, Lebesgue's Theorem  used as in Proposition \ref{PropExpEx} gives
 	\begin{align*}
 	\underset{\alpha\longrightarrow 1}{\lim}\frac{\mathbb{E}[(X_i-x_i)_+1\!\!1_{\{X_j>x_j\}}]}{x_j\bar{F}_{X_j(x_j)}}&=\underset{\alpha\longrightarrow1}{\lim}\intevariable{\frac{x_i}{x_j}}{+\infty}{\frac{\mathbb{P}(X_i>tx_j,X_j>x_j)}{\mathbb{P}(X_j>x_j)}dt}\\
 	&=0\/.
 	\end{align*}
 	The extreme multivariate expectile  verifies  the following equation system 
 	\[ \frac{1}{\theta-1}-\frac{\eta}{c_k}\beta_k^{\theta}=+\sum_{i=1,i\neq k}^{d}\frac{\eta}{c_k}\beta_k^{\theta-1}\beta_i,~~ \forall k\in\{1,\ldots,d\}\/,\]
 	which can be rewritten as 
 	\begin{equation}\label{Eq-IndepAsy}
 	\frac{c_k}{\eta(\theta-1)\beta_k^{\theta-1}}=\sum_{i=1}^{d}\beta_i,~~\forall k\in\{1,\ldots,d\}\/,
 	\end{equation} 
 	hence $\beta_k=c_k^{\frac{1}{\theta-1}}$ for all $k\in\{1,\ldots,d\}$, and 
 	$$\eta=\frac{1}{(\theta-1)\left(1+\displaystyle\sum_{j=2}^{d}c_j^\frac{1}{\theta-1}\right)}\/.$$
 	
 \end{proof}
 In the general case of a matrix of positive coefficients 
 $\pi_{ij},~~ i,j\in\{1,\ldots,d\}$, the limits $\beta_i,i=2,\ldots,d$ remain the same, but the limit $\eta$ will change:
 $$\underset{\alpha\longrightarrow1}{\lim}\frac{\mathbf{e}_\alpha^k(\mathbf{X})}{\mathbf{e}_\alpha^1(\mathbf{X})}=c_k^{\frac{1}{\theta-1}}
 \mbox{ and }
 \underset{\alpha\longrightarrow 1}{\lim}\frac{1-\alpha}{\bar{F}_{X_k}(\mathbf{e}_\alpha^k(\mathbf{X}))}=\frac{c_k^{\frac{1}{\theta-1}}}{(\theta-1)\left(1+\displaystyle\sum_{j=2}^{d}\frac{\pi_{jk}}{\pi_{k}}c_j^\frac{1}{\theta-1}\right)}\/,$$
 for all $k\in\{1,\ldots,d\}$. \\
 We remark that
 $$\underset{\alpha\longrightarrow1}{\lim}\frac{1-\alpha}{\bar{F}_{X_i}(x_i)}\leq\frac{c_i^{\frac{1}{\theta-1}}}{\theta-1}\/,$$
 which allows a comparison between the marginal quantile and the corresponding component of the multivariate expectile, and since $F_{X_k}^{-1}(1-\cdot)$ is a regularly varying function at $0$ for all $k\in\{1,\ldots,d\}$ with index $-\frac{1}{\theta}$ (see Lemma~\ref{inevrseRV}), we get 
 $$e_\alpha^k(\mathbf{X})\underset{\alpha\longrightarrow1}{\sim}\mbox{VaR}_\alpha(X_k)\left(\theta-1\right)^{-\frac{1}{\theta}}\left(\frac{1+\displaystyle\sum_{i=2}^{d}c_i^\frac{1}{\theta-1}}{c_k^{\frac{1}{\theta-1}}}\right)^{-\frac{1}{\theta}}\/,$$
where $\mbox{VaR}_\alpha(X_k)$ denotes the Value at Risk of $X_k$ at level $\alpha$, ie the $\alpha$-quantile $F^{\leftarrow}_{X_k}(\alpha)$ of $X_k$. These conclusions coincide with the results obtained in dimension 1, for distributions that belong to the domain of attraction of Fréchet, in  \cite{belliniElena}. The values of constants $c_i$ determine the position of the marginal quantile compared to the corresponding component of the multivariate expectile for each risk.
 

 \section{Fréchet model with a dominant tail}
 This section is devoted to the case where $X_1$ has a dominant tail with respect to the $X_i$'s.
\begin{Proposition}[Asymptotic dominance]\label{Dom-Frechet}
	Under H1, consider the $L_1$-expectiles $\mathbf{e}_\alpha(\mathbf{X}) =(\mathbf{e}_\alpha^i(\mathbf{X}))_{i=1\/,\ldots\/,d}$. If  $$\underset{x\uparrow+\infty}{\lim}\frac{\bar{F}_{X_i}(x)}{\bar{F}_{X_1}(x)}=0,~~\forall i\in\{2,\ldots,d\}\/, \mbox{(dominant tail hypothesis)}$$
	then 
	$$\beta_i=\underset{\alpha\uparrow1}{\lim}\frac{\mathbf{e}_\alpha^i(\mathbf{X})}{\mathbf{e}_\alpha^1(\mathbf{X})}=0,~~~\underset{\alpha\uparrow1}{\lim}\frac{1-\alpha}{\bar{F}_{X_i}(\mathbf{e}_\alpha^i(\mathbf{X}))}=0,~~\forall i\in\{2,\ldots,d\}\/,$$
	and $$
	\underset{\alpha\uparrow1}{\lim}\frac{1-\alpha}{\bar{F}_{X_1}(\mathbf{e}_\alpha^1(\mathbf{X}))}=\frac{1}{\theta-1}\/.
	$$	
\end{Proposition}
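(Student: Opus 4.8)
The plan is to push the optimality system (\ref{S-functionsL2}) (with $\pi_{ij}\equiv 1$) to the limit, writing $x_k=\mathbf{e}_\alpha^k(\mathbf{X})$. Since $X_1\in\mbox{MDA}(\Phi_\theta)$ we have $x_1\to+\infty$ (while $x_i\to x^i_F$) as $\alpha\to1$, so Karamata's theorem (Theorem~\ref{Th-Karamata}, cf.\ (\ref{EqFrechetSL})) applied to $\bar F_{X_1}\in\mbox{RV}_{-\theta}(+\infty)$, $\theta>1$, gives $\mathbb{E}[(X_1-x_1)_+]\sim x_1\bar F_{X_1}(x_1)/(\theta-1)$. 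Using $l^\alpha_{X_1}(x_1)=(2\alpha-1)\mathbb{E}[(X_1-x_1)_+]-(1-\alpha)(x_1-\mathbb{E}X_1)$ and dividing the $k=1$ equation of (\ref{S-functionsL2}) by $x_1\bar F_{X_1}(x_1)$, one obtains, along any subsequence on which $\eta:=\lim_{\alpha\to1}\tfrac{1-\alpha}{\bar F_{X_1}(x_1)}\in[0,+\infty]$ exists,
\begin{equation}\label{plan-k1}
\frac{1}{\theta-1}-\eta \;=\; -\sum_{i=2}^{d}\lim_{\alpha\to1}\frac{l^\alpha_{X_i,X_1}(x_i,x_1)}{x_1\bar F_{X_1}(x_1)},
\end{equation}
provided the cross–term limits on the right exist.

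The core of the proof — and the step I expect to be the main obstacle — is to show simultaneously that $x_i=o(x_1)$ for every $i\ge2$ and that $\eta\in(0,+\infty)$. I would do this by contradiction, imitating the proof of Proposition~\ref{PropLimits1}: supposing that, along a subsequence, $x_i/x_1\to\ell_i\in(0,+\infty]$ for $i$ in a nonempty set $J$ (or that $\eta\in\{0,+\infty\}$), one analyses the $k=i$ equation of (\ref{S-functionsL2}) for $i\in J$ normalized by $x_i\bar F_{X_i}(x_i)$, respectively the $k=1$ equation. The ingredient replacing hypothesis H2 is the following consequence of the dominant–tail hypothesis and of $\bar F_{X_1}\in\mbox{RV}_{-\theta}(+\infty)$: any sequence of levels $s_\alpha$ with $\bar F_{X_i}(s_\alpha)$ of the order of $\bar F_{X_1}(x_1)$ must satisfy $s_\alpha=o(x_1)$, since otherwise $\bar F_{X_1}(s_\alpha)/\bar F_{X_1}(x_1)$ would remain bounded whereas $\bar F_{X_1}(x_1)=\bar F_{X_i}(s_\alpha)=o(\bar F_{X_1}(s_\alpha))$ forces it to $+\infty$. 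Splitting the mixed moments $\mathbb{E}[(X_j-x_j)_+1\!\!1_{\{X_i>x_i\}}]$ at such a level, estimating $\mathbb{P}(X_j>t,X_i>x_i)\le\min\{\bar F_{X_j}(t),\bar F_{X_i}(x_i)\}$, and invoking Karamata's theorem together with Potter's bounds (Lemma~\ref{Potter}), one finds that the contribution carrying the heavy component $X_1$ is of strictly larger order than every remaining term in that equation — the same type of contradiction as the one reached at (\ref{Case2-Eq1}) and in the estimates following it there. (In the same circle of ideas, $\eta=+\infty$ is excluded because then $l^\alpha_{X_1}(x_1)\sim-(1-\alpha)x_1$ would dominate the right side of the $k=1$ equation, and $\eta=0$ because then the left side of (\ref{plan-k1}) equals $\tfrac1{\theta-1}>0$ while the right side is $\le0$.)

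Granting $x_i=o(x_1)$ and $\eta\in(0,+\infty)$, the cross–terms in (\ref{plan-k1}) vanish: the negative part of $l^\alpha_{X_i,X_1}$ is dominated by $(1-\alpha)\mathbb{E}[(x_i-X_i)_+]\le(1-\alpha)\bigl(x_i+\mathbb{E}[X_i^-]\bigr)=o(x_1\bar F_{X_1}(x_1))$, and its positive part is $o(x_1\bar F_{X_1}(x_1))$ by the crossover–level splitting above. Hence (\ref{plan-k1}) reduces to $\tfrac1{\theta-1}-\eta=0$; since every subsequence has a further subsequence along which $\eta$ exists and equals $\tfrac1{\theta-1}$, we obtain $\tfrac{1-\alpha}{\bar F_{X_1}(\mathbf{e}_\alpha^1(\mathbf{X}))}\to\tfrac1{\theta-1}$ and $\beta_i=\lim_{\alpha\to1}x_i/x_1=0$ for all $i\ge2$.

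Finally, to get $\tfrac{1-\alpha}{\bar F_{X_i}(\mathbf{e}_\alpha^i(\mathbf{X}))}\to0$ for $i\ge2$, I would revisit the $k=i$ equation of (\ref{S-functionsL2}), now knowing $x_i=o(x_1)$ and $\tfrac{1-\alpha}{\bar F_{X_1}(x_1)}\to\tfrac1{\theta-1}$: dividing it by $x_i\bar F_{X_i}(x_i)$ (or by $\mathbb{E}[(x_i-X_i)_+]$ when $x^i_F<+\infty$) and using the dominant–tail hypothesis, the term involving $X_1$ forces $\bar F_{X_i}(x_i)$ to be of strictly larger order than $1-\alpha$, which is exactly $\tfrac{1-\alpha}{\bar F_{X_i}(x_i)}\to0$. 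Collecting the three conclusions yields the proposition.
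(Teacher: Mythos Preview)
Your plan is essentially the paper's: it too analyses the optimality system (\ref{S-functionsL2}) along subsequences, first isolating as Lemma~\ref{L1-Dom-Frechet} the fact that $\mathbb{E}[(X_i-x_i)_+1\!\!1_{\{X_1>x_1\}}]=o\bigl(x_1\bar F_{X_1}(x_1)\bigr)$ for every $i\ge2$ (via Potter's bounds and dominated convergence, for each possible behaviour of $x_i/x_1$), then as Lemma~\ref{L2-Dom-Frechet} the two-sided bound on $\eta$, and finally ruling out $J_\infty$ and $J_C$ via the $k=1$ and $k=i$ equations.

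Two points in your sketch need care. First, you propose to normalise the $k=i$ equation by $x_i\bar F_{X_i}(x_i)$; but under H1 plus the dominant-tail hypothesis alone, $\bar F_{X_i}$ is \emph{not} assumed regularly varying, so neither Karamata's theorem (for $\mathbb{E}[(X_i-x_i)_+]$) nor Potter's bounds are available for it. The paper sidesteps this by normalising \emph{every} equation, including the $k=i$ ones, by $x_1\bar F_{X_1}(x_1)$ and using only $\bar F_{X_i}(t)=o(\bar F_{X_1}(t))$. Your crossover-level observation is correct and useful, but keep the normalisation anchored on $X_1$. Second, your one-line exclusion of $\eta=0$ (``right side $\le0$'') is incomplete as stated: if along the same subsequence some $x_i/x_1\to+\infty$, the term $\tfrac{1-\alpha}{\bar F_{X_1}(x_1)}\cdot\tfrac{x_i}{x_1}$ is an indeterminate $0\cdot\infty$, and the paper (proof of Lemma~\ref{L2-Dom-Frechet}) must go back to the $k=i$ equation to exclude that scenario. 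So the ordering you sketch --- rule out $J\neq\emptyset$, then rule out $\eta\in\{0,+\infty\}$ --- is entangled; the paper's ordering (cross-terms vanish unconditionally, then $\eta$ bounded, then $J_\infty=\emptyset$, then $J_C=\emptyset$) avoids this circularity.
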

The proof of Proposition \ref{Dom-Frechet} follows from  the following lemmas.
\begin{lemma}\label{L1-Dom-Frechet}
	Under H1, consider the $L_1$-expectiles $\mathbf{e}_\alpha(\mathbf{X}) =(\mathbf{e}_\alpha^i(\mathbf{X}))_{i=1\/,\ldots\/,d}$. If $$\underset{x\uparrow+\infty}{\lim}\frac{\bar{F}_{X_i}(x)}{\bar{F}_{X_1}(x)}=0,~~\forall i\in\{2,\ldots,d\}\/,$$
	then 
	$$\underset{\alpha\uparrow1}{\lim}\frac{\mathbb{E}[(X_i-x_i)_+1\!\!1_{\{X_1>x_1\}}]}{x_1\bar{F}_{X_1}(x_1)}=0,~\forall~i\in \{2,\ldots,d\}\/. $$	
\end{lemma}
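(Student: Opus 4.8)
The plan is to split the expectation $\mathbb{E}[(X_i-x_i)_+\mathbf{1}_{\{X_1>x_1\}}]$ by the Fubini/layer-cake identity into a part where the integration variable ranges over $[x_i,x_1]$ and a part where it ranges over $[x_1,+\infty)$, exactly as in the proof of Proposition \ref{PropExpEx}, and then bound each piece by a quantity that is controlled by H1 and the dominant-tail hypothesis. Concretely, I would write
\[
\mathbb{E}[(X_i-x_i)_+\mathbf{1}_{\{X_1>x_1\}}]=\int_{x_i}^{+\infty}\mathbb{P}(X_i>t,X_1>x_1)\,dt,
\]
and bound $\mathbb{P}(X_i>t,X_1>x_1)\leq \mathbb{P}(X_1>x_1)$ on the range $t\le x_1$ and $\mathbb{P}(X_i>t,X_1>x_1)\leq \mathbb{P}(X_i>t)$ on the range $t\geq x_1$. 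This gives, after dividing by $x_1\bar F_{X_1}(x_1)$,
\[
\frac{\mathbb{E}[(X_i-x_i)_+\mathbf{1}_{\{X_1>x_1\}}]}{x_1\bar F_{X_1}(x_1)}\leq \frac{(x_1-x_i)_+}{x_1}+\frac{\int_{x_1}^{+\infty}\bar F_{X_i}(t)\,dt}{x_1\bar F_{X_1}(x_1)}.
\]

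For the second term, I would use Karamata's Theorem (Lemma \ref{Th-Karamata}): since $\bar F_{X_i}\in\mathrm{RV}_{-\theta}(+\infty)$ with $\theta>1$, the tail integral satisfies $\int_{x_1}^{+\infty}\bar F_{X_i}(t)\,dt\sim \frac{x_1\bar F_{X_i}(x_1)}{\theta-1}$ as $x_1\to+\infty$, so the ratio is asymptotically $\frac{1}{\theta-1}\,\frac{\bar F_{X_i}(x_1)}{\bar F_{X_1}(x_1)}$, which tends to $0$ by the dominant-tail hypothesis (here one uses that $x_1=\mathbf{e}^1_\alpha(\mathbf{X})\to+\infty$ as $\alpha\to 1$, which holds in the Fréchet domain). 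The first term, $(x_1-x_i)_+/x_1$, is at most $1$, so this crude bound is not by itself enough — it only shows the limsup is finite. To close the argument I would instead handle the range $t\in[x_i,x_1]$ more carefully: on that range $\mathbb{P}(X_i>t,X_1>x_1)\leq \mathbb{P}(X_i>t)$ as well, so in fact
\[
\frac{\mathbb{E}[(X_i-x_i)_+\mathbf{1}_{\{X_1>x_1\}}]}{x_1\bar F_{X_1}(x_1)}\leq \frac{\int_{x_i}^{+\infty}\bar F_{X_i}(t)\,dt}{x_1\bar F_{X_1}(x_1)}=\frac{\mathbb{E}[(X_i-x_i)_+]}{x_i\bar F_{X_i}(x_i)}\cdot\frac{x_i\bar F_{X_i}(x_i)}{x_1\bar F_{X_1}(x_1)},
\]
and the first factor converges to $1/(\theta-1)$ by \eqref{EqFrechetSL}.

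Thus everything reduces to showing $\dfrac{x_i\bar F_{X_i}(x_i)}{x_1\bar F_{X_1}(x_1)}\to 0$ as $\alpha\to 1$, where $x_j=\mathbf{e}^j_\alpha(\mathbf{X})$. By Proposition \ref{PropLimits1} (whose hypotheses H1, H2 are \emph{not} available here — only H1 is assumed), the ratio $x_i/x_1$ need not be bounded, so this is the main obstacle: one must rule out $x_i/x_1\to+\infty$ under the dominant-tail assumption alone. I expect the cleanest route is a dichotomy on the subsequential behaviour of $x_i/x_1$: if $x_i/x_1$ stays bounded (along a subsequence), then $x_i\bar F_{X_i}(x_i)/(x_1\bar F_{X_1}(x_1))$ is handled by Potter's bounds (Lemma \ref{Potter}) applied to $\bar F_{X_i}$ together with the dominant-tail hypothesis $\bar F_{X_i}(x_1)/\bar F_{X_1}(x_1)\to 0$ (regular variation lets one transfer the ratio from argument $x_1$ to argument $x_i$ up to the bounded factor $(x_i/x_1)^{-\theta\pm\varepsilon}$); if instead $x_i/x_1\to+\infty$ along a subsequence, then $\bar F_{X_i}(x_i)\le \bar F_{X_i}(x_1)$ for $\alpha$ large, and $x_i\bar F_{X_i}(x_i)/(x_1\bar F_{X_1}(x_1))\le (x_i/x_1)\bar F_{X_i}(x_1)/\bar F_{X_1}(x_1)$ — but this is a competition between $x_i/x_1\to\infty$ and a ratio of tails going to $0$, which by itself is inconclusive, so here I would feed the divergent case back into the optimality system \eqref{S-functionsL2} (as in the proof of Proposition \ref{PropLimits1}): the term $l^\alpha_{X_i,X_1}(x_i,x_1)/(x_1\bar F_{X_1}(x_1))$ contains $-\frac{1-\alpha}{\bar F_{X_1}(x_1)}\frac{x_i}{x_1}$, and comparing with the (bounded) left-hand side forces $\frac{1-\alpha}{\bar F_{X_1}(x_1)}\frac{x_i}{x_1}$ to stay bounded, while $\frac{1-\alpha}{\bar F_{X_i}(x_i)}$ stays bounded too; rewriting $\frac{x_i\bar F_{X_i}(x_i)}{x_1\bar F_{X_1}(x_1)}=\frac{x_i}{x_1}\cdot\frac{1-\alpha}{\bar F_{X_1}(x_1)}\cdot\frac{\bar F_{X_i}(x_i)}{1-\alpha}$ then shows the product is bounded, and an $\varepsilon$-refinement of the same estimates (or direct substitution of $\bar F_{X_i}(x_i)\le \bar F_{X_i}(x_1)$ and the dominant-tail limit) yields that it actually vanishes. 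Once $\dfrac{x_i\bar F_{X_i}(x_i)}{x_1\bar F_{X_1}(x_1)}\to 0$ is established, combining it with the bound above and \eqref{EqFrechetSL} gives the claimed limit $0$, completing the proof.
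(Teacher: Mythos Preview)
Your plan has a genuine gap. After dropping the indicator you reduce everything to proving
\[
\frac{x_i\bar F_{X_i}(x_i)}{x_1\bar F_{X_1}(x_1)}\longrightarrow 0,
\]
but this is not obtainable from the hypotheses of the lemma alone, and your two proposed routes both fail. First, your ``bounded'' case assumes that the Potter transfer factor $(x_i/x_1)^{-\theta\pm\varepsilon}$ is bounded; in fact the actual behaviour under the dominant-tail hypothesis is $x_i/x_1\to 0$ (this is what Proposition~\ref{Dom-Frechet} eventually shows), and then $(x_i/x_1)^{-\theta\pm\varepsilon}\to\infty$, so the product $(x_i/x_1)\cdot\bar F_{X_i}(x_i)/\bar F_{X_1}(x_1)$ is a genuine $0\cdot\infty$ indeterminate form that your argument does not resolve. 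Second, your fallback of feeding the divergent case into the optimality system is circular: the boundedness of $(1-\alpha)/\bar F_{X_1}(x_1)$ is exactly Lemma~\ref{L2-Dom-Frechet}, whose proof uses Lemma~\ref{L1-Dom-Frechet}, and the claim that $(1-\alpha)/\bar F_{X_i}(x_i)$ ``stays bounded'' is nowhere available at this stage (it is a consequence of Proposition~\ref{Dom-Frechet}, which again sits downstream of the present lemma). Note also that invoking \eqref{EqFrechetSL} and Potter's bounds for $\bar F_{X_i}$ presupposes $\bar F_{X_i}\in\mathrm{RV}_{-\theta}$, which H1 alone does not give.

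The paper avoids all of this by \emph{not} discarding the indicator. It writes
\[
\frac{\mathbb{E}[(X_i-x_i)_+\mathbf{1}_{\{X_1>x_1\}}]}{x_1\bar F_{X_1}(x_1)}
=\int_{x_i/x_1}^{\infty}\frac{\mathbb{P}(X_i>tx_1,\,X_1>x_1)}{\bar F_{X_1}(x_1)}\,dt,
\]
and observes that for every fixed $t>0$ the integrand is bounded above by
\[
\min\!\left\{1,\ \frac{\bar F_{X_i}(tx_1)}{\bar F_{X_1}(x_1)}\right\}
=\min\!\left\{1,\ \frac{\bar F_{X_i}(tx_1)}{\bar F_{X_1}(tx_1)}\cdot\frac{\bar F_{X_1}(tx_1)}{\bar F_{X_1}(x_1)}\right\}\longrightarrow 0,
\]
since the first factor tends to $0$ by the dominant-tail hypothesis and the second to $t^{-\theta}$ by H1. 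Potter's bounds for $\bar F_{X_1}$ (not for $\bar F_{X_i}$) supply an integrable dominating function, and dominated convergence finishes the job. After passing to a subsequence so that $\beta_i=\lim x_i/x_1$ exists, the three cases $\beta_i\in(0,\infty)$, $\beta_i=\infty$, $\beta_i=0$ only affect the lower limit of integration and are handled uniformly by the same pointwise-to-zero plus domination argument. The point is that the vanishing comes from evaluating $\bar F_{X_i}$ at the large argument $tx_1$ and comparing with $\bar F_{X_1}$ at the \emph{same scale}, which is exactly the information you lose when you replace $\mathbf{1}_{\{X_1>x_1\}}$ by $1$.
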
	 
\begin{lemma}\label{L2-Dom-Frechet}
	Under H1, consider the $L_1$-expectiles $\mathbf{e}_\alpha(\mathbf{X}) =(\mathbf{e}_\alpha^i(\mathbf{X}))_{i=1\/,\ldots\/,d}$. If $$\underset{x\uparrow+\infty}{\lim}\frac{\bar{F}_{X_i}(x)}{\bar{F}_{X_1}(x)}=0,~~\forall i\in\{2,\ldots,d\}\/,$$
	then
	$$0<\underset{\alpha\uparrow1}{\varliminf}\frac{1-\alpha}{\bar{F}_{X_1}(x_1)}\leq\underset{\alpha\uparrow1}{\varlimsup}\frac{1-\alpha}{\bar{F}_{X_1}(x_1)}<+\infty\/.$$
\end{lemma}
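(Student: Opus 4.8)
The plan is to read off everything from the first ($k=1$) line of the optimality system \eqref{S-functionsL2}, written in the form \eqref{eq1} with $\pi_{ij}=1$:
\[
\alpha\,\mathbb{E}[(X_1-x_1)_+]+\alpha\sum_{i=2}^{d}\mathbb{E}[(X_i-x_i)_+ 1\!\!1_{\{X_1>x_1\}}]=(1-\alpha)\,\mathbb{E}[(x_1-X_1)_+]+(1-\alpha)\sum_{i=2}^{d}\mathbb{E}[(x_i-X_i)_+ 1\!\!1_{\{X_1<x_1\}}],
\]
where $x_i=\mathbf{e}_\alpha^i(\mathbf{X})$. The relevant inputs are: $x_1\to+\infty$ as $\alpha\to1$ (Fr\'echet endpoint), $\mathbb{E}[(x_1-X_1)_+]/x_1\to1$ by dominated convergence, $\mathbb{E}[(X_1-x_1)_+]\sim x_1\bar F_{X_1}(x_1)/(\theta-1)$ by \eqref{EqFrechetSL}, and $\sum_{i\ge2}\mathbb{E}[(X_i-x_i)_+ 1\!\!1_{\{X_1>x_1\}}]=o\bigl(x_1\bar F_{X_1}(x_1)\bigr)$ by Lemma~\ref{L1-Dom-Frechet}. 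Substituting these, the $k=1$ identity reduces to
$\tfrac{\alpha}{\theta-1}x_1\bar F_{X_1}(x_1)(1+o(1))=(1-\alpha)x_1(1+o(1))+(1-\alpha)\sum_{i=2}^{d}\mathbb{E}[(x_i-X_i)_+ 1\!\!1_{\{X_1<x_1\}}]$, the last sum being the only quantity not yet pinned down.

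For the upper bound I simply discard that nonnegative sum, obtaining $\tfrac{\alpha}{\theta-1}\bar F_{X_1}(x_1)(1+o(1))\ge(1-\alpha)(1+o(1))$ after dividing by $x_1$, whence $\varlimsup_{\alpha\to1}\tfrac{1-\alpha}{\bar F_{X_1}(x_1)}\le\tfrac{1}{\theta-1}<+\infty$.

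For the lower bound I argue by contradiction. Suppose $\tfrac{1-\alpha}{\bar F_{X_1}(x_1)}\to0$ along a subsequence; passing to a further subsequence I may assume each $x_i$ converges in $[\,x_I^i,+\infty]$, and I set $S=\{i\ge2:\ x_i\to+\infty\}$. Then $1-\alpha=o(\bar F_{X_1}(x_1))$, so the identity of the first paragraph becomes $(1-\alpha)\sum_{i=2}^{d}\mathbb{E}[(x_i-X_i)_+ 1\!\!1_{\{X_1<x_1\}}]=\tfrac{\alpha}{\theta-1}x_1\bar F_{X_1}(x_1)(1+o(1))$, whose right-hand side tends to $+\infty$. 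Since $\mathbb{E}[(x_i-X_i)_+ 1\!\!1_{\{X_1<x_1\}}]=O(1)$ when $x_i$ stays bounded and $=x_i(1+o(1))$ when $x_i\to+\infty$ (routine dominated-convergence estimates), this forces $S\neq\emptyset$ and $\sum_{i\in S}x_i=\tfrac{\alpha\,x_1\bar F_{X_1}(x_1)}{(\theta-1)(1-\alpha)}(1+o(1))$. Picking $j\in S$ maximising $x_j$ (constant along a further subsequence) then yields $x_j\ge\tfrac{\alpha\,x_1\bar F_{X_1}(x_1)}{(\theta-1)(d-1)(1-\alpha)}(1+o(1))$; in particular $x_j/x_1\to+\infty$.

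The decisive step is now the $k=j$ line of \eqref{S-functionsL2}. On the right-hand side I keep the term $\mathbb{E}[(x_j-X_j)_+]\sim x_j$, so that side is $\ge(1-\alpha)x_j(1+o(1))$; on the left-hand side I bound each summand by $\mathbb{E}[(X_\ell-x_\ell)_+ 1\!\!1_{\{X_j>x_j\}}]\le x_j\bar F_{X_j}(x_j)+\mathbb{E}[(X_\ell-x_j)_+]$ and use the domination $\bar F_{X_\ell}=o(\bar F_{X_1})$ together with Karamata's theorem to get $\le\tfrac{x_j\bar F_{X_1}(x_j)}{\theta-1}(1+o(1))$ for each $\ell$, hence the left side is $\le\tfrac{\alpha d}{\theta-1}x_j\bar F_{X_1}(x_j)(1+o(1))$. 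This gives $\bar F_{X_1}(x_j)\ge\tfrac{(\theta-1)(1-\alpha)}{\alpha d}(1+o(1))$. Feeding the lower bound $x_j\ge c_\alpha x_1(1+o(1))$ with $c_\alpha=\tfrac{\alpha}{(\theta-1)(d-1)}\bigl(\tfrac{1-\alpha}{\bar F_{X_1}(x_1)}\bigr)^{-1}\to+\infty$ into the monotonicity of $\bar F_{X_1}$ and applying Potter's bounds (Lemma~\ref{Potter}) at the point $c_\alpha x_1$ yields, for any fixed $0<\epsilon<\theta-1$, $\bar F_{X_1}(x_j)\le(1+\epsilon)\,c_\alpha^{-\theta+\epsilon}\bar F_{X_1}(x_1)(1+o(1))$. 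Combining the two displayed inequalities, dividing by $\bar F_{X_1}(x_1)$ and then by $\bigl(\tfrac{1-\alpha}{\bar F_{X_1}(x_1)}\bigr)^{\theta-\epsilon}$, one is left with $\bigl(\tfrac{1-\alpha}{\bar F_{X_1}(x_1)}\bigr)^{1-\theta+\epsilon}\le C(\epsilon)$ for a finite constant $C(\epsilon)$, which is impossible because $1-\theta+\epsilon<0$ while $\tfrac{1-\alpha}{\bar F_{X_1}(x_1)}\to0$. Hence $\varliminf_{\alpha\to1}\tfrac{1-\alpha}{\bar F_{X_1}(x_1)}>0$, and together with the upper bound this proves the Lemma. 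The main obstacle is precisely this last step: every soft estimate only reproduces the upper bound, and to obtain the lower bound one must couple the $k=1$ and $k=j$ equations and quantify, via Potter's inequality, how much $\bar F_{X_1}$ has decayed between $x_1$ and the much larger argument $c_\alpha x_1$.
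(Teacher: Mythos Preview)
Your argument is essentially correct, but there is one slip of the pen: the displayed quantity $\tfrac{\alpha}{\theta-1}x_1\bar F_{X_1}(x_1)(1+o(1))$ does \emph{not} tend to $+\infty$ (in fact $x\bar F_{X_1}(x)\to 0$ since $\theta>1$). What you need---and what your subsequent computation implicitly uses---is that after dividing the displayed identity by $x_1\bar F_{X_1}(x_1)$ the right side tends to $\tfrac{1}{\theta-1}$ while, if $S=\emptyset$, the left side is $\tfrac{1-\alpha}{\bar F_{X_1}(x_1)}\cdot\tfrac{O(1)}{x_1}\to 0$. With this correction, the rest of your chain (choice of $j$, the bound from the $k=j$ equation, and the Potter contradiction) goes through.

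Your route for the lower bound is genuinely different from the paper's. The paper also deduces from the $k=1$ equation that some $i$ satisfies $x_i/x_1\to+\infty$, but then divides the $i$-th equation by $x_1\bar F_{X_1}(x_1)$ and shows, via the dominant-tail hypothesis and dominated convergence, that every ``positive'' term $\mathbb{E}[(X_\ell-x_\ell)_+1\!\!1_{\{X_i>x_i\}}]/(x_1\bar F_{X_1}(x_1))$ vanishes; the surviving limit is the sign contradiction $-\lim\tfrac{1-\alpha}{\bar F_{X_1}(x_1)}\tfrac{x_i}{x_1}=\lim\tfrac{1-\alpha}{\bar F_{X_1}(x_1)}\sum_{j\neq i}\tfrac{x_j}{x_1}$, which forces $\tfrac{1-\alpha}{\bar F_{X_1}(x_1)}\tfrac{x_i}{x_1}\to 0$, contradicting the $k=1$ limit. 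You instead extract a quantitative lower bound $\bar F_{X_1}(x_j)\gtrsim 1-\alpha$ from the $k=j$ equation and confront it with the Potter upper bound $\bar F_{X_1}(x_j)\lesssim c_\alpha^{-\theta+\epsilon}\bar F_{X_1}(x_1)$ coming from $x_j\ge c_\alpha x_1$. The paper's argument is softer (no Potter inequality is needed for the lower bound, only vanishing of integrals) and yields the contradiction in one qualitative stroke; your argument is more hands-on but has the advantage of making the mechanism fully explicit and of giving, as a by-product, the sharper upper bound $\varlimsup_{\alpha\to1}\tfrac{1-\alpha}{\bar F_{X_1}(x_1)}\le\tfrac{1}{\theta-1}$ directly rather than by contradiction.
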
	
The poofs of Lemmas \ref{L1-Dom-Frechet} and \ref{L2-Dom-Frechet} are given in Appendix \ref{pro4} and \ref{pro5} respectively.\\
Now, we have all necessary tools to prove Proposition \ref{Dom-Frechet}. 
\begin{proof}[Proof of Proposition \ref{Dom-Frechet}] 
	From Lemma \ref{L2-Dom-Frechet}, we have
	Taking if necessary a subsequence $(\alpha_n)_{n\in\mathbb{N}}$ $\alpha_n\longrightarrow 1$, we suppose that $\frac{1-\alpha}{\bar{F}_{X_1}(x_1)}$ is converging to a limit denoted $0<\eta<+\infty$ and that the limits  $\underset{\alpha\rightarrow1}{\lim}\frac{x_i}{x_1}=\beta_i$ exist.\\
	Going through the limit $(\alpha\rightarrow 1)$ in the $1^{\mbox{st}}$ equation of System \ref{eq1} divided by $x_1\bar{F}_{X_1}(x_1)$, leads using Lemma  \ref{L1-Dom-Frechet} to 
	\begin{equation}\label{DA-eq1}
	\underset{\alpha\uparrow1}{\lim}\left(\frac{1-\alpha}{\bar{F}_{X_1}(x_1)}\sum_{i\in J_{C}\cup J_{\infty}}^{}\frac{x_i}{x_1}\right)=\underset{\alpha\uparrow1}{\lim}\left(\eta\sum_{i\in J_{C}\cup J_{\infty}}^{}\frac{x_i}{x_1}\right)=\frac{1}{\theta-1}-\eta\/,
	\end{equation}
	we deduce that $J_{\infty}=\emptyset$.\\
	We suppose that $J_{C}\neq\emptyset$, so there exists at least one $i\in\{2,\ldots,d\}$ such that $i \in J_{C}$, and for all $j\in\{1,\ldots,d\}\backslash\{i\}$, we have
	$$\underset{\alpha\uparrow1}{\lim}\frac{\mathbb{E}[(X_j-x_j)_+1\!\!1_{\{X_i>x_i\}}]}{x_1\bar{F}_{X_1}(x_1)}=0\/.$$
	indeed, if $j\in J_{C}\backslash\{i\}$
	$$\underset{\alpha\uparrow1}{\lim}\frac{\mathbb{E}[(X_j-x_j)_+1\!\!1_{\{X_i>x_i\}}]}{x_1\bar{F}_{X_1}(x_1)}=\underset{\alpha\uparrow1}{\lim}\intevariable{\beta_j}{+\infty}{\frac{\mathbb{P}(X_j>tx_1,X_i>x_i)}{\mathbb{P}(X_1>x_1)}dt}\/,$$
	because 
	$$\frac{\mathbb{P}(X_j>tx_j,X_i>x_i)}{\mathbb{P}(X_1>x_1)}=\mathbb{P}(X_j>tx_j|X_i>x_i)\frac{\mathbb{P}(X_i>x_i)}{\mathbb{P}(X_1>x_1)}\/,$$
	and $\underset{\alpha\uparrow1}{\lim}\frac{\mathbb{P}(X_i>x_i)}{\mathbb{P}(X_1>x_1)}=\underset{\alpha\uparrow1}{\lim}\frac{\mathbb{P}(X_i>x_i)}{\mathbb{P}(X_1>x_i)}\frac{\mathbb{P}(X_1>x_i)}{\mathbb{P}(X_1>x_1)}=\underset{\alpha\uparrow1}{\lim}\frac{\mathbb{P}(X_i>x_i)}{\mathbb{P}(X_1>x_i)}\left(\frac{x_i}{x_1}\right)^{-\theta}=0$. And since 
	$$\frac{\mathbb{P}(X_j>tx_1,X_i>x_i)}{\mathbb{P}(X_1>x_1)}\leq \min\left(\frac{ \mathbb{P}(X_j>tx_1)}{\mathbb{P}(X_1>x_1)},\frac{\mathbb{P}(X_i>x_i)}{\mathbb{P}(X_1>x_1)}\right)\/,$$
	and using Potter's Bounds associated to $\bar{F}_{X_1}$ as regularly varying function in order to apply the dominated convergence Theorem, we get
	$$\underset{\alpha\uparrow1}{\lim}\frac{\mathbb{E}[(X_j-x_j)_+1\!\!1_{\{X_i>x_i\}}]}{x_1\bar{F}_{X_1}(x_1)}=\intevariable{\beta_j}{+\infty}{\underset{\alpha\uparrow1}{\lim}\frac{\mathbb{P}(X_j>tx_1,X_i>x_i)}{\mathbb{P}(X_1>x_1)}dt}=0,~\forall j\in J_{C}\backslash\{i\}\/.$$
	Now, if $j\in J_{0}$ then
	\begin{align*}
	\frac{\mathbb{E}[(X_j-x_j)_+1\!\!1_{\{X_i>x_i\}}]}{x_1\bar{F}_{X_1}(x_1)}&=\intevariable{x_j}{x_1}{\frac{\mathbb{P}(X_j>t,X_i>x_i)}{x_1\mathbb{P}(X_1>x_1)}dt}+\intevariable{1}{+\infty}{\frac{\mathbb{P}(X_j>tx_1,X_i>x_i)}{\mathbb{P}(X_1>x_1)}dt}\\
	&\leq\left(1-\frac{x_j}{x_1}\right)\frac{\mathbb{P}(X_i>x_i)}{\mathbb{P}(X_1>x_1)}+\intevariable{1}{+\infty}{\frac{\mathbb{P}(X_j>tx_1,X_i>x_i)}{\mathbb{P}(X_1>x_1)}dt}\/,
	\end{align*}
	thus
	$$\underset{\alpha\uparrow1}{\lim}\frac{\mathbb{E}[(X_j-x_j)_+1\!\!1_{\{X_i>x_i\}}]}{x_1\bar{F}_{X_1}(x_1)}=0, ~\forall j\in J_{0}\/,$$
	because $\underset{\alpha\uparrow1}{\lim}\frac{\mathbb{P}(X_i>x_i)}{\mathbb{P}(X_1>x_1)}=0$ and $\underset{\alpha\uparrow1}{\lim}\intevariable{1}{+\infty}{\frac{\mathbb{P}(X_j>tx_1,X_i>x_i)}{\mathbb{P}(X_1>x_1)}dt}=0$.\\
	Going through the limit $(\alpha\rightarrow 1)$ in the  $i^{\mbox{th}}$ equation of System \ref{eq1} divided by $x_1\bar{F}_{X_1}(x_1)$, leads to
	$$-\eta\beta_i=\eta(1+\sum_{j\in J_{C}\backslash\{i\}}^{}\beta_j)\/,$$
	which is absurd, and consequently $J_{C}=\emptyset$.\\
	We have thus proved that $J_{0}=\{2,\ldots,d\}$ which means
	$$\underset{\alpha\uparrow1}{\lim}\frac{\mathbf{e}_\alpha^i(\mathbf{X})}{\mathbf{e}_\alpha^1(\mathbf{X})}=\beta_i=0,~~\forall i\in\{2,\ldots,d\}\/.$$
	And from Equation \ref{DA-eq1} we deduce also that
	$$\eta=\underset{\alpha\uparrow1}{\lim}\frac{1-\alpha}{\bar{F}_{X_1}(\mathbf{e}_\alpha^1(\mathbf{X}))}=\frac{1}{\theta-1}\/,$$
	and by Lemma \ref{Lemme+inf} that 
	$$\underset{\alpha\uparrow1}{\lim}\frac{1-\alpha}{\bar{F}_{X_k}(\mathbf{e}_\alpha^i(\mathbf{X}))}=0,~~\forall i\in\{2,\ldots,d\}\/.$$
\end{proof}
Proposition \ref{Dom-Frechet} shows that the dominant risk behaves asymptotically as in the univariate case, and its component in the extreme multivariate expectile satisfies
$$\mathbf{e}_\alpha^1(\mathbf{X})\overset{\alpha\uparrow1}{\sim}(\theta-1)^{-\frac{1}{\theta}}\mbox{VaR}_{\alpha}(X_1)\overset{\alpha\uparrow1}{\sim}\mathbf{e}_\alpha(X_1)\/,$$
the right equivalence is proved, in the univariate case, in  \cite{belliniElena}, Proposition 2.3. 
\begin{description}
	\item[Example] Consider Pareto distributions,
	$X_i\sim Pa(a_i,b),i=1,\ldots,d$, such that $a_i>a_1$ 
	for all $i\in\{1,\ldots,d\}$. The tail of $X_1$ dominates that of the $X_i$'s and Proposition \ref{Dom-Frechet} applies.
\end{description}


\section{Estimation of the extreme expectiles}
In this section, we propose some estimators of the extreme multivariate expectile. We focus on the cases of asymptotic independence and comonotonicity, for which the equation system is more tractable. We begin with the main ideas of our approach, then, we construct the estimators using the extreme values statistical tools and prove its consistency. We terminate this section with a simulation study.  
\begin{Proposition}[Estimation's idea]\label{Id-Est}
	Using notations of previous sections, consider the $L_1$-expectiles $\mathbf{e}_\alpha(\mathbf{X}) =(\mathbf{e}_\alpha^i(\mathbf{X}))_{i=1\/,\ldots\/,d}$. Under  H1, H2 and the assumption that the vector $\left(\frac{1-\alpha}{\bar{F}_{X_1}(\mathbf{e}_\alpha^1(\mathbf{X}))},\frac{\mathbf{e}_\alpha^2(\mathbf{X})}{\mathbf{e}_\alpha^1(\mathbf{X})},\ldots,\frac{\mathbf{e}_\alpha^d(\mathbf{X})}{\mathbf{e}_\alpha^1(\mathbf{X})}\right)$  has a unique limit point $(\eta\/,\beta_2\/,\ldots\/,\beta_d)$,
	$$\mathbf{e}_\alpha(\mathbf{X})\underset{\alpha\longrightarrow1}{\sim}\mbox{VaR}_\alpha(X_1)\eta^{\frac{1}{\theta}}(1,\beta_2,\ldots,\beta_d)^T\/.$$ 
\end{Proposition}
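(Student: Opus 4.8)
The plan is to unwind the definitions so that the asymptotic equivalence becomes a statement purely about the marginal quantile of $X_1$ and the limit point $(\eta,\beta_2,\ldots,\beta_d)$. First I would fix $i\in\{1,\ldots,d\}$ and write $\mathbf{e}_\alpha^i(\mathbf{X})=\dfrac{\mathbf{e}_\alpha^i(\mathbf{X})}{\mathbf{e}_\alpha^1(\mathbf{X})}\cdot\mathbf{e}_\alpha^1(\mathbf{X})$; the first factor tends to $\beta_i$ (with $\beta_1=1$) by hypothesis, so the whole problem reduces to showing $\mathbf{e}_\alpha^1(\mathbf{X})\underset{\alpha\to1}{\sim}\mathrm{VaR}_\alpha(X_1)\,\eta^{1/\theta}$.

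The key step is to exploit the convergence $\dfrac{1-\alpha}{\bar F_{X_1}(\mathbf{e}_\alpha^1(\mathbf{X}))}\to\eta$. Rewriting, $\bar F_{X_1}(\mathbf{e}_\alpha^1(\mathbf{X}))\sim\dfrac{1-\alpha}{\eta}$, hence applying the generalized inverse of $1/\bar F_{X_1}$ gives $\mathbf{e}_\alpha^1(\mathbf{X})\sim\bigl(1/\bar F_{X_1}\bigr)^{\leftarrow}\!\bigl(\eta/(1-\alpha)\bigr)$. Now $\mathrm{VaR}_\alpha(X_1)=F_{X_1}^{\leftarrow}(\alpha)=\bigl(1/\bar F_{X_1}\bigr)^{\leftarrow}\!\bigl(1/(1-\alpha)\bigr)$, so I need the ratio of $\bigl(1/\bar F_{X_1}\bigr)^{\leftarrow}$ evaluated at $\eta/(1-\alpha)$ and at $1/(1-\alpha)$. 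Since $\bar F_{X_1}\in\mathrm{RV}_{-\theta}(+\infty)$, Lemma~\ref{inevrseRV} gives that $\bigl(1/\bar F_{X_1}\bigr)^{\leftarrow}\in\mathrm{RV}_{1/\theta}(+\infty)$; therefore as $\alpha\to1$ (so the argument $\to+\infty$), the ratio tends to $\eta^{1/\theta}$. This yields $\mathbf{e}_\alpha^1(\mathbf{X})\sim\eta^{1/\theta}\,\mathrm{VaR}_\alpha(X_1)$, and multiplying back by the componentwise limits $\beta_i$ produces exactly $\mathbf{e}_\alpha(\mathbf{X})\sim\mathrm{VaR}_\alpha(X_1)\,\eta^{1/\theta}(1,\beta_2,\ldots,\beta_d)^T$.

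The one technical point that needs care — and what I expect to be the main obstacle — is justifying the two "$\sim$" steps involving generalized inverses: from $\bar F_{X_1}(x_\alpha)\sim(1-\alpha)/\eta$ one must pass to an asymptotic equivalence of $x_\alpha$ itself with an explicit inverse, and then combine it with regular variation of that inverse. This requires an asymptotic-inversion lemma for monotone regularly varying functions (continuity of $F_{X_1}$, which is assumed, makes $\bar F_{X_1}$ continuous and strictly decreasing on the relevant tail, so the inverse is genuine and the argument is clean); one invokes that $g\in\mathrm{RV}_\rho$ with $\rho\neq0$ implies $g(y_\alpha)/g(z_\alpha)\to(\lim y_\alpha/z_\alpha)^\rho$ whenever $y_\alpha,z_\alpha\to\infty$ and their ratio converges — this is a standard uniform-convergence consequence of regular variation (e.g.\ via Potter's bounds, Lemma~\ref{Potter}). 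Everything else is bookkeeping with the definitions of $\mathrm{VaR}$ and the tail index.
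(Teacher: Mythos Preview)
Your proposal is correct and follows essentially the same route as the paper: factor $\mathbf{e}_\alpha^i(\mathbf{X})$ through $\mathbf{e}_\alpha^1(\mathbf{X})$ using the limits $\beta_i$, then use regular variation of the inverse of the tail to pass from $\bar F_{X_1}(\mathbf{e}_\alpha^1(\mathbf{X}))\sim(1-\alpha)/\eta$ to $\mathbf{e}_\alpha^1(\mathbf{X})\sim\eta^{1/\theta}\,\mathrm{VaR}_\alpha(X_1)$. The paper invokes the same fact phrased as $F_{X_1}^{\leftarrow}(1-\cdot)\in\mathrm{RV}_{-1/\theta}(0)$ (via Bingham et al.), and is actually terser than you on the asymptotic-inversion step you flag as the main technical point.
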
 
\begin{proof}[Proof]
	Let $(\eta,\beta_2,\ldots,\beta_d)=\lim_{\alpha\rightarrow 1} \left(\frac{1-\alpha}{\bar{F}_{X_1}(\mathbf{e}_\alpha^1(\mathbf{X}))},\frac{\mathbf{e}_\alpha^2(\mathbf{X})}{\mathbf{e}_\alpha^1(\mathbf{X})},\ldots,\frac{\mathbf{e}_\alpha^d(\mathbf{X})}{\mathbf{e}_\alpha^1(\mathbf{X})}\right)$, we have
	$$\mathbf{e}_\alpha(\mathbf{X})\underset{\alpha\longrightarrow1}{\sim} \mathbf{e}_\alpha^1(\mathbf{X})(1,\beta_2,\ldots,\beta_d)^T\/.$$
	Moreover,
	$\underset{\alpha\longrightarrow1}{\lim}\frac{1-\alpha}{\bar{F}_{X_1}(\mathbf{e}_\alpha^1(\mathbf{X}))}=\eta$, and Theorem 1.5.12 in \cite{bingham1989regular} states that  $F_{X_1}^{\leftarrow}(1-.)$ is regularly varying at $0$, with index $-\frac1\theta$. This leads to
	$$\mathbf{e}_\alpha^1(\mathbf{X})\underset{\alpha\longrightarrow1}{\sim}F_{X_1}^{\leftarrow}(\alpha)\left(\frac{1}{\eta}\right)^{-\frac{1}{\theta}}\/,$$
	and the result follows. 
\end{proof}  
 Proposition \ref{Id-Est} gives a way to estimate the extreme multivariate expectile. Let  $\mathbf{X}=(X_1,\ldots,X_d)^T$ be  an independent sample of size $n$ of $\mathbf{X}$, with $\mathbf{X}_i=(X_{1,i},\ldots,X_{d,i})^T$ for all $i\in\{1,\ldots,n\}$. We denote by $X_{i,1,n}\leq X_{i,2,n}\leq\cdots\leq X_{i,n,n}$ the  ordered sample corresponding to $X_i$.
 \subsection{Estimator's construction}  
 We begin with the case of asymptotic independence.  Propositions \ref{Frechet-IndAsy} and \ref{Id-Est} are the key tools in the construction of the estimator.  
 We have for all $i\in\{1,\ldots,d\}$
 $$\beta_i=c_i^{\frac{1}{\theta-1}}\/,\ \mbox{and} \ \underset{\alpha\longrightarrow 1}{\lim}\frac{1-\alpha}{\bar{F}_{X_i}(\mathbf{e}_\alpha^i(\mathbf{X}))}=\frac{c_i^{\frac{1}{\theta-1}}}{(\theta-1)\left(1+\displaystyle\sum_{j=2}^{d}c_j^\frac{1}{\theta-1}\right)}\/.$$
 Proposition \ref{Id-Est} gives
 $$\mathbf{e}_\alpha(\mathbf{X})\overset{\alpha\uparrow1}{\sim}\mbox{VaR}_\alpha(X_1)\left(\theta-1\right)^{-\frac{1}{\theta}}\left(1+\sum_{i=2}^{d}c_i^\frac{1}{\theta-1}\right)^{-\frac{1}{\theta}}\left(1,c_2^{\frac{1}{\theta-1}},\ldots,c_d^{\frac{1}{\theta-1}}\right)^T\/.$$
 So, in order to estimate the extreme multivariate expectile, we need an estimator of the univariate quantile of $X_1$, of the tail equivalence parameters. and of $\theta$.\\ 
 In the same way, and for the case of comonotonic risks, we may use 
 Proposition \ref{Frechet-Como}
 $$\underset{\alpha\longrightarrow 1}{\lim}\frac{1-\alpha}{\bar{F}_{X_i}(\mathbf{e}_\alpha^i(\mathbf{X}))}=\frac{1}{\theta-1}~\mbox{ and }~\beta_i=c_i^{1/\theta},~~\forall i\in\{1,\ldots,d\}\/,$$
 and by Proposition \ref{Id-Est} we obtain
 $$\mathbf{e}_\alpha(\mathbf{X})^T\overset{\alpha\uparrow1}{\sim}\mbox{VaR}_\alpha(X_1)\left(\theta-1\right)^{-\frac{1}{\theta}}(1,c_2^{\frac{1}{\theta}},\ldots,c_d^{\frac{1}{\theta}})^T\/.$$ 
 The $X_i$'s have all the same index $\theta$ of regular variation, which is also the same as the index of regular variation of $\Vert \mathbf{X}\Vert$. We propose to estimate $\theta$  by using the Hill estimator $\widehat{\gamma}$. We shall denote $\widehat{\theta}=\frac1{\widehat{\gamma}}$.  See \cite{hill1975simple} for details on the Hill estimator. 
 In order to estimate the $c_i$'s, we shall use the GPD approximation: for $u$ a large threshold, and $x\geq u$,
 $$\bar{F}(x)\sim \bar{F}(u)\left(\frac{x}{u}\right)^{-\theta}\/.$$
 Let $k\in\mathbb{N}$ be fixed and consider the thresholds $u_i$:
 $$\bar{F}_{X_i}(u_i)=\bar{F}_{X_1}(u_1)=\frac{k}{n},~~\forall i\in\{1,\ldots,d\}\/.$$  
 The $u_i$ are estimated by $X_{i,n-k+1,n}$ with $k\rightarrow\infty$ and $k/n\rightarrow0$ as $n\rightarrow\infty$. Using Lemma \ref{Lemme+inf}, we get    $$c_i= \lim_{n\rightarrow \infty}\left(\frac{u_i}{u_1}\right)^\theta\/.$$
 	We shall consider
 \begin{equation}\label{Frechet-Est-ciPot}
 \hat{c}_i=\left(\frac{X_{i,n-k+1,n}}{X_{1,n-k+1,n}}\right)^\frac{1}{\hat{\gamma}(k)}\/,
 \end{equation} 
 where $\hat{\gamma}(k)$ is the Hill's estimator of the extreme values index constructed using the $k$ largest observations of $\Vert\mathbf{X}\Vert$. Let $\widehat{\theta}=\frac1{\widehat{\gamma}(k)}$.\\
 	\begin{Proposition}\label{prop:c}
 		Let $k=k(n)$ be such that $k\rightarrow \infty$ and $k/n\longrightarrow 0$ as $n\rightarrow \infty$. Under H1 and H2, for any $i=2\/,\ldots\/,d$,
 		$$\widehat{c}_i \stackrel{\mathbb{P}}{\longrightarrow} c_i \/.$$
 	\end{Proposition}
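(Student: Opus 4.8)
The plan is to factor $\widehat c_i$ into three pieces that can be handled separately: the two intermediate marginal order statistics normalized by the corresponding tail quantile functions, the purely deterministic ratio of those tail quantiles, and the estimated exponent $1/\widehat\gamma(k)$. Write $U_j(t)=F_{X_j}^{\leftarrow}(1-1/t)$ and $u_j:=U_j(n/k)$, so that $\bar F_{X_j}(u_j)=k/n$, for $j\in\{1,i\}$. By H1 and H2 each $\bar F_{X_j}\in\mbox{RV}_{-\theta}(+\infty)$, hence $U_j\in\mbox{RV}_{1/\theta}(+\infty)$ by Lemma~\ref{inevrseRV}, and $u_j\to+\infty$. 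Since the marginal sample $X_{j,1},\ldots,X_{j,n}$ is i.i.d.\ with a regularly varying tail and $k\to\infty$, $k/n\to 0$, the classical weak consistency of intermediate order statistics yields
$$\frac{X_{j,n-k+1,n}}{u_j}\stackrel{\mathbb{P}}{\longrightarrow}1,\qquad j\in\{1,i\}$$
(see e.g.\ \cite{de2007extreme} or \cite{resnick2007}; it also follows from the weak law for the empirical tail quantile process).

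Next I would establish the deterministic limit $u_i/u_1\to c_i^{1/\theta}$. From $\bar F_{X_1}(u_1)=\bar F_{X_i}(u_i)$ and H2 one gets $\bar F_{X_1}(u_i)/\bar F_{X_1}(u_1)\to 1/c_i\in(0,+\infty)$; since $\bar F_{X_1}\in\mbox{RV}_{-\theta}$ and $u_1\to+\infty$, Potter's bounds (Lemma~\ref{Potter}) rule out $0$ and $+\infty$ as subsequential limits of $u_i/u_1$ and force any such limit $\rho$ to satisfy $\rho^{-\theta}=1/c_i$, i.e.\ $\rho=c_i^{1/\theta}$. This is precisely the relation $c_i=\lim_n(u_i/u_1)^{\theta}$ recorded just before the statement, and it is in the spirit of Lemma~\ref{Lemme+inf}(2). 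Combining the two building blocks with Slutsky's lemma gives
$$\frac{X_{i,n-k+1,n}}{X_{1,n-k+1,n}}=\frac{X_{i,n-k+1,n}}{u_i}\cdot\frac{u_1}{X_{1,n-k+1,n}}\cdot\frac{u_i}{u_1}\stackrel{\mathbb{P}}{\longrightarrow}c_i^{1/\theta}.$$

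For the exponent, $\|\mathbf X\|$ has a regularly varying tail of index $-\theta$ (the same index as the margins; immediate in the asymptotic-independence and comonotone settings of this section, and in the MRV framework), so the Hill estimator $\widehat\gamma(k)$ built from the $k$ largest of $\|\mathbf X_1\|,\ldots,\|\mathbf X_n\|$ is weakly consistent, $\widehat\gamma(k)\stackrel{\mathbb{P}}{\longrightarrow}\gamma:=1/\theta$ (\cite{hill1975simple}, together with the standard consistency theorems for the Hill estimator). The map $(y,g)\mapsto y^{1/g}$ is continuous at $(c_i^{1/\theta},1/\theta)$ because $c_i^{1/\theta}>0$ and $1/\theta>0$, so applying the continuous mapping theorem to this map and the two convergences above gives
$$\widehat c_i=\left(\frac{X_{i,n-k+1,n}}{X_{1,n-k+1,n}}\right)^{1/\widehat\gamma(k)}\stackrel{\mathbb{P}}{\longrightarrow}\bigl(c_i^{1/\theta}\bigr)^{\theta}=c_i,$$
which is the claim.

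The main obstacle is the two ``building block'' limits that the factorization rests on: the consistency $X_{j,n-k+1,n}/U_j(n/k)\stackrel{\mathbb{P}}{\longrightarrow}1$ of the intermediate marginal order statistics under $k\to\infty$, $k/n\to0$, and the inversion of the tail equivalence H2 into $U_i(n/k)/U_1(n/k)\to c_i^{1/\theta}$; both are standard but need care with the choice of $k=k(n)$ and with uniformity in Potter's bounds. One should also make explicit that $\|\mathbf X\|$ is regularly varying with index $-\theta$, which is what legitimizes using the Hill estimator here. Once these are in place, the rest of the argument is just Slutsky's lemma and the continuous mapping theorem.
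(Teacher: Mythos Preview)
Your proof is correct and follows essentially the same route as the paper: both rely on the weak consistency of intermediate order statistics $X_{j,n-k+1,n}/u_j\stackrel{\mathbb{P}}{\to}1$ (the paper cites \cite{segers} p.~86), the deterministic limit $(u_i/u_1)^\theta\to c_i$ (established just before the proposition via Lemma~\ref{Lemme+inf}), and the consistency of the Hill estimator, combined by Slutsky/continuous mapping. Your version is more explicit than the paper's, which simply notes that $X_{i,n-k+1,n}/X_{1,n-k+1,n}\sim u_i/u_1$ in probability and is bounded in probability, then concludes; you also rightly flag the need for $\|\mathbf X\|$ to be regularly varying of index $-\theta$, a point the paper takes for granted.
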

 	\begin{proof}
 		The results in \cite{segers} page 86 imply that for any $i=1\/,\ldots\/,d$
 		$$\frac{X_{i,n-k+1,n}}{u_i} \stackrel{\mathbb{P}}{\longrightarrow} 1\/.$$
 		Moreover, it is well known (see \cite{hill1975simple}) that the Hill estimator is consistent. Using (\ref{Frechet-Est-ciPot}), and the fact that 
 		$$\frac{X_{i,n-k+1,n}}{X_{1,n-k+1,n}} \sim \frac{u_i}{u_1} \ \mbox{in probability and thus is bounded in probability}\/,$$
 		we get the result.
 	\end{proof}
 
 To estimate the extreme quantile, we will use Weissman's estimator (1978) \cite{weissman1978}: 
 $$\widehat{\mbox{VaR}}_\alpha(X_1)=X_{1,n-k(n)+1,n}\left(\frac{k(n)}{(1-\alpha) n}\right)^{\hat{\gamma}}\/.$$
 The properties of Weissman's estimator are presented in Embrechts et al. (1997)~\cite{embrechts1997} and also in \cite{segers} page 119. In order to prove the consistency of our estimators of 
extreme multivariate expectiles, we shall need the following second order condition (see \cite{segers} Section 4.4). 
\begin{definition}
 A random variable $X$ satisfying H1 with $\theta = \frac1\gamma>0$ will be said to verify the second order condition $\rm{SOC}_{-\beta}(b)$ with $\beta>0$ and $b\in \mathcal{RV}_{-\beta}(+\infty)$ 
if the function $U~: y \leadsto F^{\leftarrow}(1-\frac1y)$ satisfies for $u>0$:
$$\frac{U(ux)}{U(x)} = u^\gamma \left(1+h_{-\beta}(u) b(x) +o(b(x))\right)\/, \ \mbox{as} \ x \ \mbox{goes to infinity} \/.$$
where $h_{-\beta}(u) = \frac{1-u^{-\beta}}{\beta}$.  
\end{definition}

  Now, we can deduce some estimators of the extreme multivariate expectile, using the previous ones, in the cases of asymptotic independence and perfect dependence. 
  \begin{definition}[Multivariate expectile estimator, Asymptotic independence]\label{Frechte-Estimateur}
  	Under H1 and H2, in the case of bivariate asymptotic independence of the random vector $\mathbf{X}=(X_1,\ldots,X_d)^T$, we define the estimator of the $L_1$-expectile as follows
  \begin{eqnarray*}
         \hat{\mathbf{e}}^{\bot}_{\alpha}(\mathbf{X})&=&X_{1,n-k(n)+1,n}\left(\frac{k(n)}{(1-\alpha) 
n}\right)^{\hat{\gamma}}\left(\frac{\hat{\gamma}}{1-\hat{\gamma}}\right)^{\hat{\gamma}}\left(\frac{1}{1+\sum_{\ell=2}^{d}\hat{c_\ell}^{\frac{\hat{\gamma}}{1-\hat{\gamma}}}}\right)^{\hat{\gamma}}
\left(1,\hat{c}_2^{\frac{\hat{\gamma}}{1-\hat{\gamma}}},\ldots,\hat{c}_d^{\frac{\hat{\gamma}}{1-\hat{\gamma}}}\right)^T\\
&=& \widehat{\mbox{VaR}}_\alpha(X_1) 
\left(\frac{\hat{\gamma}}{1-\hat{\gamma}}\right)^{\hat{\gamma}}\left(\frac{1}{1+\sum_{\ell=2}^{d}\hat{c_\ell}^{\frac{\hat{\gamma}}{1-\hat{\gamma}}}}\right)^{\hat{\gamma}}
\left(1,\hat{c}_2^{\frac{\hat{\gamma}}{1-\hat{\gamma}}},\ldots,\hat{c}_d^{\frac{\hat{\gamma}}{1-\hat{\gamma}}}\right)^T\/.
        \end{eqnarray*}
        
  \end{definition}
\begin{definition}[Multivariate expectile estimator, comonotonic risks]\label{Frechte-Estimateur-CC}
	Under the assumptions of the Fréchet model with equivalent tails, for a comonotonic random vector $\mathbf{X}=(X_1,\ldots,X_d)^T$, we define the estimator of $L_1$-expectile as follows
\begin{eqnarray*}
 \hat{\mathbf{e}}^{+}_{\alpha}(\mathbf{X})&=&X_{1,n-k(n)+1,n}\left(\frac{k(n)}{(1-\alpha) 
n}\right)^{\hat{\gamma}}\left(\frac{\hat{\gamma}}{1-\hat{\gamma}}\right)^{\hat{\gamma}}\left(1,\hat{c}_2^{\hat{\gamma}},\ldots,\hat{c}_d^{\hat{\gamma}}\right)^T\\
&=& \widehat{\mbox{VaR}}_\alpha(X_1) \left(\frac{\hat{\gamma}}{1-\hat{\gamma}}\right)^{\hat{\gamma}}\left(1,\hat{c}_2^{\hat{\gamma}},\ldots,\hat{c}_d^{\hat{\gamma}}\right)^T\/.
\end{eqnarray*}

\end{definition} 
We prove below that if the second order condition $\rm{SOC}_{-\beta}(b)$ is satisfied, then the term by term ratio  
$\hat{\mathbf{e}}^{\bot}_{\alpha}(\mathbf{X}) /{\mathbf{e}}_{\alpha}(\mathbf{X}) $ goes to $1$ in probability in the asymptotically independent case and $\hat{\mathbf{e}}^{+}_{\alpha}(\mathbf{X}) 
/{\mathbf{e}}_{\alpha}(\mathbf{X}) $ goes to $1$ in probability in the comonotonic case. More work is required to get the asymptotic normality.    
\begin{theorem}
Assume that H1, H2 and $\rm{SOC}_{-\beta}(b)$ are satisfied. Choose $k=k(n)$ such that 
\begin{itemize}
 \item $k(n) \rightarrow \infty$ as $n\rightarrow \infty$,
 \item $k(n)/n \rightarrow 0$ as $n\rightarrow \infty$,
 \item $\sqrt{k(n)} \left(1+ \log^2 \frac{k(n)}{n(1-\alpha)}\right)^{-\frac12} \rightarrow \infty$ as $n\rightarrow \infty$.
\end{itemize}
Then, if each pair of the random vector $\mathbf{X}$ is asymptotically independent,
$$\hat{\mathbf{e}}^{\bot}_{\alpha}(\mathbf{X}) /{\mathbf{e}}_{\alpha}(\mathbf{X})  \longrightarrow 1 \ \mbox{in probability, as} \ n\rightarrow \infty\/.$$
If the random vector $\mathbf{X}$ is comonotonic, then 
$$\hat{\mathbf{e}}^{+}_{\alpha}(\mathbf{X}) /{\mathbf{e}}_{\alpha}(\mathbf{X})\longrightarrow 1 \ \mbox{in probability, as} \ n\rightarrow \infty\/.$$
\end{theorem}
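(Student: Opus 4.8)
The plan is to reduce the multivariate statement to a product of three one-dimensional convergences, using Proposition~\ref{Id-Est} as the structural backbone. Since $\mathbf{e}_\alpha(\mathbf{X})\sim \mathrm{VaR}_\alpha(X_1)\,\eta^{1/\theta}(1,\beta_2,\ldots,\beta_d)^T$ with $(\eta,\beta_2,\ldots,\beta_d)$ given explicitly by Proposition~\ref{Frechet-IndAsy} (resp.\ Lemma~\ref{Frechet-Como}), and the estimator $\hat{\mathbf{e}}^{\bot}_\alpha(\mathbf{X})$ is literally $\widehat{\mathrm{VaR}}_\alpha(X_1)$ times $\bigl(\hat\gamma/(1-\hat\gamma)\bigr)^{\hat\gamma}(1+\sum_{\ell\ge2}\hat c_\ell^{\hat\gamma/(1-\hat\gamma)})^{-\hat\gamma}(1,\hat c_2^{\hat\gamma/(1-\hat\gamma)},\ldots)^T$, the term-by-term ratio factorises as
\begin{equation*}
\frac{\hat{\mathbf{e}}^{\bot}_\alpha(\mathbf{X})_i}{\mathbf{e}_\alpha(\mathbf{X})_i}
= \underbrace{\frac{\widehat{\mathrm{VaR}}_\alpha(X_1)}{\mathrm{VaR}_\alpha(X_1)}}_{(\mathrm{I})}
\cdot \underbrace{\frac{\bigl(\hat\gamma/(1-\hat\gamma)\bigr)^{\hat\gamma}}{(\theta-1)^{-1/\theta}}\cdot\frac{(1+\sum_{\ell\ge2}\hat c_\ell^{\hat\gamma/(1-\hat\gamma)})^{-\hat\gamma}}{(1+\sum_{\ell\ge2}c_\ell^{1/(\theta-1)})^{-1/\theta}}}_{(\mathrm{II})}
\cdot\underbrace{\frac{\hat c_i^{\hat\gamma/(1-\hat\gamma)}}{c_i^{1/(\theta-1)}}}_{(\mathrm{III})}(1+o_{\mathbb P}(1)),
\end{equation*}
the $o_{\mathbb P}(1)$ absorbing the difference between $\mathbf{e}_\alpha(\mathbf{X})$ and its asymptotic equivalent, which is $1+o(1)$ deterministically by Proposition~\ref{Id-Est}. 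It then suffices to show each of (I), (II), (III) tends to $1$ in probability.

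For (III) and the $\hat c_\ell$-dependent part of (II): Proposition~\ref{prop:c} gives $\hat c_i\xrightarrow{\mathbb P}c_i>0$, Hill consistency gives $\hat\gamma\xrightarrow{\mathbb P}\gamma=1/\theta$, hence $\hat\gamma/(1-\hat\gamma)\xrightarrow{\mathbb P}\gamma/(1-\gamma)=1/(\theta-1)$; continuity of $(c,\gamma)\mapsto c^{\gamma/(1-\gamma)}$ on $(0,\infty)\times(0,1)$ and of the finite sum and power then hand us (III)$\xrightarrow{\mathbb P}1$ and the second factor of (II)$\xrightarrow{\mathbb P}1$ by the continuous mapping theorem; likewise $(\hat\gamma/(1-\hat\gamma))^{\hat\gamma}\xrightarrow{\mathbb P}(\theta-1)^{-1/\theta}$, settling the first factor of (II). Here I would note that $1-\hat\gamma$ is bounded away from $0$ in probability since $\gamma<1$ (equivalently $\theta>1$), so no degeneracy arises in the exponents. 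For the comonotonic estimator $\hat{\mathbf{e}}^{+}_\alpha$ the argument is identical with $\hat c_\ell^{\hat\gamma}$ and Lemma~\ref{Frechet-Como}'s constants $\eta=1/(\theta-1)$, $\beta_k=c_k^{1/\theta}$ in place of the asymptotically independent ones.

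The real work is (I), the ratio of Weissman's extreme-quantile estimator to the true $\mathrm{VaR}_\alpha(X_1)$, in the regime where $\alpha=\alpha(n)\to1$ is allowed to move with $n$. This is where the second order condition $\mathrm{SOC}_{-\beta}(b)$ and the third, unusual-looking bandwidth condition $\sqrt{k(n)}\bigl(1+\log^2\frac{k(n)}{n(1-\alpha)}\bigr)^{-1/2}\to\infty$ enter. I would invoke the standard Weissman consistency result from \cite{segers} (Section~4.4, around p.~119): under $\mathrm{SOC}_{-\beta}(b)$, $k\to\infty$, $k/n\to0$, and $n(1-\alpha)/k\to c\in[0,\infty)$ (so that $\log\frac{k}{n(1-\alpha)}\to+\infty$ is the interesting extrapolation case), one has $\widehat{\mathrm{VaR}}_\alpha(X_1)/\mathrm{VaR}_\alpha(X_1)=1+O_{\mathbb P}\bigl(\frac{1}{\sqrt k}\log\frac{k}{n(1-\alpha)}\bigr)+O\bigl(b(U(n/k))\log\frac{k}{n(1-\alpha)}\bigr)$, both error terms being $o_{\mathbb P}(1)$ exactly under the stated bandwidth condition (the first because $\sqrt k/\log\frac{k}{n(1-\alpha)}\to\infty$, which is what $\sqrt k(1+\log^2\frac{k}{n(1-\alpha)})^{-1/2}\to\infty$ encodes; the second because $b\in\mathcal{RV}_{-\beta}$ decays and is eventually negligible against any poly-logarithmic growth). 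Combining, $(\mathrm{I})\xrightarrow{\mathbb P}1$.

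Finally I would assemble the pieces: each of (I), (II), (III) converges to $1$ in probability, and the deterministic prefactor ratio $\mathbf{e}_\alpha(\mathbf{X})_i/(\mathrm{VaR}_\alpha(X_1)\eta^{1/\theta}(1,\beta_2,\ldots,\beta_d)_i)\to1$ by Proposition~\ref{Id-Est}, so their product converges to $1$ in probability, coordinate by coordinate, which is exactly the claim; the comonotonic statement follows identically. The main obstacle is purely the quantitative control of Weissman's estimator under a moving level $\alpha(n)$ — verifying that the quoted second-order result from \cite{segers} applies verbatim and that the paper's third bandwidth hypothesis is precisely the one that kills both the stochastic and the bias term — everything else is continuous-mapping bookkeeping. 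One should also check the trivial point that $\theta>1$ guarantees all exponents $1/(\theta-1)$, $\gamma/(1-\gamma)$ are finite and positive so that the limiting constants are well defined and nonzero.
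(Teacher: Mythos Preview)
Your proposal is correct and follows essentially the same approach as the paper: cite Weissman consistency under $\mathrm{SOC}_{-\beta}(b)$ with the stated bandwidth conditions to get $\widehat{\mathrm{VaR}}_\alpha(X_1)/\mathrm{VaR}_\alpha(X_1)\to1$ in probability, then combine with Proposition~\ref{prop:c} and Proposition~\ref{Frechet-IndAsy} (resp.\ Lemma~\ref{Frechet-Como}) via continuous mapping to conclude. The paper's proof is much terser---it simply invokes (4.18) p.~120 in \cite{segers} for the quantile ratio and then says the rest ``follows from Propositions~\ref{Frechet-IndAsy} and~\ref{prop:c}''---whereas you have spelled out the factorisation via Proposition~\ref{Id-Est} and the role of the third bandwidth condition explicitly, but the logical skeleton is identical.
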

\begin{proof}
With the  $\rm{SOC}_{-\beta}(b)$ hypothesis and the choice of $k$, we get by using (4.18) p.120 in \cite{segers} that
$$\frac{\widehat{\mbox{VaR}}_\alpha(X_1)}{\mbox{VaR}_\alpha(X_1)} \longrightarrow 1 \ \mbox{in probability as} \ n \rightarrow \infty\/.$$
Then the announced results follow from Propositions \ref{Frechet-IndAsy} and \ref{prop:c}. 
\end{proof}
 \subsection{Numerical illustration}
 The attraction domain of Fréchet contains the usual distributions of Pareto, Student, Burr and Cauchy. In order to illustrate the convergence of the proposed estimators, we study numerically, the cases of Pareto, Burr and Student distributions.\\
 
 In the independence case which is a special case of asymptotic independence, the functions $l^\alpha_{X_i,X_j}$ defined in (\ref{DefFL}) have the following expression
 $$l^\alpha_{X_i,X_j}(x_i,x_j)=\alpha \left(\bar{F}_{X_j}(x_j)\mathbb{E}\left[\left(X_i-x_i\right)_+\right]\right)-(1-\alpha)\left(F_{X_j}(x_j)\mathbb{E}\left[\left(X_i-x_i\right)_-\right]\right)\/.$$
 In the comonotonic case we have 
 \begin{align*}
 l^\alpha_{X_i,X_j}(x_i,x_j)&=\alpha \left(\bar{F}_{X_j}(x_j)(\mu_{i,j}-x_i)_++\mathbb{E}\left[\left(X_i-\max(x_i,\mu_{i,j})\right)_+\right]\right)\\&~~-(1-\alpha)\left(F_{X_j}(x_j)(x_i-\mu_{i,j})_++\mathbb{E}\left[\left(X_i-\min(x_i,\mu_{i,j})\right)_-\right]\right)\/,
 \end{align*}
 where $\mu_{i,j}=F^{\longleftarrow}_{X_i}(F_{X_j}(x_j))$.\\
 From these expressions, the exact value of the extreme multivariate expectile is obtained using numerical optimization, and we can confront it to the estimated values. The choice of  $k(n)$ is function of the distributions parameters, and it is done in our simulations using graphical illustrations. We present the estimators for different values of $k(n)$ that belong to the common convergence range of the estimators of tail equivalence coefficients, in order to verify the stability of the expectile estimator?s convergence.
 \subsubsection{Pareto distributions}
 We consider a bivariate Pareto model $X_i\sim Pa(a,b_i),~i\in\{1,2\}$. Both distributions have the same scale parameter $a$, so they have equivalent tails with equivalence parameter  
 $$c_2=\underset{x\rightarrow+\infty}{\lim}\frac{\bar{F}_{X_2}(x)}{\bar{F}_{X_1}(x)}=\underset{x\rightarrow+\infty}{\lim}\frac{\left(\frac{b_2}{b_2+x}\right)^a}{\left(\frac{b_1}{b_1+x}\right)^a}=\left(\frac{b_2}{b_1}\right)^a\/.$$   
In what follows, we consider two models for which the exact values of the $L_1$-expectiles are computable. In the first model, the $X_i$'s are independent. In the second one, the $X_i$'s are comonotonic and for Pareto distributions $\mu_{i,j}=\frac{b_i}{b_j}x_j$. In the simulations below, we have taken the same $k=k(n)$ to get $\hat{\gamma}$ and $\hat{\mathbf{e}}_{\alpha}(\mathbf{X})$. For $X_i\sim Pa(2,5\cdot(i+1))_{i=1,2}$, and $n=100000$, 
 Figure \ref{Estim-Frechet1} illustrates the convergence of estimator $\hat{c}_2$. On the left, the shaded area indicates suitable values of $k(n)$ for $n=100000$. The boxplots are obtained for different values of $n$ and a fixed $k\in k(n)$, the data size is 1000.
 \begin{figure}[H]
 	\center
 	\includegraphics[width=8cm]{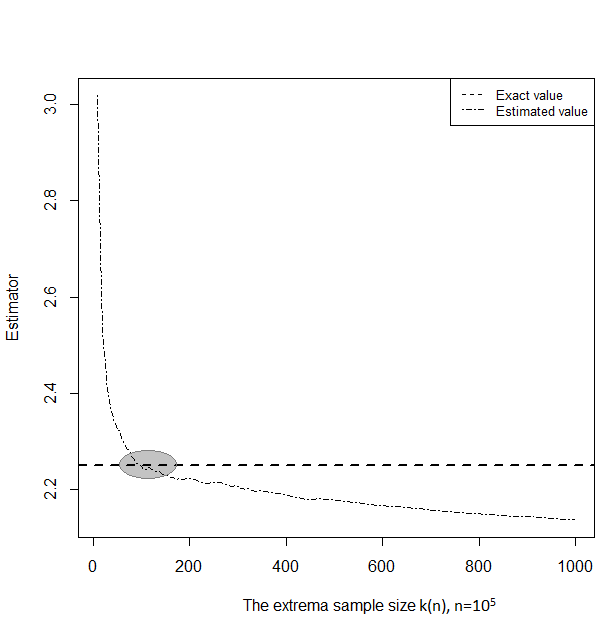} 
 	\includegraphics[width=8cm]{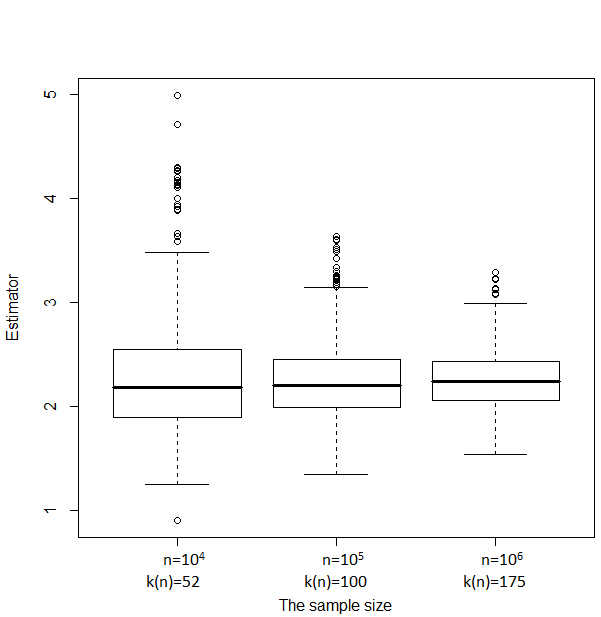}
 	\caption{Convergence of $\hat{c}_2$. $X_i\sim Pa(2,5\cdot(i+1))_{i=1,2}$.}
 	\label{Estim-Frechet1}
 \end{figure}  
Figure \ref{Conv-Exp-F} presents the obtained results for different $k(n)$ values in the independence case where $n=100000$. A multivariate illustration in dimension 4 is given in Figure \ref{Conv-Exp-Fd4}. The comonotonic case is illustrated in Figure \ref{Conv-Exp-F-C}. The simulations parameters are $a=2$, $b_1=10$ and $b_2=15$.
 \begin{figure}[H]
 	\center
 	\includegraphics[width=8cm]{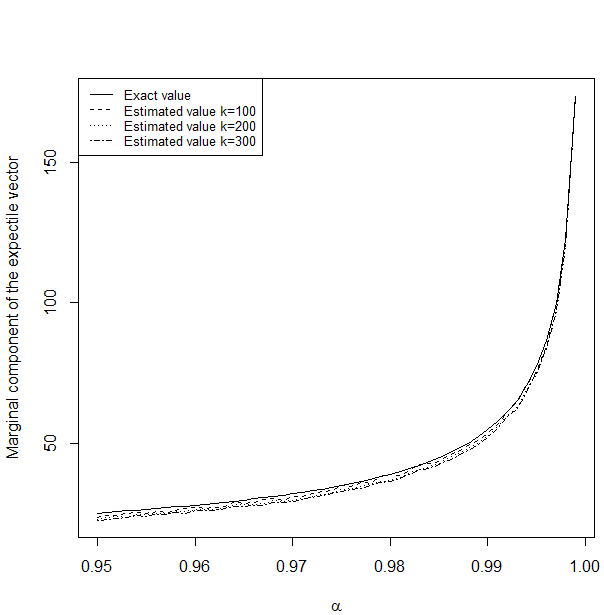} 
 	\includegraphics[width=8cm]{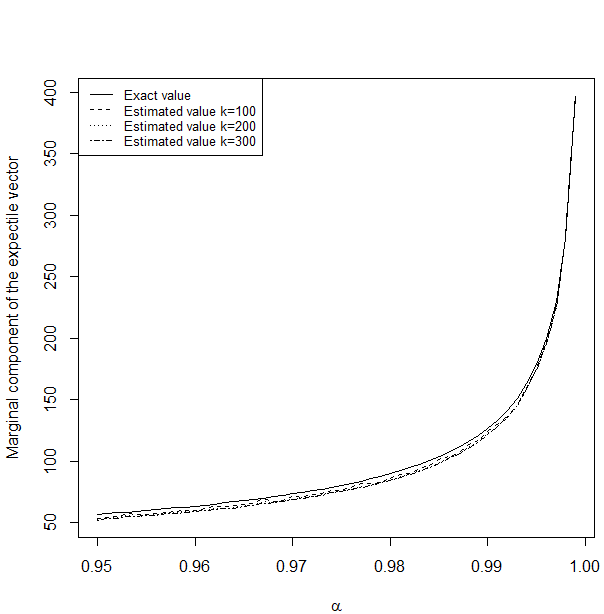}
 	\caption{Convergence of $\hat{\mathbf{e}}^{\bot}_{\alpha}(\mathbf{X})$ (asymptotic independence case). On the left, the first coordinate of ${\mathbf{e}}_{\alpha}(\mathbf{X})$ and $\hat{\mathbf{e}}^{\bot}_{\alpha}(\mathbf{X})$ for various values of $k=k(n)$ are plotted. The right figure concerns the second coordinate.}
 	\label{Conv-Exp-F}
 \end{figure}
\begin{figure}[H]
	\center
	\includegraphics[width=8cm]{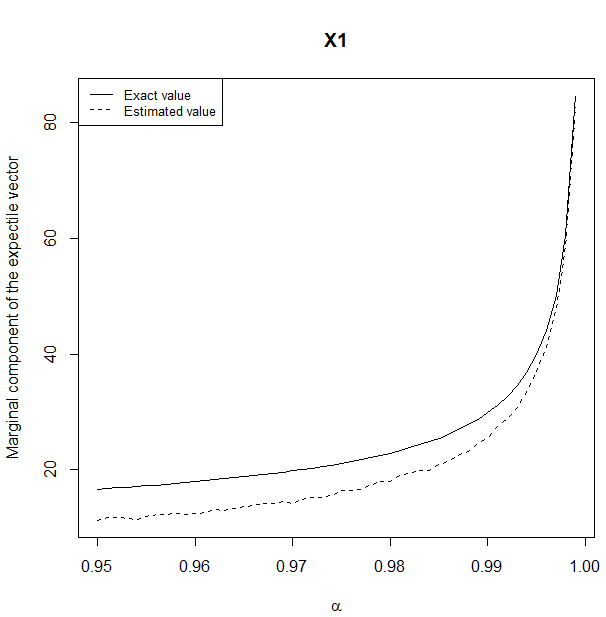} 
	\includegraphics[width=8cm]{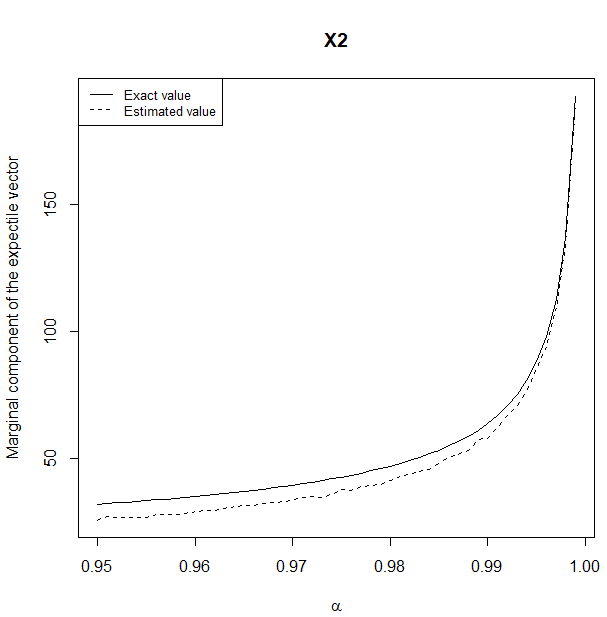} 
	\includegraphics[width=8cm]{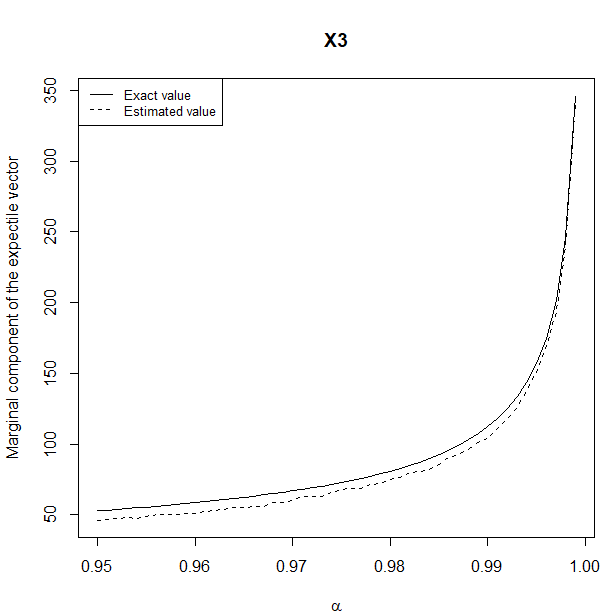} 
	\includegraphics[width=8cm]{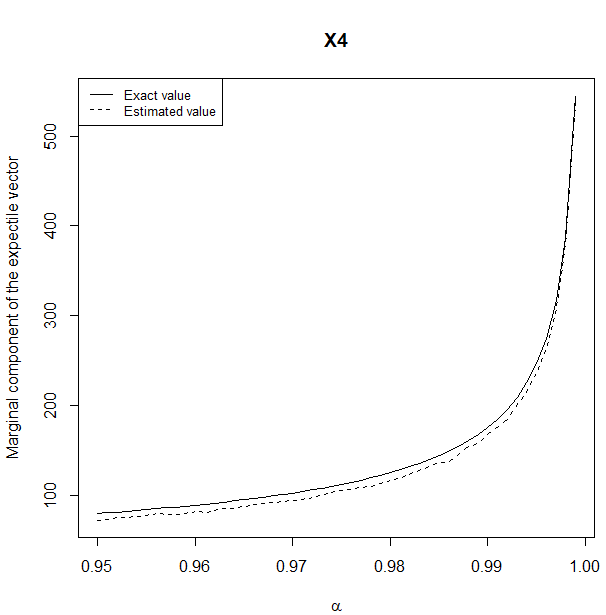} 
	\caption{Convergence of $\hat{\mathbf{e}}^{\bot}_{\alpha}(\mathbf{X})$ (asymptotic independence case). The coordinates of ${\mathbf{e}}_{\alpha}(\mathbf{X})$ and $\hat{\mathbf{e}}^{\bot}_{\alpha}(\mathbf{X})$ in dimension $d=4$, $n=100000$ and $k(n)=100$.}
	\label{Conv-Exp-Fd4}
\end{figure}
  \begin{figure}[H]
  	\center
  	\includegraphics[width=8cm]{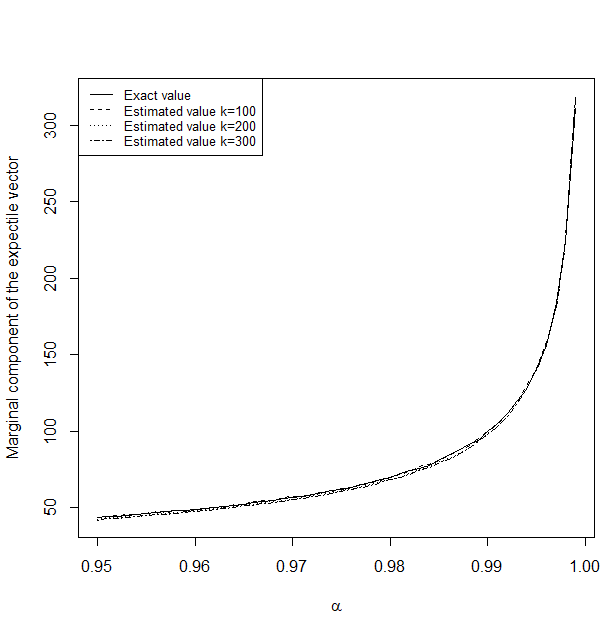} 
  	\includegraphics[width=8cm]{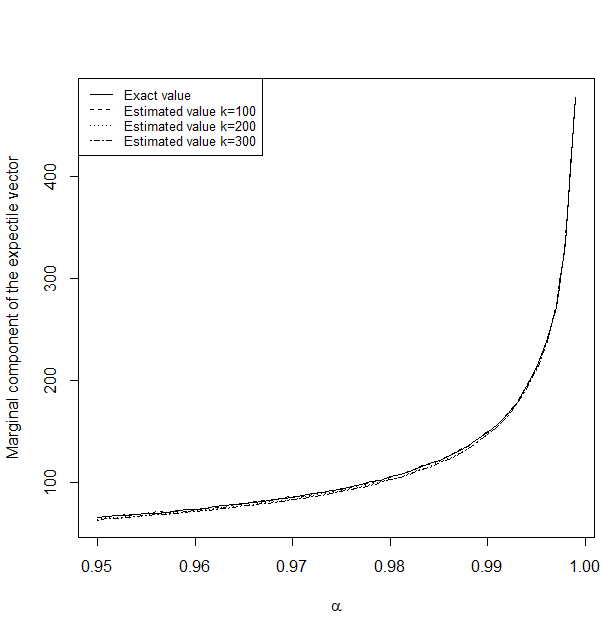}
  	\caption{Convergence of $\hat{\mathbf{e}}^{+}_{\alpha}(\mathbf{X})$ (comonotonic case). On the left, the first coordinate of ${\mathbf{e}}_{\alpha}(\mathbf{X})$ and $\hat{\mathbf{e}}^{\bot}_{\alpha}(\mathbf{X})$ for various values of $k=k(n)$ are plotted. The right figure concerns the second coordinate.}
  	\label{Conv-Exp-F-C}
  \end{figure}
\subsubsection{Burr distributions}
 We consider a multivariate Burr model $X_i\sim Burr(a,b_i,\tau),~i\in\{1,\dots,d\}$. In this case, the tails are equivalents with equivalence parameter  
$$c_i=\underset{x\rightarrow+\infty}{\lim}\frac{\bar{F}_{X_i}(x)}{\bar{F}_{X_1}(x)}=\underset{x\rightarrow+\infty}{\lim}\frac{\left(\frac{b_i}{b_i+x^{\tau}}\right)^a}{\left(\frac{b_1}{b_1+x^{\tau}}\right)^a}=\left(\frac{b_i}{b_1}\right)^a, \forall i\in\{2,\ldots,d\}\/,$$   
and $\bar{F}_{X_i}\in RV_{a*\tau}(+\infty)$ for all $i$ in $\{1,\ldots,d\}$. In the Burr comonotonic case $\mu_{i,j}=\left(\frac{b_i}{b_j}\right)^{\frac{1}{\tau}}x_j$. The model is asymptotically equivalent to the Pareto one, but the margins are different, which helps to test the pertinence of the estimation processes compared to the theoretical resultants. Figures \ref{BurrI} and \ref{BurrC} present the obtained results for different $k(n)$ values in the independence and the comonotonic cases respectively. The simulations parameters are $a=4$, $b_1=10$, $b_2=15$, $\tau=0.75$ and $n=10000$.
\begin{figure}[H]
	\center
	\includegraphics[width=8cm]{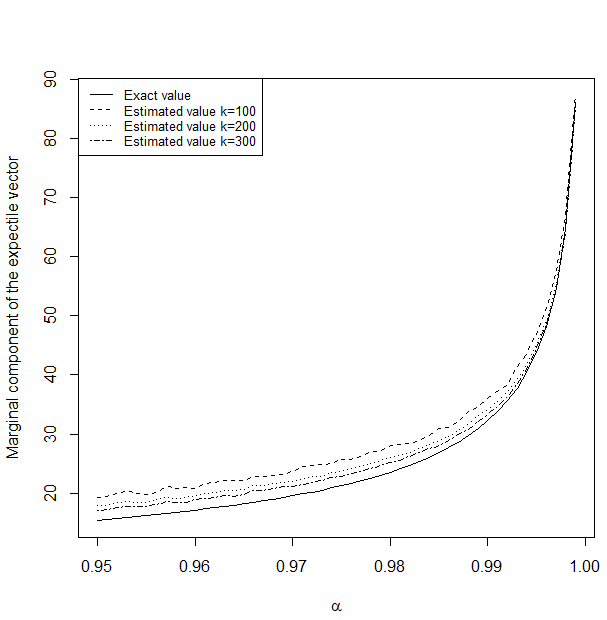} 
	\includegraphics[width=8cm]{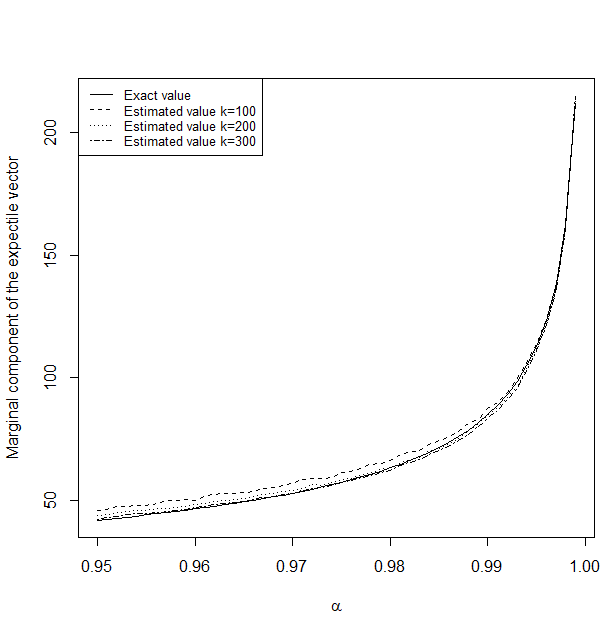}
	\caption{Convergence of $\hat{\mathbf{e}}^{\bot}_{\alpha}(\mathbf{X})$ (asymptotic independence case). On the left, the first coordinate of ${\mathbf{e}}_{\alpha}(\mathbf{X})$ and $\hat{\mathbf{e}}^{\bot}_{\alpha}(\mathbf{X})$ for various values of $k=k(n)$ are plotted. The right figure concerns the second coordinate.}
	\label{BurrI}
\end{figure}
\begin{figure}[H]
	\center
	\includegraphics[width=8cm]{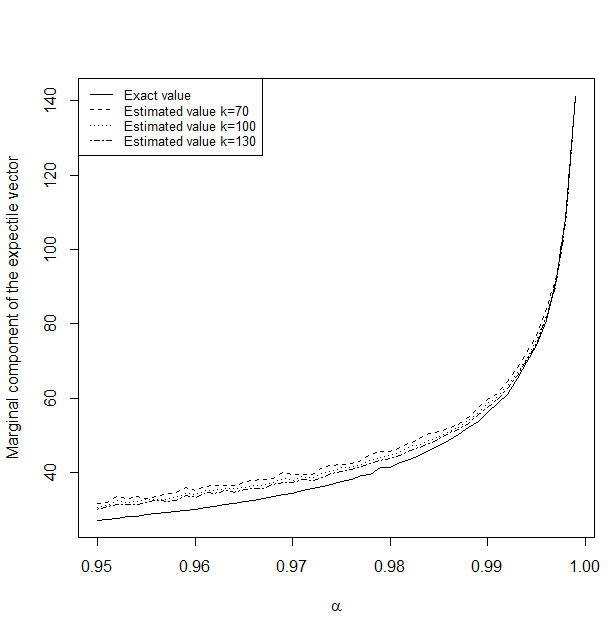} 
	\includegraphics[width=8cm]{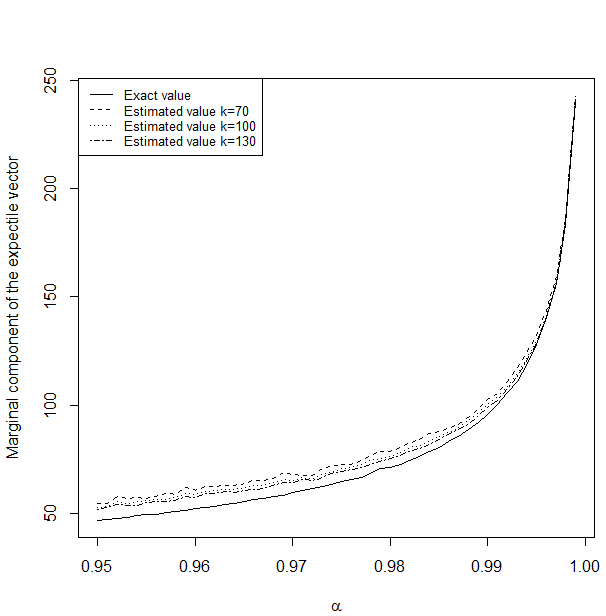}
	\caption{Convergence of $\hat{\mathbf{e}}^{+}_{\alpha}(\mathbf{X})$ (comonotonic case). On the left, the first coordinate of ${\mathbf{e}}_{\alpha}(\mathbf{X})$ and $\hat{\mathbf{e}}^{\bot}_{\alpha}(\mathbf{X})$ for various values of $k=k(n)$ are plotted. The right figure concerns the second coordinate.}
	\label{BurrC}
\end{figure}
\subsubsection{Student distributions}
In order to illustrate the convergence of the two estimators for other distributions nature, we close this subsection by a Student model. We consider a risk vector $(X_1,\ldots,X_d)$ such that $X_i=a_iT_i$ for all $i\in\{1,\ldots,d\}$ and $(T_i)_i$ are identically distributed following a t-distribution of parameter $z$. Using L'Hôpital's rule, the tails are equivalent since
 $$\underset{x\rightarrow+\infty}{\lim}\frac{\bar{F}_{X_i}(x)}{\bar{F}_{X_1}(x)}=\underset{x\rightarrow+\infty}{\lim}\frac{\bar{F}_{T_1}(x/a_i)}{\bar{F}_{X_1}(x)}=\underset{x\rightarrow+\infty}{\lim}\frac{a_1f_{T_1}(x/a_i)}{a_if_{T_1}(x/a_1)}=\left(\frac{a_i}{a_1}\right)^z=c_i, \forall i\in\{2,\ldots,d\}\/.$$ 
 The marginal tails are all $RV_{-(z+1)}(+\infty)$. For the Student comonotonic model $\mu_{i,j}=\frac{a_i}{a_j}x_j$. \\  
 For the numerical illustration the parameters are $a_i=2^{i-1}$ for $i=1,\ldots,d$ and $z=2$. In the case of the independence $(T_i)_i$ are supposed independent, and they are comonotonic in the comonotonic case.\\ 
Figures \ref{TI} and \ref{TC} present an illustration of the obtained results in the two cases.
\begin{figure}[H]
	\center
	\includegraphics[width=8cm]{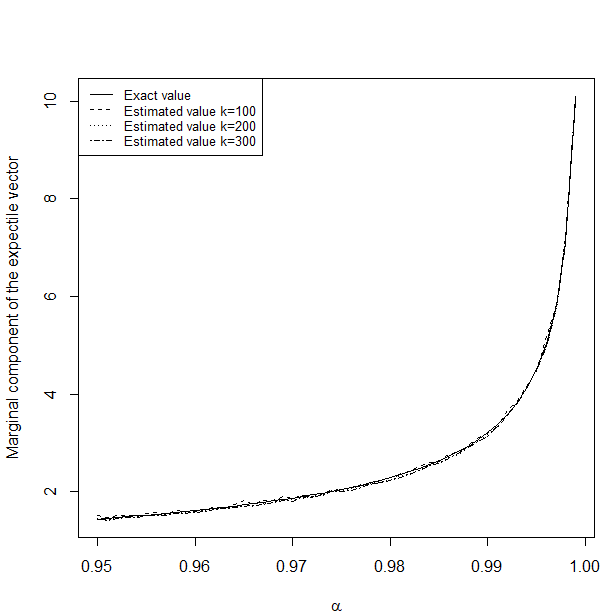} 
	\includegraphics[width=8cm]{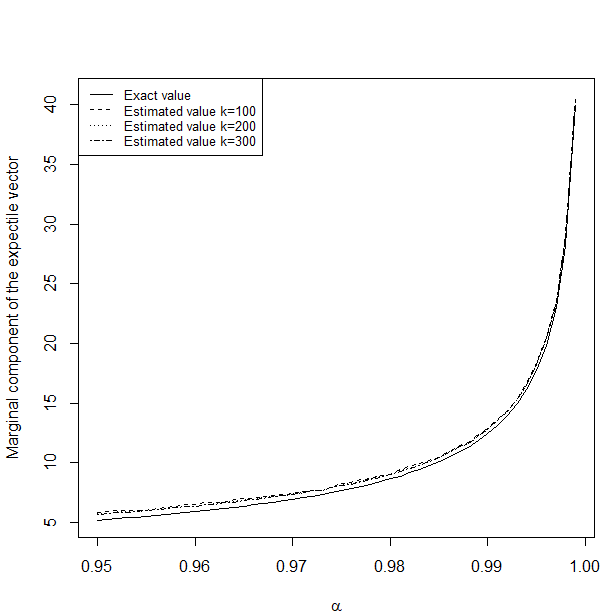}
	\caption{Convergence of $\hat{\mathbf{e}}^{\bot}_{\alpha}(\mathbf{X})$ (asymptotic independence case). On the left, the first coordinate of ${\mathbf{e}}_{\alpha}(\mathbf{X})$ and $\hat{\mathbf{e}}^{\bot}_{\alpha}(\mathbf{X})$ for various values of $k=k(n)$ are plotted. The right figure concerns the second coordinate.}
	\label{TI}
\end{figure}
\begin{figure}[H]
	\center
	\includegraphics[width=8cm]{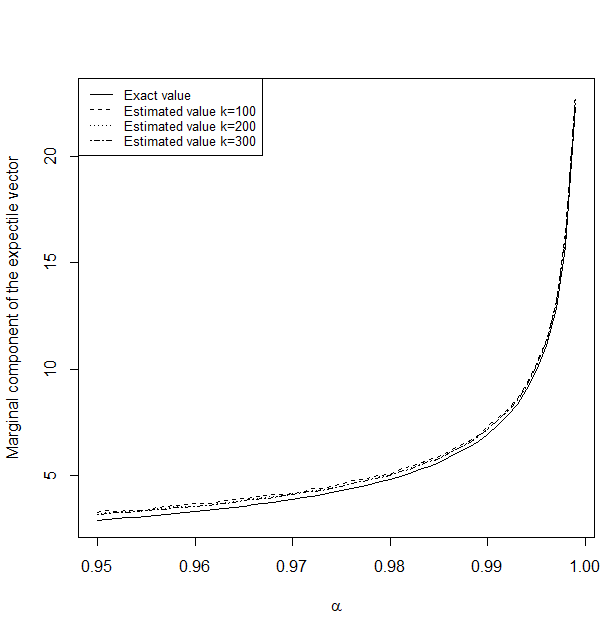} 
	\includegraphics[width=8cm]{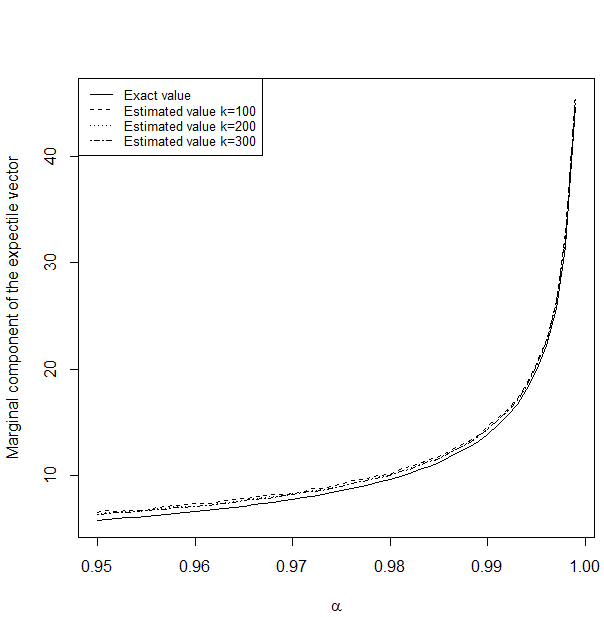}
	\caption{Convergence of $\hat{\mathbf{e}}^{+}_{\alpha}(\mathbf{X})$ (comonotonic case). On the left, the first coordinate of ${\mathbf{e}}_{\alpha}(\mathbf{X})$ and $\hat{\mathbf{e}}^{\bot}_{\alpha}(\mathbf{X})$ for various values of $k=k(n)$ are plotted. The right figure concerns the second coordinate.}
	\label{TC}
\end{figure}

For the three Fréchet models, Pareto, Weibull, and Student, the different illustrations show that the convergence is better for values of $\alpha$ close to 1. This is natural since we are approaching the extreme level and therefore the estimate value converges towards the theoretical value. The convergence seems to be stable for values of $k(n)$ in the convergence zone. When $\alpha$ moves away from 1, the difference with the theoretical value is apparently a function of the marginal risk level represented by the coefficients of tails' equivalence $c_i$.

 \section*{Conclusion}
We have studied properties of extreme multivariate expectiles in a regular variations framework. We have seen that the asymptotic behavior of expectiles vectors strongly depends on the marginal tails behavior and on the nature of the asymptotic dependence. The main conclusion of this analysis, is that the equivalence of marginal tails leads to equivalence of the extreme expectile components.\\
 
 The statistical estimation of the integrals of the tail dependence functions would allow to construct estimators of the extreme expectile vectors. This paper's estimations are limited to the cases of asymptotic independence and comonotonicity which do not require the estimation of the tail dependence functions. The asymptotic normality of the estimators proposed in the last section of this paper requires a careful technical analysis which is not considered in this paper.\\
   
 A natural perspective of this work, is to study the asymptotic behavior of $\Sigma$-expectiles in the case of equivalent tails of marginal distributions in the domains of attraction of Weibull and Gumbel. The Gumbel's domain contains most of the usual distributions, especially the family of Weibull tail-distributions, which makes the analysis of its case an interesting task.
 
\section*{Acknowledgment}
 We are grateful to the editor and to the reviewers. We thank them for their valuable comments which helped to improve the quality of our manuscript.
	\bibliographystyle{plain}
	\bibliography{bibAsyMultExp} 
	\newpage
\appendix
\section{Proofs}
\subsection{Lemma \ref{Lemme+inf}}\label{pro1}
 \begin{proof}
 	We give some details on  the proof for the first item, the second one may be obtained in the same way.\\
 	Under H1 and H2, for all $i\in\{1,\ldots,d\}$ $$\bar{F}_{X_i}\in \mbox{RV}_{-\theta}(+\infty)\/,$$ 
 	there exists for all $i$ a positive measurable function $L_i\in \mbox{RV}_{0}(+\infty)$ such that
 	$$\bar{F}_{X_i}(x)=x^{-\theta}L_i(x), \forall x>0\/,$$
 	then for all $(i,j)\in\{1,\ldots,d\}^2$ and all $t,s>0$
 	\begin{equation}\label{Lemme+infEq1}
 	\frac{s\bar{F}_{X_i}(s)}{t\bar{F}_{X_j}(t)}=\left(\frac{s}{t}\right)^{-\theta+1}\frac{L_i(s)}{L_j(t)}=\left(\frac{s}{t}\right)^{-\theta+1}\frac{L_i(s)}{L_i(t)}\frac{L_i(t)}{L_j(t)}\/, 
 	\end{equation}
 	and under H2
 	\begin{equation}\label{Lemme+infEq2}
 	\underset{x\uparrow+\infty}{\lim}\frac{L_i(x)}{L_j(x)}=\frac{c_i}{c_j}\/.
 	\end{equation}
 	Using  Karamata's representation for slowly varying functions (Theorem \ref{Karamata-Rep}), there exist a constant $c>0$, a positive measurable function $c(\cdot)$ with $\underset{x\uparrow+\infty}{\lim}c(x)=c>0$, 
 	such that  $\forall \epsilon>0$,  $\exists ~t_0$ such that $\forall~t>t_0$ 
 	
 	$$\frac{L_i(s)}{L_i(t)}\leq \left(\frac{s}{t}\right)^{\epsilon}\frac{c(s)}{c(t)}\/.$$
 	Taking $0<\epsilon<\theta-1$, we conclude
 	$$\underset{t\uparrow+\infty}{\lim}\frac{s\bar{F}_{X_i}(s)}{t\bar{F}_{X_j}(t)}=0, ~\forall (i,j)\in\{1,\ldots,d\}^2\/.$$	 
 \end{proof}
\subsection{Proposition \ref{PropLimits1V1}}\label{pro2}
\begin{proof} We start by proving that
	 $$\underset{\alpha\longrightarrow 1}{\overline{\lim}}\frac{1-\alpha}{\bar{F}_{X_i}(\mathbf{e}^i_\alpha(\mathbf{X}))}<+\infty, \forall i\in\{2,\ldots,d\}\/.
	$$ 
	Using H2, it is sufficient to show that
	$$
	\underset{\alpha\longrightarrow 1}{\overline{\lim}}\frac{1-\alpha}{\bar{F}_{X_1}(\mathbf{e}^1_\alpha(\mathbf{X}))}<+\infty\/. 
	$$ 
	Assume that $
	\underset{\alpha\longrightarrow 1}{\overline{\lim}}\frac{1-\alpha}{\bar{F}_{X_1}(\mathbf{e}^1_\alpha(\mathbf{X}))}=+\infty$, we shall prove that, in that case, (\ref{S-functionsL2}) cannot be satisfied. Taking if necessary a subsequence ($\alpha_n\rightarrow 1$), we may assume that $
	\underset{\alpha\longrightarrow 1}{\lim}\frac{1-\alpha}{\bar{F}_{X_1}(\mathbf{e}^1_\alpha(\mathbf{X}))}=+\infty$.\\
	We have 
	\begin{align*}
	\frac{l^\alpha_{X_1}(x_1)}{(1-\alpha)x_1}&=\left(\frac{\alpha\mathbb{E}[(X_1-x_1)_+]-(1-\alpha)\mathbb{E}[(X_1-x_1)_-]}{(1-\alpha)x_1}\right)\\
	&=\left((2\alpha-1)\frac{\mathbb{E}[(X_1-x_1)_+]}{(1-\alpha)x_1}-\frac{(1-\alpha)(x_1-\mathbb{E}[X_1])}{(1-\alpha)x_1}\right)\\
	&=\left((2\alpha-1)\frac{\mathbb{E}[(X_1-x_1)_+]}{x_1\bar{F}_{X_1}(x_1)}\frac{\bar{F}_{X_1}(x_1)}{1-\alpha}-1+\frac{\mathbb{E}[X_1]}{x_1}\right)\overset{\alpha\uparrow 1}{\longrightarrow}-1\ \mbox{recall (\ref{EqFrechetSL}) }.
	\end{align*} 
	Furthermore, for all $i\in\{2,\ldots,d\}$
	\begin{align*}
	\frac{l^\alpha_{X_i,X_1}(x_i,x_1)}{(1-\alpha)x_1}&=\left(\frac{\alpha\mathbb{E}[(X_i-x_i)_+ 1\!\!1_{\{X_1>x_1\}}]-(1-\alpha)\mathbb{E}[(X_i-x_i)_- 1\!\!1_{\{X_1<x_1\}}]}{(1-\alpha)x_1}\right)\\
	&=\left(\frac{\alpha\mathbb{E}[(X_i-x_i)_+ 1\!\!1_{\{X_1>x_1\}}]}{(1-\alpha)x_1} - \frac{\mathbb{E}[(X_i-x_i)_+ 1\!\!1_{\{X_1<x_1\}}]}{x_1} -\frac{x_i\mathbb{P}(X_1<x_1)}{x_1}+ \frac{\mathbb{E}[X_i1\!\!1_{\{X_1<x_1\}}]}{x_1}\right)\/.
	\end{align*}
	On one side,
	\begin{align*}
	\frac{\mathbb{E}[(X_i-x_i)_+ 1\!\!1_{\{X_1>x_1\}}]}{(1-\alpha)x_1}&\leq\frac{\mathbb{E}[(X_i-x_i)_+ ]}{(1-\alpha)x_1}=\frac{\bar{F}_{X_1}(x_1)}{1-\alpha}\frac{\mathbb{E}[(X_i-x_i)_+]}{x_i\bar{F}_{X_i}(x_i)}\frac{x_i\bar{F}_{X_i}(x_i)}{x_1\bar{F}_{X_1}(x_1)}, \forall i\in\{2,\ldots,d\}\/.
	\end{align*}
	So that, Lemma \ref{Lemme+inf} implies
	\begin{equation}\label{P1Eq1}
	\underset{\alpha\longrightarrow 1}{\lim}\frac{\mathbb{E}[(X_i-x_i)_+ 1\!\!1_{\{X_1>x_1\}}]}{(1-\alpha)x_1}=0, \forall k\in J_C^1\cup J_\infty^1\/.
	\end{equation}
	
	Let  $i\in J_0^1$, taking if necessary a subsequence, we may assume that $\frac{x_i}{x_1}\rightarrow 0$.
	\begin{equation}\label{3.2.1-Eq1}
	\frac{\mathbb{E}[(X_i-x_i)_+ 1\!\!1_{\{X_1>x_1\}}]}{(1-\alpha)x_1}=\frac{\displaystyle\int_{x_i}^{x_1}\mathbb{P}\left(X_i>t,X_1>x_1\right)dt}{(1-\alpha)x_1}+\frac{\displaystyle\int_{x_1}^{+\infty}\mathbb{P}\left(X_i>t,X_1>x_1\right)dt}{(1-\alpha)x_1}\/.
	\end{equation}
	Now,
	$$\frac{\displaystyle\int_{x_i}^{x_1}\mathbb{P}\left(X_i>t,X_1>x_1\right)dt}{(1-\alpha)x_1}\leq\frac{\displaystyle\int_{x_i}^{x_1}\mathbb{P}\left(X_1>x_1\right)dt}{(1-\alpha)x_1}=\frac{\bar{F}_{X_1}(x_1)}{1-\alpha}\left(1-\frac{x_i}{x_1}\right)\/.$$
	Thus,
	\begin{equation}\label{3.2.1-Eq2}
	\underset{\alpha\longrightarrow 1}{\lim}\frac{\displaystyle\int_{x_i}^{x_1}\mathbb{P}\left(X_i>t,X_1>x_1\right)dt}{(1-\alpha)x_1}=0\/.
	\end{equation}
	Consider the second term of (\ref{3.2.1-Eq1})
	$$\frac{\displaystyle\int_{x_1}^{+\infty}\mathbb{P}\left(X_i>t,X_1>x_1\right)dt}{(1-\alpha)x_1}\leq\frac{\displaystyle\int_{x_1}^{+\infty}\mathbb{P}\left(X_i>t\right)dt}{(1-\alpha)x_1}\/,$$
	Karamata's Theorem (Theorem \ref{Th-Karamata}) gives
	$$\frac{\displaystyle\int_{x_1}^{+\infty}\mathbb{P}\left(X_k>t\right)dt}{(1-\alpha)x_1}\overset{\alpha\uparrow1}{\sim}\frac{1}{\theta-1}\frac{\bar{F}_{X_k}(x_1)}{1-\alpha}\/,$$
	which leads to
	\begin{equation}\label{3.2.1-Eq3}
	\underset{\alpha\longrightarrow 1}{\lim}\frac{\displaystyle\int_{x_1}^{+\infty}\mathbb{P}\left(X_i>t,X_1>x_1\right)dt}{(1-\alpha)x_1}=0\/.
	\end{equation}
	Finally, we get
	\begin{equation}\label{P1Eq3}
	\underset{\alpha\longrightarrow 1}{\lim}\frac{\mathbb{E}[(X_i-x_i)_+ 1\!\!1_{\{X_1>x_1\}}]}{(1-\alpha)x_1}=0, \forall i\in J_0^1\/.
	\end{equation} 
	We have  shown that
	$$\underset{\alpha\longrightarrow 1}{\lim}\frac{\mathbb{E}[(X_k-x_k)_+ 1\!\!1_{\{X_1>x_1\}}]}{(1-\alpha)x_1}=0, \forall k\in\{2,\ldots,d\}\/,$$
	so, the first equation of optimality system (\ref{S-functionsL2}) implies that  
	$$-\lim_{\alpha\rightarrow 1}\left(\sum_{k\in J_0^1\setminus J_\infty^1}^{}\frac{l^\alpha_{X_k,X_1}(x_k,x_1)}{(1-\alpha)x_1}+\sum_{k\in J_C^1}^{}\frac{l^\alpha_{X_k,X_1}(x_k,x_1)}{(1-\alpha)x_1}+\sum_{k\in J_\infty^1}^{}\frac{l^\alpha_{X_k,X_1}(x_k,x_1)}{(1-\alpha)x_1}\right)=\underset{\alpha\longrightarrow 1}{\lim}\sum_{k=2}^{d}\frac{x_k}{x_1}=-1\/,$$
	this is absurd since the $x_k$'s are non negative, and consequently  
	$$\underset{\alpha\longrightarrow1}{\varlimsup}\frac{1-\alpha}{\bar{F}_{X_1}(x_1)}<+\infty\/.$$
Now, we prove that the components of the extreme multivariate expectile satisfy also
$$
0<\underset{\alpha\longrightarrow 1}{\underline{\lim}}\frac{1-\alpha}{\bar{F}_{X_i}(\mathbf{e}^i_\alpha(\mathbf{X}))}, \forall i\in\{2,\ldots,d\}\/.
$$ 	
 Using H2, it is sufficient to show that 	
	$$
	0<\underset{\alpha\longrightarrow 1}{\underline{\lim}}\frac{1-\alpha}{\bar{F}_{X_1}(\mathbf{e}^1_\alpha(\mathbf{X}))}\/. 
	$$ 
	Let us assume that $
	\underset{\alpha\longrightarrow 1}{\underline{\lim}}\frac{1-\alpha}{\bar{F}_{X_1}(\mathbf{e}^1_\alpha(\mathbf{X}))}=0$, we shall see that in that case, (\ref{S-functionsL2}) cannot be satisfied. Taking if necessary a convergent subsequence, we may assume that $
	\underset{\alpha\longrightarrow 1}{\lim}\frac{1-\alpha}{\bar{F}_{X_1}(\mathbf{e}^1_\alpha(\mathbf{X}))}=0$. In this case,  \begin{equation*}\label{Case3-Eq1}
	\frac{l^\alpha_{X_1}(x_1)}{x_1\bar{F}_{X_1}(x_1)}=\left((2\alpha-1)\frac{\mathbb{E}[(X_1-x_1)_+]}{x_1\bar{F}_{X_1}(x_1)}-\frac{1-\alpha}{\bar{F}_1(x_1)}(1-\frac{\mathbb{E}[X_1]}{x_1})\right)\overset{\alpha\uparrow 1}{\longrightarrow}\frac{1}{\theta-1}>0\/.
	\end{equation*} 
	On another side, let  $i\in J^1_\infty$, taking if necessary a subsequence, we may assume that $x_1=o(x_i)$. Lemma \ref{Lemme+inf} and Proposition \ref{PropLimits1V1} give:
	$$	\frac{1-\alpha}{\bar{F}_1(x_1)}\frac{x_i}{x_1}= \frac{1-\alpha}{\overline{F}_{X_i}(x_i)} \cdot \frac{x_i\overline{F}_{X_i}(x_i)}{x_1\overline{F}_{X_1}(x_1)} \longrightarrow 0 \ \mbox{as} \ \alpha \rightarrow 1\/.$$
	Moreover,
	$$
	\frac{\mathbb{E}[(X_i-x_i)_+ 1\!\!1_{\{X_1>x_1\}}]}{x_1\bar{F}_{X_1}(x_1)}\leq\frac{\mathbb{E}[(X_i-x_i)_+ ]}{x_1\bar{F}_{X_1}(x_1)}=\frac{\mathbb{E}[(X_i-x_i)_+]}{x_i\bar{F}_{X_i}(x_i)}\frac{x_i\bar{F}_{X_i}(x_i)}{x_1\bar{F}_{X_1}(x_1)}\/, 
	$$
	We deduce
	\begin{align*}
	\frac{l^\alpha_{X_i,X_1}(x_i,x_1)}{x_1\bar{F}_{X_1}(x_1)}&=\left(\frac{\alpha\mathbb{E}[(X_i-x_i)_+ 1\!\!1_{\{X_1>x_1\}}]-(1-\alpha)\mathbb{E}[(X_i-x_i)_- 1\!\!1_{\{X_1<x_1\}}]}{x_1\bar{F}_{X_1}(x_1)}\right)\\
	&\longrightarrow 0, ~\forall i\in J_\infty^1\/.
	\end{align*}
	
	Going through the limit ($\alpha\longrightarrow1$) in the first equation of the optimality system (\ref{S-functionsL2}) divided by $x_1\bar{F}_{X_1}(x_1)$, leads to
	\begin{equation*}\label{Case3-Eq2}
	\underset{\alpha\longrightarrow 1}{\lim}\sum_{k\in J^1_0\cup J^1_C \setminus J_\infty^1}^{}\frac{l^\alpha_{X_k,X_1}(x_k,x_1)}{x_1\bar{F}_{X_1}(x_1)}=-\frac{1}{\theta-1}\/,
	\end{equation*}
	which is absurd because 
	\begin{align*}
	\underset{\alpha\longrightarrow 1}{\lim}\sum_{k\in J^1_0\cup J^1_C\setminus J_\infty^1}^{}\frac{l^\alpha_{X_k,X_1}(x_k,x_1)}{x_1\bar{F}_{X_1}(x_1)}&=\underset{\alpha\longrightarrow 1}{\lim}\sum_{k\in J^1_0\cup J^1_C\setminus J_\infty^1}^{}\left(\frac{\alpha\mathbb{E}[(X_k-x_k)_+ 1\!\!1_{\{X_1>x_1\}}]-(1-\alpha)\mathbb{E}[(X_k-x_k)_- 1\!\!1_{\{X_1<x_1\}}]}{x_1\bar{F}_{X_1}(x_1)}\right)\\
	&=\underset{\alpha\longrightarrow 1}{\lim}\sum_{k\in J^1_0\cup J^1_C\setminus J_\infty^1}^{}\left(\frac{\mathbb{E}[(X_k-x_k)_+ 1\!\!1_{\{X_1>x_1\}}]}{x_1\bar{F}_{X_1}(x_1)}-\frac{1-\alpha}{\bar{F}_{X_1}(x_1)}\frac{x_k}{x_1}\right)\\
	&=\underset{\alpha\longrightarrow 1}{\lim}\sum_{k\in J^1_0\cup J^1_C\setminus J_\infty^1}^{}\left(\frac{\mathbb{E}[(X_k-x_k)_+ 1\!\!1_{\{X_1>x_1\}}]}{x_1\bar{F}_{X_1}(x_1)}\right)\geq0\/.
	\end{align*}
	We can finally conclude that
	$$\underset{\alpha\longrightarrow1}{\varliminf}\frac{1-\alpha}{\bar{F}_{X_1}(x_1)}>0\/.$$
\end{proof}
\subsection{Lemma \ref{L1-Dom-Frechet}}\label{pro4}

 \begin{proof}[Proof]
	Taking if necessary a convergent subsequence $(\alpha_n)_{n\in\mathbb{N}}$ $\alpha_n\longrightarrow 1$, we consider that the limits  $\underset{\alpha\rightarrow1}{\lim}\frac{x_i}{x_1}=\beta_i$ exist.\\ 
	Using the notation $J_{C}=\{i\in\{2,\ldots,d\}|0<\beta_i<+\infty\}$, for all $i\in J_{C}$ 
	$$\underset{\alpha\uparrow1}{\lim}\frac{\mathbb{E}[(X_i-x_i)_+1\!\!1_{\{X_1>x_1\}}]}{x_1\bar{F}_{X_1}(x_1)}=\underset{\alpha\uparrow1}{\lim}\intevariable{\beta_i}{+\infty}{\frac{\mathbb{P}(X_i>tx_1,X_1>x_1)}{\mathbb{P}(X_1>x_1)}dt}\/,$$
	because $$\frac{\mathbb{P}(X_i>tx_1,X_1>x_1)}{\mathbb{P}(X_1>x_1)}=\mathbb{P}(X_i>tx_1|X_1>x_1)\leq1\/.$$  
	On another hand,
	$$\frac{\mathbb{P}(X_i>tx_1,X_1>x_1)}{\mathbb{P}(X_1>x_1)}\leq \min\{1, \frac{\mathbb{P}(X_i>tx_1)}{\mathbb{P}(X_1>x_1)}\}\/,$$
	and $$\frac{\mathbb{P}(X_i>tx_1)}{\mathbb{P}(X_1>x_1)}=\frac{\bar{F}_{X_i}(tx_1)}{\bar{F}_{X_1}(tx_1)}\frac{\bar{F}_{X_1}(tx_1)}{\bar{F}_{X_1}(x_1)}\/,$$ 
	then, using  $\underset{\alpha\uparrow1}{\lim}\frac{\bar{F}_{X_i}(tx_1)}{\bar{F}_{X_1}(tx_1)}=0$ and the Potter's bounds (\ref{Potter}) associated to $\bar{F}_{X_1}$, we deduce that for all $\epsilon_1>0$  and $0<\epsilon_2<1$, there exists $x^0_1(\epsilon_1,\epsilon_2)$ such that for $x_1\geq\frac{x^0_1(\epsilon_1,\epsilon_2)}{\min\{1,\beta_i\}}$ 
	$$\frac{\mathbb{P}(X_i>tx_1)}{\mathbb{P}(X_1>x_1)}\leq\epsilon_1(1+\epsilon_2)\max\left(t^{-\theta+\epsilon_2},t^{-\theta-\epsilon_2}\right)\/.$$
	And the application of the Dominated Convergence Theorem leads to 
	$$\underset{\alpha\uparrow1}{\lim}\frac{\mathbb{E}[(X_i-x_i)_+1\!\!1_{\{X_1>x_1\}}]}{x_1\bar{F}_{X_1}(x_1)}=\intevariable{\beta_i}{+\infty}{\underset{\alpha\uparrow1}{\lim}\frac{\mathbb{P}(X_i>tx_1,X_1>x_1)}{\mathbb{P}(X_1>x_1)}dt}=0,~\forall~i\in J_{C}\/. $$
	We denote by $J_{\infty}$ the set $J_{\infty}=\{i\in\{2,\ldots,d\}|\beta_i=+\infty\}$. So, for all  $i\in J_{\infty}$,  $x_1=o(x_i)$ and
	\begin{align*}
	\frac{\mathbb{E}[(X_i-x_i)_+1\!\!1_{\{X_1>x_1\}}]}{x_1\bar{F}_{X_1}(x_1)}&=\intevariable{x_i}{+\infty}{\frac{\mathbb{P}(X_i>t,X_1>x_1)}{x_1\mathbb{P}(X_1>x_1)}dt}\\
	&\leq\intevariable{x_1}{+\infty}{\frac{\mathbb{P}(X_i>t,X_1>x_1)}{x_1\mathbb{P}(X_1>x_1)}dt}=\intevariable{1}{+\infty}{\frac{\mathbb{P}(X_i>tx_1,X_1>x_1)}{\mathbb{P}(X_1>x_1)}dt}\/.
	\end{align*}
	In the same way as in the previous case, and using the Potter's bounds, we show that
	$$\underset{\alpha\uparrow1}{\lim}\intevariable{1}{+\infty}{\frac{\mathbb{P}(X_i>tx_1,X_1>x_1)}{\mathbb{P}(X_1>x_1)}dt}=\intevariable{1}{+\infty}{\underset{\alpha\uparrow1}{\lim}\frac{\mathbb{P}(X_i>tx_1,X_1>x_1)}{\mathbb{P}(X_1>x_1)}dt}=0\/,$$
	from which we deduce that
	$$\underset{\alpha\uparrow1}{\lim}\frac{\mathbb{E}[(X_i-x_i)_+1\!\!1_{\{X_1>x_1\}}]}{x_1\bar{F}_{X_1}(x_1)}=0,~\forall~i\in J_{\infty}\/.$$
	Let $J_{0}$ be the set $J_{0}=\{i\in\{2,\ldots,d\}|\beta_i=0\}$. For all $i\in J_{0}$ we have $x_i=o(x_1)$, then 
	$$\underset{\alpha\uparrow1}{\lim}\frac{\mathbb{E}[(X_i-x_i)_+1\!\!1_{\{X_1>x_1\}}]}{x_1\bar{F}_{X_1}(x_1)}=\underset{\alpha\uparrow1}{\lim}\intevariable{x_i}{+\infty}{\frac{\mathbb{P}(X_i>t,X_1>x_1)}{x_1\mathbb{P}(X_1>x_1)}dt}=\underset{\alpha\uparrow1}{\lim}\intevariable{0}{+\infty}{\frac{\mathbb{P}(X_i>tx_1,X_1>x_1)}{\mathbb{P}(X_1>x_1)}dt}\/,$$
	because $\underset{\alpha\uparrow1}{\lim}\frac{x_i}{x_1}=0$ and $\frac{\mathbb{P}(X_i>tx_1,X_1>x_1)}{\mathbb{P}(X_1>x_1)}\leq1$. \\In addition, for all $\epsilon>0$, we have $$\underset{\alpha\uparrow1}{\lim}\intevariable{\epsilon}{+\infty}{\frac{\mathbb{P}(X_i>tx_1,X_1>x_1)}{\mathbb{P}(X_1>x_1)}dt}=0\/,$$
	because the Dominated Convergence Theorem is applicable using the Potter's bounds, and 
	$\underset{\alpha\uparrow1}{\lim}\frac{\mathbb{P}(X_i>tx_1,X_1>x_1)}{\mathbb{P}(X_1>x_1)}=0$ for all $t>0$ since $c_i=0$.\\
	Let $\kappa>0$, $\forall \epsilon>0$ $\exists \alpha_0$ such that $\forall \alpha>\alpha_0$
	$$\intevariable{\epsilon}{+\infty}{\frac{\mathbb{P}(X_i>tx_1,X_1>x_1)}{\mathbb{P}(X_1>x_1)}dt}<\kappa\/,$$
	then
	$$\intevariable{0}{+\infty}{\frac{\mathbb{P}(X_i>tx_1,X_1>x_1)}{\mathbb{P}(X_1>x_1)}dt}=\intevariable{0}{\epsilon}{\frac{\mathbb{P}(X_i>tx_1,X_1>x_1)}{\mathbb{P}(X_1>x_1)}dt}+\intevariable{\epsilon}{+\infty}{\frac{\mathbb{P}(X_i>tx_1,X_1>x_1)}{\mathbb{P}(X_1>x_1)}dt}<\epsilon+\kappa\/,$$
	we deduce that 
	$$\underset{\alpha\uparrow1}{\lim}\intevariable{0}{+\infty}{\frac{\mathbb{P}(X_i>tx_1,X_1>x_1)}{\mathbb{P}(X_1>x_1)}dt}=0\/,$$
	so, \begin{equation*}
	\underset{\alpha\uparrow1}{\lim}\frac{\mathbb{E}[(X_i-x_i)_+1\!\!1_{\{X_1>x_1\}}]}{x_1\bar{F}_{X_1}(x_1)}=0,~~\forall i\in J_{0}\/.
	\end{equation*}
	We have therefore shown that
	\begin{equation*}
	\underset{\alpha\uparrow1}{\lim}\frac{\mathbb{E}[(X_i-x_i)_+1\!\!1_{\{X_1>x_1\}}]}{x_1\bar{F}_{X_1}(x_1)}=0,~~\forall i\in \{2,\ldots,d\}\/.
	\end{equation*} 
\end{proof}	 
\subsection{Lemma \ref{L2-Dom-Frechet}}\label{pro5}
 \begin{proof}
	We suppose that $
	\underset{\alpha\uparrow1}{\varlimsup}\frac{1-\alpha}{\bar{F}_{X_1}(\mathbf{e}_\alpha^1(\mathbf{X}))}=+\infty
	$. Taking if necessary a convergent subsequence  $(\alpha_n)_{n\in\mathbb{N}}$ with  $\alpha_n\longrightarrow 1$, we consider that the limits $\underset{\alpha\rightarrow1}{\lim}\frac{x_i}{x_1}=\beta_i$ exist and that $\underset{\alpha\uparrow1}{\lim}\frac{1-\alpha}{\bar{F}_{X_1}(x_1)}=+\infty$.\\ 
	We use the notations $J_{C}=\{i\in\{2,\ldots,d\}|~0<\beta_i<+\infty\}$,  $J_{0}=\{i\in\{2,\ldots,d\}|~\beta_i=0\}$, and $J_{\infty}=\{i\in\{2,\ldots,d\}|~\beta_i=+\infty\}$.\\
	The first equation of the optimality system (\ref{eq1}) divided by $x_1\bar{F}_{X_1}(x_1)$ can be written
	\begin{align*}
	\frac{(2\alpha-1)\mathbb{E}[(X_1-x_1)_+]}{x_1\bar{F}_{X_1}(x_1)}+\sum_{i=2}^{d}\frac{\alpha\mathbb{E}[(X_i-x_i)_+1\!\!1_{\{X_1>x_1\}}]}{x_1\bar{F}_{X_1}(x_1)}&=\frac{1-\alpha}{\bar{F}_{X_1}(x_1)}\left(\frac{x_1-\mathbb{E}[X_1]}{x_1}\right)\\
	&~~+\frac{1-\alpha}{\bar{F}_{X_1}(x_1)}\left(\sum_{i=2}^{d}\frac{\mathbb{E}[(X_i-x_i)_-1\!\!1_{\{X_1<x_1\}}]}{x_1}\right)\/.
	\end{align*}
	By (\ref{EqFrechetSL}) $$\underset{\alpha\uparrow1}{\lim}\frac{(2\alpha-1)\mathbb{E}[(X_1-x_1)_+]}{x_1\bar{F}_{X_1}(x_1)}=\frac{1}{\theta-1}\/,$$
	and by Lemma \ref{L1-Dom-Frechet}
	$$\underset{\alpha\uparrow1}{\lim}\sum_{i=2}^{d}\frac{\alpha\mathbb{E}[(X_i-x_i)_+1\!\!1_{\{X_1>x_1\}}]}{x_1\bar{F}_{X_1}(x_1)}=0\/,$$
	so, going through the limit ($\alpha\rightarrow1$) in the previous equation leads to
	$$\underset{\alpha\uparrow1}{\lim}\frac{1-\alpha}{\bar{F}_{X_1}(x_1)}\left(\frac{x_1-\mathbb{E}[X_1]}{x_1}+\sum_{i=2}^{d}\frac{\mathbb{E}[(X_i-x_i)_-1\!\!1_{\{X_1<x_1\}}]}{x_1}\right)=\frac{1}{\theta-1}\/,$$
	nevertheless, 
	$$\underset{\alpha\uparrow1}{\lim}\frac{1-\alpha}{\bar{F}_{X_1}(x_1)}\left(\frac{x_1-\mathbb{E}[X_1]}{x_1}+\sum_{i=2}^{d}\frac{\mathbb{E}[(X_i-x_i)_-1\!\!1_{\{X_1<x_1\}}]}{x_1}\right)=\underset{\alpha\uparrow1}{\lim}\frac{1-\alpha}{\bar{F}_{X_1}(x_1)}\left(1+\sum_{i=2}^{d}\frac{x_i}{x_1}\right)=+\infty\/.$$
	From this contradiction, we deduce that the case 
	$\underset{\alpha\uparrow1}{\lim}\frac{1-\alpha}{\bar{F}_{X_1}(x_1)}=+\infty$ is absurd.\\

	Now, we suppose that $
	\underset{\alpha\uparrow1}{\varliminf}\frac{1-\alpha}{\bar{F}_{X_1}(\mathbf{e}_\alpha^1(\mathbf{X}))}=0
	$. Taking if necessary a subsequence $(\alpha_n)_{n\in\mathbb{N}}$ with $\alpha_n\longrightarrow 1$, we consider that the limits  $\underset{\alpha\rightarrow1}{\lim}\frac{x_i}{x_1}=\beta_i$ exist and that $\underset{\alpha\uparrow1}{\lim}\frac{1-\alpha}{\bar{F}_{X_1}(x_1)}=0$.\\ 
	We denote $J_{C}=\{i\in\{2,\ldots,d\}|~0<\beta_i<+\infty\}$,  $J_{0}=\{i\in\{2,\ldots,d\}|~\beta_i=0\}$, and $J_{\infty}=\{i\in\{2,\ldots,d\}|~\beta_i=+\infty\}$.\\
	Going through the limit $(\alpha\rightarrow 1)$ in the first equation of System \ref{eq1} divided by $x_1\bar{F}_{X_1}(x_1)$, and using Lemma \ref{L1-Dom-Frechet} and Equation \ref{EqFrechetSL}, leads to 
	\begin{equation}\label{DAsy-eq2}
	\underset{\alpha\uparrow1}{\lim}\left(\frac{1-\alpha}{\bar{F}_{X_1}(x_1)}\sum_{i\in J_\infty}^{}\frac{x_i}{x_1}\right)=\frac{1}{\theta-1}\/.
	\end{equation}
	If $J_\infty\neq\emptyset$, so, there exists $i\in\{2,\ldots,d\}$ such that $i\in J_\infty$. In this case, 
	$$\underset{\alpha\uparrow1}{\lim}\frac{\mathbb{E}[(X_i-x_i)_+]}{x_1\bar{F}_{X_1}(x_1)}=\underset{\alpha\uparrow1}{\lim}\frac{\bar{F}_{X_i}(x_i)}{\bar{F}_{X_1}(x_i)}\frac{x_i\bar{F}_{X_1}(x_i)}{x_1\bar{F}_{X_1}(x_1)}\frac{\mathbb{E}[(X_i-x_i)_+]}{x_i\bar{F}_{X_i}(x_i)}=0\/,$$
	because $\underset{\alpha\uparrow1}{\lim}\frac{\mathbb{E}[(X_i-x_i)_+]}{x_i\bar{F}_{X_i}(x_i)}=\frac{1}{\theta-1}$, $\underset{\alpha\uparrow1}{\lim}\frac{\bar{F}_{X_i}(x_i)}{\bar{F}_{X_1}(x_i)}=0$, and by Lemma \ref{Lemme+inf} ($X_i=X_j=X_1$) $\underset{\alpha\uparrow1}{\lim}\frac{x_i\bar{F}_{X_1}(x_i)}{x_1\bar{F}_{X_1}(x_1)}=0$.\\ 
	On another hand, for all $j\in\{1,\ldots,d\}\backslash\{i\}$, 
	$$\frac{\mathbb{E}[(X_j-x_j)_+1\!\!1_{\{X_i>x_i\}}]}{x_1\bar{F}_{X_1}(x_1)}=\intevariable{x_j}{+\infty}{\frac{\mathbb{P}(X_j>t,X_i>x_i)}{x_1\mathbb{P}(X_1>x_1)}dt}\/,$$
	so if $j\in J_{C}$, then
	$$\underset{\alpha\uparrow1}{\lim}\frac{\mathbb{E}[(X_j-x_j)_+1\!\!1_{\{X_i>x_i\}}]}{x_1\bar{F}_{X_1}(x_1)}=\underset{\alpha\uparrow1}{\lim}\intevariable{\frac{x_j}{x_1}}{+\infty}{\frac{\mathbb{P}(X_j>tx_1,X_i>x_i)}{\mathbb{P}(X_1>x_1)}dt}=\underset{\alpha\uparrow1}{\lim}\intevariable{\beta_j}{+\infty}{\frac{\mathbb{P}(X_j>tx_1,X_i>x_i)}{\mathbb{P}(X_1>x_1)}dt}\/,$$
	because $\frac{\mathbb{P}(X_j>tx_1,X_i>x_i)}{\mathbb{P}(X_1>x_1)}\leq \frac{\mathbb{P}(X_j>tx_1)}{\mathbb{P}(X_1>x_1)}$ and $\underset{\alpha\uparrow1}{\lim}\frac{\mathbb{P}(X_j>tx_1)}{\mathbb{P}(X_1>x_1)}=0$ for all $t>0$.
	We apply the dominated convergence Theorem, using Potter's bounds associated to $\bar{F}_{X_1}$, to get
	$$\underset{\alpha\uparrow1}{\lim}\frac{\mathbb{E}[(X_j-x_j)_+1\!\!1_{\{X_i>x_i\}}]}{x_1\bar{F}_{X_1}(x_1)}=\intevariable{\beta_j}{+\infty}{\underset{\alpha\uparrow1}{\lim}\frac{\mathbb{P}(X_j>tx_1,X_i>x_i)}{\mathbb{P}(X_1>x_1)}dt}\/,$$
	and since
	$$\frac{\mathbb{P}(X_j>tx_1,X_i>x_i)}{\mathbb{P}(X_1>x_1)}\leq\frac{\mathbb{P}(X_i>x_i)}{\mathbb{P}(X_1>x_1)}=\frac{x_1}{x_i}\frac{x_i\bar{F}_{X_1}(x_i)}{x_1\bar{F}_{X_1}(x_1)}\frac{\bar{F}_{X_i}(x_i)}{\bar{F}_{X_1}(x_i)}\/,$$
	so, by Lemma \ref{Lemme+inf}
	$$\underset{\alpha\uparrow1}{\lim}\frac{\mathbb{P}(X_j>tx_1,X_i>x_i)}{\mathbb{P}(X_1>x_1)}=\underset{\alpha\uparrow1}{\lim}\frac{\mathbb{P}(X_i>x_i)}{\mathbb{P}(X_1>x_1)}=0\/,$$
	we deduce finally that 
	$$\underset{\alpha\uparrow1}{\lim}\frac{\mathbb{E}[(X_j-x_j)_+1\!\!1_{\{X_i>x_i\}}]}{x_1\bar{F}_{X_1}(x_1)}=0, ~\forall j\in J_{C}\/.$$
	If $j\in J_{\infty}\backslash\{i\}$, then  
	\begin{align*}
	\frac{\mathbb{E}[(X_j-x_j)_+1\!\!1_{\{X_i>x_i\}}]}{x_1\bar{F}_{X_1}(x_1)}&=\intevariable{x_j}{+\infty}{\frac{\mathbb{P}(X_j>t,X_i>x_i)}{x_1\mathbb{P}(X_1>x_1)}dt}\\
	&\leq \intevariable{x_1}{+\infty}{\frac{\mathbb{P}(X_j>t,X_i>x_i)}{x_1\mathbb{P}(X_1>x_1)}dt}=\intevariable{1}{+\infty}{\frac{\mathbb{P}(X_j>tx_1,X_i>x_i)}{\mathbb{P}(X_1>x_1)}dt}\/,
	\end{align*}
	we show in the same way as in the previous case that 
	$$\underset{\alpha\uparrow1}{\lim}\intevariable{1}{+\infty}{\frac{\mathbb{P}(X_j>tx_1,X_i>x_i)}{\mathbb{P}(X_1>x_1)}dt}=\intevariable{1}{+\infty}{\underset{\alpha\uparrow1}{\lim}\frac{\mathbb{P}(X_j>tx_1,X_i>x_i)}{\mathbb{P}(X_1>x_1)}dt}=0\/,$$ 
	then  $$\underset{\alpha\uparrow1}{\lim}\frac{\mathbb{E}[(X_j-x_j)_+1\!\!1_{\{X_i>x_i\}}]}{x_1\bar{F}_{X_1}(x_1)}=0, ~~\forall j\in J_{\infty}\backslash\{i\}\/.$$
	If $j\in J_{0}$, then 
	\begin{align*}
	\frac{\mathbb{E}[(X_j-x_j)_+1\!\!1_{\{X_i>x_i\}}]}{x_1\bar{F}_{X_1}(x_1)}&=\intevariable{x_j}{x_1}{\frac{\mathbb{P}(X_j>tx_1,X_i>x_i)}{x_1\mathbb{P}(X_1>x_1)}dt}+\intevariable{x_1}{\infty}{\frac{\mathbb{P}(X_j>tx_1,X_i>x_i)}{x_1\mathbb{P}(X_1>x_1)}dt}\\
	&\leq\frac{x_1-x_j}{x_1}\frac{\bar{F}_{X_i}(x_i)}{\bar{F}_{X_1}(x_1)}+\intevariable{1}{+\infty}{\frac{\mathbb{P}(X_j>tx_1,X_i>x_i)}{\mathbb{P}(X_1>x_1)}dt}\/,
	\end{align*}
	since $\underset{\alpha\uparrow1}{\lim}\intevariable{1}{+\infty}{\frac{\mathbb{P}(X_j>tx_1,X_i>x_i)}{\mathbb{P}(X_1>x_1)}dt}=0$, so, by Lemma \ref{Lemme+inf}, we get
	$$\underset{\alpha\uparrow1}{\lim}\frac{x_1-x_j}{x_1}\frac{\bar{F}_{X_i}(x_i)}{\bar{F}_{X_1}(x_1)}=\underset{\alpha\uparrow1}{\lim}\frac{x_1-x_j}{x_1}\frac{\bar{F}_{X_i}(x_i)}{\bar{F}_{X_1}(x_i)}\frac{x_i\bar{F}_{X_1}(x_i)}{x_1\bar{F}_{X_1}(x_1)}\frac{x_1}{x_i}=0\/,$$
	we obtain from that
	$$\underset{\alpha\uparrow1}{\lim}\frac{\mathbb{E}[(X_j-x_j)_+1\!\!1_{\{X_i>x_i\}}]}{x_1\bar{F}_{X_1}(x_1)}=0,~~\forall j\in J_{0}\/, $$
	and consequently 
	$$\underset{\alpha\uparrow1}{\lim}\frac{\mathbb{E}[(X_j-x_j)_+1\!\!1_{\{X_i>x_i\}}]}{x_1\bar{F}_{X_1}(x_1)}=0,~~\forall j\in \{1,\ldots,d\}\backslash\{i\}\/.$$
	The $i^{\mbox{th}}$ equation of System \ref{eq1} divided by $x_1\bar{F}_{X_1}(x_1)$ can be written in the form 
	\begin{align*}
	\frac{(2\alpha-1)\mathbb{E}[(X_i-x_i)_+]}{x_1\bar{F}_{X_1}(x_1)}-\frac{1-\alpha}{\bar{F}_{X_1}(x_1)}\frac{x_i-\mathbb{E}[X_i]}{x_1}&=\sum_{\underset{j\neq i}{j=1}}^{d}\frac{(1-\alpha)\mathbb{E}[(X_j-x_j)_-1\!\!1_{\{X_i<x_i\}}]}{x_1\bar{F}_{X_1}(x_1)}\\&~~-\sum_{\underset{j\neq i}{j=1}}^{d}\frac{\alpha\mathbb{E}[(X_j-x_j)_+1\!\!1_{\{X_i>x_i\}}]}{x_1\bar{F}_{X_1}(x_1)}\/,
	\end{align*}
	going through the limit $(\alpha\rightarrow 1)$ in this equation leads to
	$$-\underset{\alpha\uparrow1}{\lim}\frac{1-\alpha}{\bar{F}_{X_1}(x_1)}\frac{x_i}{x_1}=\underset{\alpha\uparrow1}{\lim}\frac{1-\alpha}{\bar{F}_{X_1}(x_1)}\sum_{\underset{j\neq i}{j=1}}^{d}\frac{x_j}{x_1}\/,$$
	which is possible only if $\underset{\alpha\uparrow1}{\lim}\frac{1-\alpha}{\bar{F}_{X_1}(x_1)}\frac{x_i}{x_1}=0$, and that is contradictory with Equation \ref{DAsy-eq2}.
\end{proof}

\end{document}